\newtheorem{theorem}{Theorem}[section]
\newtheorem{lemma}[theorem]{Lemma}
\newtheorem{corollary}[theorem]{Corollary}
\newtheorem{example}[theorem]{Example}
\newtheorem{conjecture}{Conjecture}
\newcommand{\E}{\mathbb{E}}
\renewcommand{\P}{\mathcal{P}}
\newcommand{\Q}{\mathcal{Q}}
\newcommand{\R}{\mathbb{R}}
\newcommand{\op}{\mathrm{op}}
\newcommand{\dg}{\mathsf{dg}}
\newcommand{\Cov}{\mathrm{Cov}}
\newcommand{\step}{\eta}
\definecolor{purple}{rgb}{0.54, 0.17, 0.89}
\title{Alternating Mirror Descent \\
for Constrained Min-Max Games}
\author{
Andre Wibisono\thanks{Yale University, Department of Computer Science. Email: \texttt{andre.wibisono@yale.edu}}
\and
Molei Tao\thanks{Georgia Institute of Technology, School of Mathematics. Email: \texttt{mtao@gatech.edu}}
\and
Georgios Piliouras\thanks{Singapore University of Technology and Design, Engineering Systems and Design. Email: \texttt{georgios@sutd.edu.sg}}
}
\begin{document}

\maketitle

\begin{abstract}
In this paper we study two-player bilinear zero-sum games with constrained strategy spaces. An instance of natural occurrences of such constraints is when mixed strategies are used, which correspond to a probability simplex constraint. We propose and analyze the alternating mirror descent algorithm, in which each player takes turns to take action following the mirror descent algorithm for constrained optimization.
We interpret alternating mirror descent as an alternating discretization of a skew-gradient flow in the dual space, and use tools from convex optimization and modified energy function to establish an $O(K^{-2/3})$ bound on its average regret after $K$ iterations. 
This quantitatively verifies the algorithm's  better behavior than the simultaneous version of mirror descent algorithm, which is known to diverge and yields an $O(K^{-1/2})$ average regret bound. 
In the special case of an unconstrained setting, our results recover the behavior of alternating gradient descent algorithm for zero-sum games which was studied in \cite{BGP20}.
\end{abstract}

\section{Introduction}

Multi-agent systems are ubiquitous in many applications and have gained increasing importance in practice, from the classical problems in economics and game theory, to the modern applications in machine learning, in particular via the emergence of learning strategies that can be formulated as min-max games, such as the generative adversarial networks (GANs), robust optimization, reinforcement learning, and many others~\cite{goodfellow2014generative,madry2019towards,omidshafiei2019alpha,schrittwieser2020mastering}.
In multi-agent systems, interaction between the agents can exhibit nontrivial global behavior, even when each agent individually follows a simple action such as a greedy, optimization-driven strategy~\cite{cheung2019vortices,chotibut2020route,andrade2021learning}.
This emergence of complex behavior has been recognized as one source of difficulties in understanding and controlling the global behavior of multi-agent game dynamics~\cite{papadimitriou2019game,letcher2021impossibility}.

Even in the basic setting of unconstrained two-player zero-sum game with bilinear payoffs, this emergence of non-trivial behavior already presents some difficulties.
It is now well-known that if each player follows a classical greedy strategy, such as gradient descent, and if they make their actions {\em simultaneously}, then their joint trajectories diverge away from equilibrium and lead to increasing regret; however, the average iterates still converge to the equilibrium and yield a vanishing average regret with decreasing step size~\cite{Cesa06,BaileyEC18,cheung2019vortices}.
This behavior challenges our intuition and is in marked contrast to the case of single-agent optimization, in which greedy strategies are guaranteed to converge.
This leads to variations of the greedy strategy to correct the diverging behavior and help the trajectories to converge to equilibrium, for example via optimistic or extragradient versions of gradient descent~\cite{daskalakis2018training,mertikopoulos2019optimistic}, which can be seen as approximations of the proximal (implicit) gradient descent~\cite{mokhtari2020unified}.

Another variation of the basic greedy strategy is when each player follows gradient descent, but they make their actions in an {\em alternating} fashion (i.e.\ one at a time, rather than simultaneously), as studied in~\cite{BGP20}.
This technique is particularly useful for machine learning applications where the state of the system can be very large with several billions of parameters as one does not need extra memory to store intermediate variables, required both by simultaneous updates as well as extra-gradient methods. 
The resulting {\em alternating gradient descent} algorithm has a markedly different behavior than in the simultaneous case.
As shown in~\cite{BGP20}, the trajectories of alternating gradient descent turn out to {\em cycle} (stay in a bounded orbit), rather than converging to or diverging away from equilibrium.
This behavior mimics the ideal setting of continuous-time dynamics, which is the limit as the step size goes to $0$, in which case the orbit of the two players cycles around the equilibrium and exactly preserves the ``energy function'', which in this case is defined to be the distance to the the equilibrium point, achieving average regret bound $O(T^{-1})$ after (continuous) time $T$~\cite{mertikopoulos2017cycles}.
In discrete time, alternating gradient descent does not exactly conserve the  energy function; instead, it conserves a {\em modified energy function}, which is a perturbation of the true energy function with a correction term which is proportional to the step size.
This implies that alternating gradient descent has a constant regret for any step size, and thus yields an average regret bound of $O(K^{-1})$ after $K$ iterations, which matches the continuous-time behavior; see~\cite{BGP20} for more details.

The results above raise the question of what we can say in the {\em constrained} setting, when each agent can choose an action from a constrained set.
An instance of natural occurrences of such constraints is when mixed strategies are used, which corresponds to a probability simplex constraint.
One popular strategy, inspired by optimization techniques, is that each agent now plays the {\em mirror descent} algorithm for constrained minimization of their own objective function.
In online learning, mirror descent corresponds to the Follow the Regularized Leader (FTRL) algorithm, and has a Hamiltonian structure in continuous time~\cite{bailey2019multi,Hofbauer96}.
In the idealized continuous-time setting,
if both players follow the continuous-time version of mirror descent dynamics, then their trajectories cycle around the equilibrium, and conserve an ``energy function'' in the dual space, which is now defined to be the sum of the dual functions of the regularizers; this leads to a constant regret, and yields an $O(T^{-1})$ average regret bound after (continuous) time $T$~\cite{mertikopoulos2017cycles}.
In this harder constrained setting, one would hope to weave a single thread connecting in an intuitive manner the behavior of continuous dynamics with multiple distinct discretizations.
In discrete time, if the two players follow the simultaneous version of mirror descent, then their trajectories diverge from equilibrium, and yields a  $O(K^{-1/2})$ bound on the average regret~~\cite{BaileyEC18}. 
If both players follow the {\em proximal} (implicit) mirror descent algorithm, then as we show their trajectories converge to the equilibrium, resulting in a simple a $O(K^{-1})$ average regret bound, which matches the continuous-time behavior.
The $O(K^{-1})$ average regret bound is known to extend to general games under Clairvoyant Multiplicative Weights Updates (CMWU), a closely related algorithm to proximal mirror descent~\cite{PSS21}. Although such methods are implicit and encode a fixed point in their definition, CMWU has been shown to be efficiently implementable in an uncoupled online fashion resulting in a convergence rate of $o(\log(K)K^{-1})$~\cite{PSS21}, i.e., only a sublogarithmic overhead in comparison to $O(K^{-1})$. In contrast, for other
variations such as the optimistic or extra-gradient, the best bounds known so far still imply a polylogarithmic overhead~\cite{daskalakis2021near}.

In this paper, we propose and study the  {\em alternating mirror descent} algorithm for two-player zero-sum constrained bilinear game, in which the two players take turns to make their moves following the mirror descent algorithm.
We show that the total regret of the two players can be expressed in terms of a {\em modified energy function}, which generalizes the modified energy function in the unconstrained setting (see Theorem \ref{Thm:RegretEnergy}).
Recall in the unconstrained setting, the modified energy function is exactly preserved~\cite{BGP20}.
In the constrained setting, we prove a bound on the growth of the modified energy under third-order smoothness assumption on the energy function (see Theorem \ref{Thm:AltSmooth}).
This yields an $O(K^{-2/3})$ bound on the average regret of the alternating mirror descent algorithm, which improves on the classical $O(K^{-1/2})$ average regret bound of the simultaneous mirror descent algorithm.

As an analysis tool, we study alternating mirror descent algorithm as an alternating discretization of a {\em skew-gradient flow}. In continuous time, we can study the continuous-time mirror descent dynamics in the dual space; this dynamics corresponds to the skew-gradient flow of the energy function, which conserves the energy and explains the cycling behavior, as we explain in Section~\ref{Sec:SGF}.
In discrete time, since the energy function is convex, the forward discretization of the skew-gradient flow is proved to increase energy; this corresponds to the diverging behavior of the simultaneous mirror descent algorithm; see Section~\ref{Sec:Fwd}.
In contrast, the backward (implicit) discretization of the skew-gradient flow is proved to decrease energy; this corresponds to the converging behavior of the proximal/clairvoyant learning; see Section~\ref{Sec:Bwd}. 
Finally, as in the unconstrained case, the alternating discretization of the flow follows the continuous-time dynamics more closely, leading to improved bounds. 

Another reason that we expect the alternating mirror descent algorithm to work well is a connection to symplectic integrators.
In the special case when the payoff matrix in the bilinear game is the identity matrix,\footnote{When the payoff matrix is arbitrary, the dynamics
has a non-canonical Hamiltonian structure.}
the continuous-time mirror descent dynamics becomes a {\em Hamiltonian flow}, i.e.\ it has a symplectic structure, and the energy function becomes the Hamiltonian function that generates the flow (hence conserved). 
In discrete time, the alternating mirror descent algorithm corresponds to the {\em symplectic Euler} discretization in the dual space, which also conserves the symplectic structure.
A remarkable feature of symplectic integrators is that they exhibit good energy conservation property until exponentially long time \cite{benettin1994hamiltonian,Hairer06}. 
Recently, symplectic integrators as well as connections between algorithms and continuous-time dynamics, have been shown to be highly relevant in the optimization settings for design of accelerated methods~\cite{jordan2018dynamical,wibisono2016variational,tao2020variational,muehlebach2021optimization}. 
Specifically, by preserving specific continuous symmetries of the flow, symplectic integrators stabilize the dynamics and
allow for larger step sizes for fixed accuracy in the long run. This intuition gives a geometric interpretation to ``acceleration''. 
Our novel connection between game theory, alternating mirror descent dynamics and symplectic integrators opens the door for further cross-fertilization between these areas.

\section{Preliminaries}

In this paper we study the two-player zero-sum game with bilinear payoff:
\begin{align*}
\min_{p \in \P} \max_{q \in \Q} \, p^\top A q.
\end{align*}
Here $\P \subseteq \R^m$ and $\Q \subseteq \R^n$ are closed convex sets which represent the domains of the actions that each player can make,
and $A \in \R^{m \times n}$ is an arbitrary payoff matrix.
When the first player plays an action $p \in \P$ and the second player plays $q \in \Q$, the first player receives loss $p^\top A q = \sum_{i=1}^m \sum_{j=1}^n p_i q_j A_{ij}$, which they want to minimize; while the second player receives loss $-p^\top A q$, which they want to minimize (equivalently, the second player wants to maximize their utility $p^\top A q$).

An objective of the game is to reach the {\em Nash equilibrium}, which is a pair $(p^*, q^*) \in \P \times \Q$ that satisfies, for all $(p,q) \in \P \times \Q$:
\begin{align*}
    (p^*)^\top A q \le (p^*)^\top A q^* \le p^\top A q^*.
\end{align*}
By von Neumann's min-max theorem~\cite{Neumann1928}, a Nash equilibrium always exists (but is not necessarily unique) as long as $\P$ and $\Q$ are compact, which is the case here.

One way to measure convergence to equilibrium is via the {\em duality gap} $\dg \colon \P \times \Q \to \R$ given by
\begin{align*}
    \dg(p,q) = \max_{\tilde q \in \Q} \; p^\top A \tilde q - \min_{\tilde p \in \P} \; \tilde{p}^\top A q.
\end{align*}
One can check that $\dg(p,q) \ge 0$ for all $(p,q) \in \P \times \Q$, and moreover, $\dg(p^*,q^*) = 0$ if and only if $(p^*,q^*)$ is a Nash equilibrium.

In the execution of the zero-sum game, each player follows some algorithm in discrete time (or some dynamics in continuous time).
Depending on the precise specification of what algorithms they play and how they play it, the iterates of the actions $(p_k, q_k)$ may not converge to the Nash equilibrium.
However, we expect the {\em average iterates} $(\bar p_K, \bar q_K)$ to converge to the Nash equilibrium, as measured by the duality gap: $\dg(\bar p_K, \bar q_K) \to 0$ as $K \to \infty$.
Furthermore, typically the duality gap of the average iterates are related to the {\em average regret} of the two players.
Therefore, if both players follow a {\em no-regret} algorithm (so the average regret vanishes asymptotically), then their average iterates converge to Nash equilibrium.
Thus, we are interested in bounding the rate at which the average regret of the two players converges to $0$.
See Section~\ref{Sec:Regret} for more details.

\paragraph{Unconstrained case.}
A special case is the {\em unconstrained} setting when $\P = \R^m$ and $\Q = \R^n$.
Here the equilibrium is at $(0,0)$.
A natural strategy is for each player to follow gradient descent to minimize their own loss functions. 
As explained in the introduction, the behaviors of the iterates can vary depending on how the two players take the actions:
If they both follow simultaneous gradient descent, then the iterates diverge; if they follow simultaneous proximal gradient descent, then the iterates converge;
if they follow continuous-time gradient flow, then the iterates cycle;
finally, if they follow alternating gradient descent, then the iterates cycle and conserve a modified energy function.

\paragraph{Constrained case.}
The main question we address in this paper is whether the different behaviors we observe in the unconstrained setting above carry over to the {\em constrained} setting, when $\P \subsetneq \R^m$ or $\Q \subsetneq \R^n$ (or both) are proper subsets of the Euclidean space.
As an example, we may consider $\P = \Delta_m$ and $\Q = \Delta_n$, where $\Delta_m = \{x \in \R^m \colon x_i \ge 0, \sum_{i=1}^m x_i = 1\}$ is the probability simplex in $\R^m$, and similarly $\Delta_n$ is the probability simplex in $\R^n$; these correspond to the setting where
each player chooses a random action among a set of discrete choices.
However, our analyses and results hold for general constraint sets.

In this constrained setting, a natural strategy is for each player to follow a constrained greedy method, instead of using gradient descent.
Following techniques in optimization, we consider the case when each player follows the {\em mirror descent} algorithm~\cite{NY83} to minimize their own loss functions over their constrained domains.
In the online learning literature, mirror descent corresponds to the Follow the Regularized Leader (FTRL) strategy.

\paragraph{Mirror descent set-up.}
We assume that on the domain $\P \subseteq \R^m$ we have a strictly convex regularizer function $\phi \colon \P \to \R$ which is a {\em Legendre function}~\cite{Roc70}, which means $\phi$ is continuously differentiable, $\|\nabla \phi(p)\| \to \infty$ as $p$ approaches the boundary $\partial \P$ of $\P$, and $\nabla \phi$ is a bijection from $\P$ to the range $\nabla \phi(\P) \subseteq \R^m$.
Here the gradient $\nabla \phi(p) \in \R^m$ is the vector of partial derivatives.

Let $D_\phi \colon \P \times \P \to \R$ be the {\em Bregman divergence} of $\phi$, defined by
\begin{align*}
    D_\phi(p, \tilde p) = \phi(p) - \phi(\tilde p) - \langle \nabla \phi(\tilde p), p - \tilde p \rangle
\end{align*}
where $\langle u,v \rangle = u^\top v$ is the $\ell_2$-inner product.
Since $\phi$ is strictly convex, we have that $D_\phi(p,\tilde p) \ge 0$, and $D_\phi(p,\tilde p) = 0$ if and only if $p = \tilde p$.
Note that Bregman divergence is not necessarily symmetric: in general, $D_\phi(p,\tilde p) \neq D_\phi(\tilde p, p)$.
The idea of mirror descent~\cite{NY83} is to use the Bregman divergence $D_\phi(p,\tilde p)$ to measure ``distance'' (albeit asymmetric) from $\tilde p$ to $p$ on $\P$.
For example, in the unconstrained setting when $\P = \R^m$, if we choose $\phi(p) = \frac{1}{2} \|p\|_2^2$ where $\|p\|_2^2 = p^\top p$ is the (squared) $\ell_2$-norm, then the Bregman divergence recovers the standard $\ell_2$-distance: $D_\phi(p,\tilde p) = \frac{1}{2} \|p - \tilde p\|^2_2$, which is symmetric.
On the other hand, if $\P = \Delta_m \subset \R^m$ and we choose $\phi(p) = \sum_{i=1}^m p_i \log p_i$ to be negative entropy, then the Bregman divergence is the relative entropy or the Kullback-Leibler divergence: $D_\phi(p,\tilde p) = \sum_{i=1}^m p_i \log \dfrac{p_i}{\tilde p_i}$, which is not symmetric.

Similarly, we assume that on the domain $\Q \subseteq \R^n$ we have a strictly convex regularizer function $\psi \colon \Q \to \R$ which is a Legendre function, so $\psi$ is continuously differentiable, $\|\nabla \psi(q)\| \to \infty$ as $q \to \partial \Q$, and $\nabla \psi$ is a bijection from $\Q$ to the range $\nabla \psi(\Q) \subseteq \R^n$.
Let $D_\psi \colon \Q \times \Q \to \R$ be the Bregman divergence of $\psi$, defined by
$D_\psi(q, \tilde q) = \psi(q) - \psi(\tilde q) - \langle \nabla \psi(\tilde q), q - \tilde q \rangle$.

\section{Algorithm: Alternating Mirror Descent}

We consider the {\bf Alternating Mirror Descent (AMD)} algorithm where each player follows the mirror descent algorithm to minimize their own loss function, and they perform the updates in an {\em alternating} fashion (one at a time).

Concretely, the AMD algorithm starts from an arbitrary initial position $(p_0, q_0) \in \P \times \Q$.
At each iteration $k \ge 0$, suppose the players are at position $(p_k, q_k) \in \P \times \Q$.
Then in the next iteration $k+1$, they update their position to:
\begin{subequations}\label{Eq:AltMD}
\begin{align}
    p_{k+1} &= \arg\min_{p \in \P} \left\{ p^\top A q_k + \tfrac{1}{\step} D_\phi(p,p_k) \right\} \label{Eq:AltMDa} \\
    q_{k+1} &= \arg\min_{q \in \Q} \left\{ -p_{k+1}^\top A q + \tfrac{1}{\step} D_\psi(q,q_k) \right\} \label{Eq:AltMDb}
\end{align}
\end{subequations}
where $\step > 0$ is step size.
Observe the first player (the $p$ player) makes the update first using the current value $q_k$;
then the second player (the $q$ player) updates using the new value $p_{k+1}$.
We assume both players can solve the update equations~\eqref{Eq:AltMD}, which is the case because they can choose the convex regularizers $\phi, \psi$.
Observe that we can write the optimality condition for AMD~\eqref{Eq:AltMD} as:
\begin{subequations}\label{Eq:AltMD2}
\begin{align}
    \nabla \phi(p_{k+1}) &= \nabla \phi(p_k) - \step A q_k \\
    \nabla \psi(q_{k+1}) &= \nabla \psi(q_k) + \step A^\top p_{k+1}.
\end{align}
\end{subequations}
In Section~\ref{Sec:AlternatingDisc}, we interpret AMD as an alternating discretization of a skew-gradient flow in dual space.

\subsection{Continuous-Time Motivation: Skew-Gradient Flow}

One way to derive the alternating mirror descent algorithm~\eqref{Eq:AltMD} is as a discretization of what we call the {\em skew-gradient flow} in continuous time.
Concretely, let us define the dual functions (convex conjugate) of the convex regularizers
$\phi^* \colon \R^m \to \R$
and $\psi^* \colon \R^n \to \R$
given by:
\begin{align*}
    \phi^*(x) &= \sup_{p \in \P} ~ \langle p,x \rangle - \phi(p) \\
    \psi^*(y) &= \sup_{q \in \Q} ~ \langle q,y \rangle - \psi(q).
\end{align*}
We call $\R^m$ and $\R^n$ (the domains of $\phi^*$ and $\psi^*$) as the {\em dual space}.
Recall that the gradient of the dual function is the inverse map of the original gradient: $\nabla \phi^* = (\nabla \phi)^{-1}$
and $\nabla \psi^* = (\nabla \psi)^{-1}$.

Given $(p_k,q_k) \in \P \times \Q$,
we define the {\bf dual variables} $(x_k,y_k) \in \R^{m+n}$ by:
\begin{align*}
    x_k &= \nabla \phi(p_k) \qquad \Leftrightarrow \qquad p_k = \nabla \phi^*(x_k) \\
    y_k &= \nabla \psi(q_k) \qquad \Leftrightarrow \qquad q_k = \nabla \psi^*(y_k).
\end{align*}

Suppose $(p_k,q_k) \in \P \times \Q$ evolves following AMD~\eqref{Eq:AltMD}, so they satisfy the optimality condition~\eqref{Eq:AltMD2}.
Then the dual variables $(x_k,y_k) \in \R^{m+n}$ evolve by the following algorithm:
\begin{subequations}\label{Eq:Alt}
\begin{align}
    x_{k+1} &= x_k - \step A \nabla \psi^*(y_k) \\
    y_{k+1} &= y_k + \step A^\top \nabla \phi^*(x_{k+1}).
\end{align}
\end{subequations}
We refer to AMD update in the dual space~\eqref{Eq:Alt} above as the {\em alternating method}.

\paragraph{Continuous-time limit.}
Observe that as the step size $\step \to 0$, the update equation~\eqref{Eq:Alt} for AMD in the dual space recovers the following continuous-time dynamics for $(X_t, Y_t) \in \R^{m+n}$:
\begin{subequations}\label{Eq:SkewGrad}
\begin{align}
    \dot X_t &= - A \nabla \psi^*(Y_t) \\
    \dot Y_t &= A^\top \nabla \phi^*(X_t).
\end{align}
\end{subequations}
Here $\dot X_t = \frac{d}{dt} X_t$ is the time derivative.
We call the dynamics~\eqref{Eq:SkewGrad} as the {\em skew-gradient flow}, generated by the {\em energy function}:
\begin{align}\label{Eq:Energy}
    H(x,y) := \phi^*(x) + \psi^*(y).
\end{align}
A particular feature of the skew-gradient flow~\eqref{Eq:SkewGrad} is that it preserves the energy function over time:
\begin{align*}
    H(X_t,Y_t) = H(X_0,Y_0) 
\end{align*}
for all $t \ge 0$; see Section~\ref{Sec:SGF} for further detail.

\paragraph{AMD in dual space as alternating discretization of skew-gradient flow.}

We can view the update equation~\eqref{Eq:AltMD2} as performing an alternating discretization of the skew-gradient flow~\eqref{Eq:SkewGrad}.
Since the energy function is separable, these updates are explicit, and correspond to performing alternating mirror descent in the dual space~\eqref{Eq:Alt}. See Section~\ref{Sec:AlternatingDisc} for more detail.

\subsection{Regret Analysis of Alternating Mirror Descent}
\label{Sec:Regret}

To measure the performance of the algorithm, we can analyze the {\em regret} of each player, which is the gap between their observed losses and the best loss they could have achieved in hindsight, using a fixed (static) action.
We define the regret of the alternating mirror descent algorithm~\eqref{Eq:AltMD} as follows.

\paragraph{Regret definition.}
From iteration $k$ to iteration $k+1$ of the algorithm, there are two half steps that happen: The first player updates from $p_k$ to $p_{k+1}$ (while the second player is at $q_k$), then the second player updates from $q_k$ to $q_{k+1}$ (while the first player is at $p_{k+1}$).
Thus, the first player observes $q_k$ twice: once when the first player is at $p_k$, and once when they are at $p_{k+1}$.
Therefore, we define the regret of the first player after $K$ iterations, with respect to a static action $p \in \P$, to be:
\begin{align}\label{Eq:Regret1}
    R_{1,K}(p) := \sum_{k=0}^{K-1} \Big(\frac{p_k+p_{k+1}}{2}\Big)^\top A q_k - \sum_{k=0}^{K-1} p^\top A q_k.
\end{align}
Similarly, the second player observes $p_{k+1}$ twice: once when the second player is at $q_k$, and once when they are at $q_{k+1}$.
Therefore, we define the regret of the second player after $K$ iterations, with respect to a static action $q \in \Q$, to be:
\begin{align}\label{Eq:Regret2}
    R_{2,K}(q) := \sum_{k=0}^{K-1} p_{k+1}^\top A q - \sum_{k=0}^{K-1}  p_{k+1}^\top A \Big(\frac{q_k+q_{k+1}}{2}\Big).
\end{align}
Note that in the unconstrained case, this recovers the regret definition of the alternating gradient descent algorithm of~\cite{BGP20} (up to a factor of $\frac{1}{2}$ for normalization).

We define the cumulative regret of both players after $K$ iterations, with respect to static actions $(p, q) \in \P \times \Q$, to be:
\begin{align*}
    R_{K}(p, q) := R_{1,K}(p) + R_{2,K}(q).
\end{align*}
Finally, we define the {\bf total regret} of both players after $K$ iterations to be the best cumulative regret in hindsight:
\begin{align*}
    R_K := \max_{(p, q) \in \P \times \Q} \, R_{K}(p, q).
\end{align*}

\paragraph{Regret and duality gap.}
We define the average iterates of the players after $K$ iterations to be:
\begin{align*}
    \bar p_K = \frac{1}{K} \sum_{k=0}^{K-1} p_{k+1} 
    \qquad \text{ and } \qquad
    \bar q_K = \frac{1}{K} \sum_{k=0}^{K-1} q_k.
\end{align*}
Note that we shift the index of the first player by one, since the second player moves after the first player.
Then we observe that the total regret is related to the duality gap of the average iterates.
We provide the proof of Lemma~\ref{Lem:RegretDualityGap} in Section~\ref{Sec:LemRegretDualityGapProof}.

\begin{lemma}\label{Lem:RegretDualityGap}
Under the above definitions, for any $K \ge 1$,
\begin{align}\label{Eq:RegretDualityGap}
    \dg(\bar p_K, \bar q_K) = \frac{1}{K} R_K - \frac{1}{2K} (p_0^\top A q_0 - p_K^\top A q_K).
\end{align}
\end{lemma}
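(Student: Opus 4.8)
The plan is to use the fact that the cumulative regret decouples, $R_K(p,q) = R_{1,K}(p) + R_{2,K}(q)$ with $R_{1,K}$ depending only on $p$ and $R_{2,K}$ only on $q$, so that $R_K = \max_{p\in\P} R_{1,K}(p) + \max_{q\in\Q} R_{2,K}(q)$. The point is that inside each player's regret the comparator term is linear in the static action and collapses, by the definitions of the average iterates, to one of the two linear functionals appearing in the duality gap: $\sum_{k=0}^{K-1} p^\top A q_k = K\, p^\top A\bar q_K$ and $\sum_{k=0}^{K-1} p_{k+1}^\top A q = K\,\bar p_K^\top A q$. The index shift built into $\bar p_K$ (it averages $p_{k+1}$, not $p_k$) is exactly what makes the second identity hold, since the $q$-player always reacts to $p_{k+1}$.

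Concretely, I would take the maximum over $q$ in \eqref{Eq:Regret2} and over $p$ in \eqref{Eq:Regret1}, obtaining $\max_q R_{2,K}(q) = K\max_q \bar p_K^\top A q - \sum_{k=0}^{K-1} p_{k+1}^\top A(\tfrac{q_k+q_{k+1}}{2})$ and $\max_p R_{1,K}(p) = \sum_{k=0}^{K-1} (\tfrac{p_k+p_{k+1}}{2})^\top A q_k - K\min_p p^\top A \bar q_K$. Adding the two and recognizing $\dg(\bar p_K,\bar q_K) = \max_q \bar p_K^\top A q - \min_p p^\top A \bar q_K$ gives
\[
R_K = K\,\dg(\bar p_K,\bar q_K) + S, \qquad S := \sum_{k=0}^{K-1}\Big(\tfrac{p_k+p_{k+1}}{2}\Big)^\top A q_k - \sum_{k=0}^{K-1} p_{k+1}^\top A \Big(\tfrac{q_k+q_{k+1}}{2}\Big).
\]
It then remains to evaluate $S$: expanding both averages, the two copies of $\tfrac12\sum_k p_{k+1}^\top A q_k$ cancel, leaving $S = \tfrac12\sum_{k=0}^{K-1}\big(p_k^\top A q_k - p_{k+1}^\top A q_{k+1}\big)$, which telescopes to $S = \tfrac12(p_0^\top A q_0 - p_K^\top A q_K)$. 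Dividing $R_K = K\,\dg(\bar p_K,\bar q_K)+S$ by $K$ and rearranging yields \eqref{Eq:RegretDualityGap}.

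I do not expect a genuine obstacle here; the argument is essentially bookkeeping. The two places that need care are (i) justifying that the separable structure of $R_K(p,q)$ lets the maximum be taken in each summand independently over the independent sets $\P$ and $\Q$, and (ii) tracking the off-by-one indexing between the $p$- and $q$-iterates so that the comparator sums line up with $\bar q_K$ and $\bar p_K$ and the cross terms in $S$ cancel exactly; a sign or index slip there would alter the boundary term $p_0^\top A q_0 - p_K^\top A q_K$.
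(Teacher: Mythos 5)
Your proposal is correct and follows essentially the same argument as the paper: both rest on the separability of the maximum over $(p,q)$, the collapse of the linear comparator sums to $K\,\bar p_K^\top A q$ and $K\,p^\top A\bar q_K$, and the telescoping of the cross terms to the boundary quantity $\tfrac12(p_0^\top A q_0 - p_K^\top A q_K)$. The only (immaterial) difference is direction: the paper expands $K\cdot\dg(\bar p_K,\bar q_K)$ and recognizes $R_{1,K}(p)+R_{2,K}(q)$ inside the max, whereas you start from $R_K$ and extract the duality gap.
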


If we are in the constrained case with bounded domains, then the last term in~\eqref{Eq:RegretDualityGap} is bounded, so the behavior of the duality gap is controlled by the growth of the total regret $R_K$.

\paragraph{Bound on regret.}

If both players follow the alternating mirror descent algorithm on bounded domains with smooth convex regularizers, then we can bound that the total regret $R_K$ as follows.
The proof uses the interpretation of alternating mirror descent as an alternating discretization of the skew-gradient flow, and a relation between the total regret and the change in the modified energy function, as we explain in Section~\ref{Sec:RegretModifiedEnergy}. 
We provide the proof of Theorem~\ref{Thm:RegretSmooth} in Section~\ref{Sec:ThmRegretSmoothProof}.

Below, $\|\cdot\|$ is some arbitrary norm, and $\nabla^3 \phi$ is 3rd-order derivative (a 3-tensor valued function).

\begin{theorem}\label{Thm:RegretSmooth}
Assume the domains $\P$ and $\Q$ are bounded, so there exists $0 < M < \infty$ such that $D_\phi(p',p) \le M$, $D_\psi(q',q) \le M$, $\|p\| \le M$, and $\|q\| \le M$ for all $p,p' \in \P$, $q,q' \in \Q$. 
Assume further that the dual functions $\phi^*$ and $\psi^*$ are $M$-smooth of order $3$, which means $\|\nabla^3 \phi^*(\nabla \phi(p))\|_{\op} \le M$ and 
$\|\nabla^3 \psi^*(\nabla \psi(q))\|_{\op} \le M$
for all $p \in \P$, $q \in \Q$.
Let $\alpha_{\max}$ be the maximum singular value of $A$.
If both players follow the alternating mirror descent~\eqref{Eq:AltMD} with any step size $\step > 0$, then the total regret at iteration $K$ is bounded by:
\begin{align*}
    R_K \le \frac{2M}{\step} + \frac{4 \, \alpha_{\max}^3 \, M^4}{3} \, \step^2 K.
\end{align*}
In particular, given a horizon $K$, if we set the step size $\step = \Theta(K^{-1/3})$, then the total regret is 
$$R_K = O(K^{1/3}).$$
\end{theorem}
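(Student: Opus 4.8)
The plan is to decompose the argument into three pieces: (i) express the total regret $R_K$ in terms of the change in a \emph{modified energy function} along the AMD trajectory in the dual space; (ii) bound the one-step change of this modified energy using the third-order smoothness hypothesis; (iii) sum the per-step bounds and combine with the trivial (boundedness) bound on the energy itself, then optimize over $\step$.

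\emph{Step 1: Regret as change in modified energy.} Starting from the regret definitions~\eqref{Eq:Regret1}--\eqref{Eq:Regret2} and the optimality conditions~\eqref{Eq:AltMD2}, I would rewrite each half-step's contribution. The first player's per-step term $\big(\tfrac{p_k+p_{k+1}}{2}\big)^\top A q_k - p^\top A q_k$ can be handled by using $\step A q_k = \nabla\phi(p_k)-\nabla\phi(p_{k+1}) = x_k - x_{k+1}$, turning the sum into a telescoping-type expression involving Bregman divergences plus a correction. Concretely, the standard FTRL/mirror-descent identity gives $\sum_k (p_{k+1}-p)^\top A q_k$ in terms of $\tfrac1\step(D_\phi(p,p_0)-D_\phi(p,p_K))$ minus $\tfrac1\step\sum_k D_\phi(p_{k+1},p_k)$-type terms; the $\tfrac12(p_k+p_{k+1})$ symmetrization is exactly what converts the first-order error into a second-order (hence energy-like) quantity. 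Doing the same for the $q$-player and adding, the $p$-and-$q$-dependent parts collapse to $\max_{p,q}[\tfrac1\step(D_\phi(p,p_0)+D_\psi(q,q_0))]\le \tfrac{2M}{\step}$ (using the domain-diameter bound $D_\phi,D_\psi\le M$), and the remaining terms form a telescoping sum of $H(x_k,y_k)-H(x_{k+1},y_{k+1})$ plus a residual $E_k$ measuring how far AMD is from conserving $H$. This is the content of the ``Theorem~\ref{Thm:RegretEnergy}'' referenced in the introduction; I would invoke/reprove it here to get $R_K \le \tfrac{2M}{\step} + \sum_{k=0}^{K-1} E_k$ where $E_k$ is an explicit third-order remainder.

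\emph{Step 2: Bounding the per-step energy defect.} The alternating update~\eqref{Eq:Alt} is a first-order-accurate splitting of the skew-gradient flow~\eqref{Eq:SkewGrad}, which exactly conserves $H=\phi^*+\psi^*$. The defect $E_k$ arises because the $y$-update uses $x_{k+1}$ rather than $x_k$. Writing $x_{k+1}-x_k = -\step A\nabla\psi^*(y_k)$ and $y_{k+1}-y_k = \step A^\top\nabla\phi^*(x_{k+1})$, I would Taylor-expand $\phi^*$ and $\psi^*$ to second order and track the leading non-cancelling term. Because $\phi^*$ is $M$-smooth of order $3$ (i.e.\ $\|\nabla^3\phi^*\|_{\op}\le M$ on the relevant range, and likewise $\psi^*$), the second-order terms cancel against the skew structure and the residual is $O(\step^3)$ with an explicit constant: each of $\|x_{k+1}-x_k\|$ and $\|y_{k+1}-y_k\|$ is bounded by $\step\,\alpha_{\max} M$ (using $\|p_k\|,\|q_k\|\le M$ and the definition of $\alpha_{\max}$), so a third-order Taylor remainder is at most $\tfrac16 M (\step\alpha_{\max}M)^3$ per regularizer; combining the two and the cross term carefully should yield $E_k \le \tfrac{4}{3}\alpha_{\max}^3 M^4 \step^3$.

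\emph{Step 3: Summation and step-size choice.} Summing $E_k$ over $k=0,\dots,K-1$ gives $\sum_k E_k \le \tfrac{4}{3}\alpha_{\max}^3 M^4 \step^3 K$; dividing by $\step$... wait—note $E_k$ already has a $\tfrac1\step$ absorbed or not depending on the exact bookkeeping in Step 1; tracking that the regret picks up $\tfrac1\step\sum_k(\text{third-order terms in }\step)$, the net contribution is $\tfrac{4}{3}\alpha_{\max}^3 M^4 \step^2 K$, matching the claimed bound $R_K \le \tfrac{2M}{\step} + \tfrac{4}{3}\alpha_{\max}^3 M^4 \step^2 K$. Finally, balancing the two terms by setting $\step = \Theta(K^{-1/3})$ makes both $\Theta(K^{1/3})$, giving $R_K = O(K^{1/3})$.

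\emph{Main obstacle.} The delicate part is Step 2: showing that the energy defect per step is genuinely third order in $\step$ rather than second order, which requires that the leading second-order error terms from the two half-steps cancel by virtue of the skew (symplectic-Euler-like) structure of the alternating update. Getting the cancellation right — and extracting a clean constant — is where care with the Taylor expansions of $\phi^*$ and $\psi^*$ and with the interplay between $A$ and $A^\top$ is essential; this is precisely the place where the order-$3$ smoothness hypothesis (rather than mere convexity or order-$2$ smoothness) is used, and it mirrors the exact conservation of the modified energy in the unconstrained case of~\cite{BGP20}.
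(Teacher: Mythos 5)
Your proposal is correct and follows essentially the same route as the paper: the regret is rewritten as a Bregman-divergence term bounded by $\tfrac{2M}{\step}$ plus the change in the modified energy (the paper's Theorem~\ref{Thm:RegretEnergy}), the per-step energy defect is the Bregman commutator, which is third order in $\step$ by the order-$3$ smoothness (Lemma~\ref{Lem:ThirdOrder} and Theorem~\ref{Thm:AltSmooth}), and summing and balancing gives $\step=\Theta(K^{-1/3})$. The only loose point is the constant bookkeeping in your Step~2; the paper gets $\tfrac{4}{3}\alpha_{\max}^3M^4\step^3$ per step by applying the commutator bound $|C_H|\le\tfrac{L_3}{12}\|z_{k+1}-z_k\|^3$ to the joint iterate with $L_3=2M$ and $\|z_{k+1}-z_k\|\le\step\,\alpha_{\max}\,(2M)$, rather than bounding the two regularizers' remainders separately.
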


By Lemma~\ref{Lem:RegretDualityGap} and Theorem~\ref{Thm:RegretSmooth}, we have the following corollary (proof is in Section~\ref{Sec:CorDualityGapAlternatingProof}).

\begin{corollary}\label{Cor:DualityGapAlternating}
Assume the same assumption as in Theorem~\ref{Thm:RegretSmooth}. 
If both players follow the alternating mirror descent~\eqref{Eq:AltMD} with step size $\step = \Theta(K^{-1/3})$, then 
the duality gap of the average iterates decays as $\dg(\bar p_K, \bar q_K) = O(K^{-2/3})$.
\end{corollary}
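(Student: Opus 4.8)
\textbf{Proof proposal for Corollary~\ref{Cor:DualityGapAlternating}.}
The plan is to simply substitute the regret bound of Theorem~\ref{Thm:RegretSmooth} into the identity of Lemma~\ref{Lem:RegretDualityGap} and check that the leftover term is lower order. First I would invoke Lemma~\ref{Lem:RegretDualityGap}, which gives the exact identity
\begin{align*}
  \dg(\bar p_K, \bar q_K) = \frac{1}{K} R_K - \frac{1}{2K}\big(p_0^\top A q_0 - p_K^\top A q_K\big),
\end{align*}
so it suffices to bound the two terms on the right separately.

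For the first term, I would apply Theorem~\ref{Thm:RegretSmooth}: under the stated assumptions, for any step size $\step>0$ we have $R_K \le \frac{2M}{\step} + \frac{4\alpha_{\max}^3 M^4}{3}\step^2 K$. Choosing $\step = \Theta(K^{-1/3})$ balances the two terms at order $K^{1/3}$, hence $R_K = O(K^{1/3})$ and $\frac{1}{K}R_K = O(K^{-2/3})$. (If one wants the explicit constant, optimizing $\frac{2M}{\step}+\frac{4\alpha_{\max}^3 M^4}{3}\step^2 K$ over $\step$ gives $\step \propto (M / (\alpha_{\max}^3 M^4 K))^{1/3}$ and a bound of the form $C\,\alpha_{\max}\,M^2 K^{1/3}$ for an absolute constant $C$; this is a routine calculus step.)

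For the second term, I would use the boundedness of the domains: since $\|p_k\|\le M$, $\|q_k\|\le M$, and $\alpha_{\max}$ is the largest singular value of $A$, Cauchy--Schwarz / the operator-norm bound gives $|p^\top A q| \le \alpha_{\max}\|p\|\|q\| \le \alpha_{\max} M^2$ for any admissible $p,q$, so $|p_0^\top A q_0 - p_K^\top A q_K| \le 2\alpha_{\max} M^2$ and the term is $O(K^{-1})$. Combining, $\dg(\bar p_K,\bar q_K) = O(K^{-2/3}) + O(K^{-1}) = O(K^{-2/3})$, since $K^{-1} = o(K^{-2/3})$.

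There is no real obstacle here: the corollary is a direct consequence of the two cited results, and the only thing to verify is that the correction term in Lemma~\ref{Lem:RegretDualityGap} is controlled by the bounded-domain assumption and is asymptotically dominated by $K^{-2/3}$. The substantive work has already been done in Theorem~\ref{Thm:RegretSmooth}.
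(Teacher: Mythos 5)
Your proposal is correct and follows exactly the paper's own argument: apply Theorem~\ref{Thm:RegretSmooth} to get $\frac{1}{K}R_K = O(K^{-2/3})$, and bound the correction term in Lemma~\ref{Lem:RegretDualityGap} by $\frac{2\alpha_{\max}M^2}{K} = O(K^{-1})$ using the bounded-domain assumption. No issues.
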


Compare the result above with the simultaneous mirror descent algorithm for min-max games, which has the classical $O(K^{-1/2})$ convergence in the duality gap of the average iterates, thus highlighting the advantage of using the alternating mirror descent algorithm.

\section{Skew-Gradient Flow and Discretization}
\label{Sec:SGF}

In this section we discuss the skew-gradient flow and its discretization.
We apply this to analyze the alternating mirror descent algorithm in the dual space $\R^m \times \R^n$.

Recall in the dual space, we have the dual functions of the convex regularizers $f := \phi^* \colon \R^m \to \R$ and $g := \psi^* \colon \R^n \to \R$.
We define the {\bf energy function} $H \colon \R^{m+n} \to \R$ by:
\begin{align*}
    H(x,y) = f(x) + g(y)
\end{align*}
for all $z = (x,y) \in \R^m \times \R^n$.
Since $f = \phi^*$ and $g = \psi^*$ are convex (being convex conjugate functions).
Furthermore, since we assume $\phi$ and $\psi$ are strictly convex, $f$ and $g$ are differentiable.
Therefore, $H$ is a differentiable convex function.

Let $J \in \R^{(m+n) \times (m+n)}$ be the skew-symmetric matrix:
\begin{align*}
J = \begin{pmatrix}
0 & A \\ -A^\top & 0
\end{pmatrix}
\end{align*}
where recall $A \in \R^{m \times n}$ is the payoff matrix for the min-max game.

We consider the {\bf skew-gradient flow} generated by the energy function $H$ and the skew-symmetric matrix $J$, which is the solution to the differential equation:
\begin{align}\label{Eq:SGF}
\dot Z_t = -J \nabla H(Z_t)
\end{align}
starting from an arbitrary $Z_0 \in \R^{m+n}$.
If we write $Z_t = (X_t, Y_t)$, then the components follow the skew-gradient flow dynamics as in~\eqref{Eq:SkewGrad}:
\begin{align*}
\dot X_t &= -A \nabla g(Y_t) \\ 
\dot Y_t &= A^\top \nabla f(X_t).
\end{align*}

A feature of the skew-gradient flow is that since the velocity is orthogonal to the gradient of the energy function, the skew-gradient flow preserves the energy function.
(In contrast, recall the usual gradient flow $\dot Z_t = -\nabla H(Z_t)$ decreases the energy function.)

\begin{lemma}
Along the skew-gradient flow~\eqref{Eq:SGF}, $H(Z_t) = H(Z_0)$ for all $t \ge 0$.
\end{lemma}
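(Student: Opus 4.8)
The plan is to compute $\frac{d}{dt} H(Z_t)$ along the flow and show it vanishes identically. By the chain rule,
\[
\frac{d}{dt} H(Z_t) = \langle \nabla H(Z_t), \dot Z_t \rangle = \langle \nabla H(Z_t), -J \nabla H(Z_t) \rangle,
\]
using the defining equation~\eqref{Eq:SGF}. So the whole claim reduces to the elementary fact that $\langle v, J v \rangle = 0$ for any vector $v$, because $J$ is skew-symmetric: indeed $\langle v, Jv \rangle = v^\top J v = (v^\top J v)^\top = v^\top J^\top v = -v^\top J v$, hence $v^\top J v = 0$. Applying this with $v = \nabla H(Z_t)$ gives $\frac{d}{dt} H(Z_t) = 0$ for all $t \ge 0$, so $H(Z_t)$ is constant and equals $H(Z_0)$.

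I would then add a one-line remark making the structure transparent in coordinates, since it is worth recording explicitly: writing $v = (\nabla f(X_t), \nabla g(Y_t))$, one has
\[
\langle \nabla H(Z_t), -J\nabla H(Z_t)\rangle = -\langle \nabla f(X_t), A \nabla g(Y_t)\rangle + \langle \nabla g(Y_t), A^\top \nabla f(X_t)\rangle = 0,
\]
the two terms cancelling because $\langle u, A w\rangle = \langle A^\top u, w\rangle$. This is really the same computation as the abstract one, just unpacked, and it connects directly to the component form~\eqref{Eq:SkewGrad}.

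There is essentially no obstacle here; the only point requiring a word of care is regularity, namely that $t \mapsto H(Z_t)$ is differentiable so that the chain rule applies. Since $\phi,\psi$ are strictly convex Legendre functions, $f = \phi^*$ and $g = \psi^*$ are differentiable, hence $H$ is $C^1$; and $Z_t$ is a $C^1$ curve as a solution of the ODE~\eqref{Eq:SGF} (whose right-hand side is continuous). So $t\mapsto H(Z_t)$ is $C^1$ and the computation above is justified, giving $H(Z_t) = H(Z_0)$ for all $t \ge 0$.
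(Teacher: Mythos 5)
Your proof is correct and follows exactly the paper's argument: differentiate $H(Z_t)$ along the flow and use that the skew-symmetric $J$ defines a zero quadratic form. The added coordinate unpacking and regularity remark are fine but not needed beyond what the paper records.
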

\begin{proof}
We can check that the energy function $H(Z_t)$ has zero time derivative along the flow~\eqref{Eq:SGF}:
\begin{align}
  \frac{d}{dt} H(Z_t) = \langle \nabla H(Z_t), \dot Z_t \rangle = -\langle \nabla H(Z_t), J \nabla H(Z_t) \rangle = 0
\end{align}
where the last equality above holds because $J = -J^\top$, so it defines a zero quadratic form.
\end{proof}

We note that the result above, along with much of our analysis in this section, holds arbitrary energy function $H$ (not necessarily separable) and skew-symmetric matrix $J$ (not necessarily in block structure).
For simplicity, we focus on the separable case above for the min-max game application.

In continuous time, the skew-gradient flow dynamics~\eqref{Eq:SGF} preserves the energy function.
In discrete time, the behavior of the energy function can vary depending on the discretization method used.

\paragraph{Forward Discretization.}
A forward discretization of skew-gradient flow corresponds to the two players performing the simultaneous mirror descent algorithm for constrained min-max game.
Since the energy function is convex, it is monotonically increasing along the forward method, and it is increasing exponentially fast if $H$ is strongly convex.
See Section~\ref{Sec:Fwd} for details.

\paragraph{Backward Discretization.}
A backward discretization of skew-gradient flow corresponds to the two players performing the simultaneous proximal mirror descent algorithm for constrained min-max game.
Since the energy function is convex, it is monotonically decreasing along the backward method, and it is decreasing exponentially fast if $H$ is strongly convex.
See Section~\ref{Sec:Bwd} for details.

\subsection{Alternating Discretization of Skew-Gradient Flow}
\label{Sec:AlternatingDisc}

The alternating discretization of the skew-gradient flow performs the updates one component at a time,
using the forward method for the first component ($x$) and the backward method for the second component ($y$), resulting in the update equation:
\begin{align*}
    x_{k+1} &= x_k - \step A \nabla_y H(x_{k+1}, y_k) \\
    y_{k+1} &= y_k + \step A^\top \nabla_x H(x_{k+1}, y_k).
\end{align*}
Since the energy function $H(x,y) = f(x) + g(y)$ is separable, this yields an explicit update equation:
\begin{subequations}\label{Eq:AltDisc}
\begin{align*}
    x_{k+1} &= x_k - \step A \nabla g(y_k) \\
    y_{k+1} &= y_k + \step A^\top \nabla f(x_{k+1}).
\end{align*}
\end{subequations}
For the min-max game application, this corresponds to the alternating mirror descent update in the dual space~\eqref{Eq:Alt}.

In contrast to either the forward or the backward discretization, this alternating discretization tracks the continuous-time dynamics more closely, and thus we can bound the deviation in the energy function better.
To explain our result, we introduce the notion of modified energy function.

\subsubsection{Modified Energy Function}

Given a step size $\step > 0$, we define the {\bf modified energy} function $H_\step \colon \R^{m + n} \to \R$ by:
\begin{subequations}
\begin{align}\label{Eq:Htilde}
H_\step(z) &= H(z) - \frac{\step}{2} \langle \nabla f(x), A \nabla g(y) \rangle \\
&= f(x) + g(y) - \frac{\step}{2} \langle \nabla f(x), A \nabla g(y) \rangle
\end{align}
\end{subequations}
for $z = (x,y) \in \R^{m+n}$.
Note that when $f$ and $g$ are quadratic functions, this recovers the definition of the modified energy function in~\cite{BGP20}.

Recall the Bregman divergence $D_H(z',z) = H(z') - H(z) - \langle \nabla H(z), z'-z \rangle$ is not necessarily symmetric.
We define the {\bf Bregman commutator} $C_H \colon \R^{m+n} \times \R^{m+n} \to \R$ as a measure of the non-commutativity of Bregman divergence:
\begin{align*}
    C_H(z',z) &= \tfrac{1}{2} (D_H(z',z) - D_H(z,z')) \\
    &= H(z') - H(z) - \tfrac{1}{2} \langle \nabla H(z') + \nabla H(z), z'-z \rangle.
\end{align*}
Observe that when $H(z) = \frac{1}{2} z^\top M z$ is a quadratic function (for any positive definite matrix $M$), then $D_H(z',z) = \frac{1}{2} (z'-z)^\top M (z'-z) = D_H(z,z')$ is symmetric, and thus $C_H(z',z) = 0$.
Conversely, if the Bregman commutator vanishes: $C_H(z',z) = 0$, then $H$ must be a quadratic function;
see Lemma~\ref{Lem:Vanishing} in Section~\ref{Sec:CommutatorProperty}.
We can also bound the Bregman commutator under third-order smoothness; see Lemma~\ref{Lem:ThirdOrder} in Section~\ref{Sec:CommutatorProperty}.

We show that along the alternating method, the modified energy changes by precisely the Bregman commutator;
we provide proof in Section~\ref{Sec:LemAltProof}.

\begin{lemma}\label{Lem:Alt}
Let $z_k = (x_k,y_k)$ evolve following the alternating method~\eqref{Eq:AltDisc}.
Then for any $k \ge 0$,
\begin{align}\label{Eq:LemAlt}
H_\step(z_{k+1}) = H_\step(z_k) + C_H(z_{k+1},z_k).
\end{align}
\end{lemma}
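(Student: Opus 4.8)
The plan is to compute $H_\step(z_{k+1}) - H_\step(z_k)$ directly by substituting the update equations~\eqref{Eq:AltDisc}, and match the result against the formula for the Bregman commutator $C_H(z_{k+1},z_k)$. Write $z_k = (x_k,y_k)$ and $z_{k+1} = (x_{k+1},y_{k+1})$, and set $u = x_{k+1}-x_k = -\step A \nabla g(y_k)$ and $v = y_{k+1}-y_k = \step A^\top \nabla f(x_{k+1})$. The key algebraic observation to exploit is the ``cross'' structure: the $x$-update uses $\nabla g(y_k)$ (the old $y$) while the $y$-update uses $\nabla f(x_{k+1})$ (the new $x$); this is exactly what makes the telescoping work.

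First I would expand $H_\step(z_{k+1}) - H_\step(z_k)$ using the definition~\eqref{Eq:Htilde}, $H_\step(z) = f(x) + g(y) - \frac{\step}{2}\langle \nabla f(x), A\nabla g(y)\rangle$, giving three pieces: $f(x_{k+1}) - f(x_k)$, $g(y_{k+1}) - g(y_k)$, and $-\frac{\step}{2}(\langle \nabla f(x_{k+1}), A\nabla g(y_{k+1})\rangle - \langle \nabla f(x_k), A\nabla g(y_k)\rangle)$. Separately I would write out $C_H(z_{k+1},z_k) = H(z_{k+1}) - H(z_k) - \frac12\langle \nabla H(z_{k+1}) + \nabla H(z_k), z_{k+1}-z_k\rangle$; since $H$ is separable this becomes $[f(x_{k+1}) - f(x_k) - \frac12\langle \nabla f(x_{k+1}) + \nabla f(x_k), u\rangle] + [g(y_{k+1}) - g(y_k) - \frac12\langle \nabla g(y_{k+1}) + \nabla g(y_k), v\rangle]$. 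So the target identity~\eqref{Eq:LemAlt} reduces, after cancelling the $f$ and $g$ increments, to showing that
\begin{align*}
-\tfrac{\step}{2}\big(\langle \nabla f(x_{k+1}), A\nabla g(y_{k+1})\rangle - \langle \nabla f(x_k), A\nabla g(y_k)\rangle\big) = -\tfrac12\langle \nabla f(x_{k+1}) + \nabla f(x_k), u\rangle - \tfrac12\langle \nabla g(y_{k+1}) + \nabla g(y_k), v\rangle.
\end{align*}
Now I would substitute $u = -\step A\nabla g(y_k)$ and $v = \step A^\top \nabla f(x_{k+1})$ into the right-hand side: the first term becomes $\frac{\step}{2}\langle \nabla f(x_{k+1}) + \nabla f(x_k), A\nabla g(y_k)\rangle$, and the second becomes $-\frac{\step}{2}\langle \nabla g(y_{k+1}) + \nabla g(y_k), A^\top \nabla f(x_{k+1})\rangle = -\frac{\step}{2}\langle \nabla f(x_{k+1}), A(\nabla g(y_{k+1}) + \nabla g(y_k))\rangle$ using $\langle w, A^\top v\rangle = \langle Aw, v\rangle = \langle v, Aw\rangle$. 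Adding these, the terms $\frac{\step}{2}\langle \nabla f(x_{k+1}), A\nabla g(y_k)\rangle$ and $-\frac{\step}{2}\langle \nabla f(x_{k+1}), A\nabla g(y_k)\rangle$ cancel, leaving exactly $\frac{\step}{2}\langle \nabla f(x_k), A\nabla g(y_k)\rangle - \frac{\step}{2}\langle \nabla f(x_{k+1}), A\nabla g(y_{k+1})\rangle$, which matches the left-hand side. This completes the identity.

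The argument is essentially a bookkeeping computation with no real obstacle, but the step requiring the most care is the bilinear-form manipulation: one must correctly use the transpose relation $\langle y, A^\top x\rangle = \langle Ay, x\rangle$ and the symmetry of the real inner product to bring every cross term into the common form $\langle \nabla f(x_\bullet), A\nabla g(y_\bullet)\rangle$, so that the two unwanted mixed terms (at the ``new-$x$, old-$y$'' state) cancel and the remaining four terms telescope into the modified-energy difference. It is worth double-checking the signs coming from $u$ (which carries a minus) versus $v$ (which carries a plus), since an error there would break the cancellation. No smoothness or convexity of $f,g$ is needed — only differentiability and separability of $H$, both already assumed — so the lemma is an exact identity valid for every $k \ge 0$ and every step size $\step > 0$.
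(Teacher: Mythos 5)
Your proposal is correct and follows essentially the same route as the paper's proof: both exploit the separability $C_H = C_f + C_g$, substitute the alternating updates $x_{k+1}-x_k = -\step A\nabla g(y_k)$ and $y_{k+1}-y_k = \step A^\top \nabla f(x_{k+1})$, and observe that the two mixed terms $\tfrac{\step}{2}\langle \nabla f(x_{k+1}), A\nabla g(y_k)\rangle$ cancel, leaving exactly the telescoping difference of the modified energy. The only difference is presentational (you reduce the target identity to the cross terms, while the paper builds up $C_f + C_g$ directly), and your remark that no convexity or smoothness is needed is consistent with the paper's treatment.
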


As a corollary, if $H$ is quadratic, in which case the Bregman commutator vanishes, then the modified energy function is conserved along the alternating method. This recovers the main result of~\cite{BGP20} for the unconstrained min-max game.
We provide the proof of Corollary~\ref{Cor:QuadraticEnergy} in Section~\ref{Sec:CorQuadraticEnergyProof}.

\begin{corollary}\label{Cor:QuadraticEnergy}
Suppose $H$ is a quadratic function.
Along the alternating method~\eqref{Eq:Alt}, for any $k \ge 0$:
\begin{align*}
H_\step(z_k) = H_\step(z_0).
\end{align*}
\end{corollary}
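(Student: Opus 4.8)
\textbf{Proof proposal for Corollary~\ref{Cor:QuadraticEnergy}.}
The plan is to reduce the claim to Lemma~\ref{Lem:Alt} by showing that the correction term vanishes in the quadratic case. First I would invoke Lemma~\ref{Lem:Alt}, which gives the exact recursion $H_\step(z_{k+1}) = H_\step(z_k) + C_H(z_{k+1}, z_k)$ along the alternating method. So the entire content of the corollary rests on verifying that $C_H(z_{k+1}, z_k) = 0$ for every $k$ when $H$ is quadratic.

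Next I would recall the observation already made in the excerpt immediately preceding Lemma~\ref{Lem:Alt}: if $H(z) = \frac{1}{2} z^\top M z + b^\top z + c$ (or more simply $H(z) = \frac{1}{2} z^\top M z$, which is the stated form), then the Bregman divergence is $D_H(z',z) = \frac{1}{2}(z'-z)^\top M (z'-z)$, which is manifestly symmetric in its two arguments. Hence $C_H(z',z) = \frac{1}{2}(D_H(z',z) - D_H(z,z')) = 0$ identically. Plugging this into the recursion yields $H_\step(z_{k+1}) = H_\step(z_k)$, and then a trivial induction on $k$ starting from $k=0$ gives $H_\step(z_k) = H_\step(z_0)$ for all $k \ge 0$, which is exactly the claim.

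There is essentially no obstacle here: the corollary is a direct specialization of Lemma~\ref{Lem:Alt}. The only point requiring a sentence of care is making sure the vanishing of $C_H$ is stated cleanly for the general quadratic (the affine and constant parts of $H$ contribute nothing to $D_H$ since they are annihilated by the second-difference structure), but since the statement assumes $H$ is quadratic and the preceding discussion already records $C_H = 0$ in that case, I would simply cite that fact. If one wanted to be fully self-contained, one could alternatively verify directly that the second line of the definition of $C_H$, namely $H(z') - H(z) - \frac{1}{2}\langle \nabla H(z') + \nabla H(z), z'-z\rangle$, equals zero for quadratic $H$ by writing $\nabla H(z) = Mz$ and expanding; this is the trapezoidal rule being exact on quadratics and is a one-line computation.
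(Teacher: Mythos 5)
Your proposal is correct and follows exactly the paper's argument: invoke Lemma~\ref{Lem:Alt}, note that the Bregman commutator $C_H$ vanishes identically for quadratic $H$ (as recorded in the discussion preceding the lemma), and telescope to conclude $H_\step(z_k) = H_\step(z_0)$. No differences worth noting.
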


\subsubsection{Bound under Third-Order Smoothness}

Under smoothness assumptions on $H$, we can deduce the following bound on the modified energy function.
Let $\|\cdot\|$ be an arbitrary norm in $\R^{m+n}$ with dual norm $\|\cdot\|_*$.
Recall we say $H$ is $L$-Lipschitz if $|H(z') - H(z)| \le L \|z'-z\|$ for all $z,z' \in \R^{m+n}$; equivalently, $\|\nabla H(z)\|_* \le L$.
We say $H$ is $M$-smooth of order $3$ if $H$ is three-times differentiable, and $\|\nabla^3 H(z)\|_{\op} \le M$ for all $z \in \R^{m+n}$, where $\|\cdot\|_{\op}$ is the operator norm.
Then we have the following bound.
We provide the proof of Theorem~\ref{Thm:AltSmooth} in Section~\ref{Sec:ThmAltSmoothProof}.
We note the following result holds without assuming convexity of $H$.

\begin{theorem}\label{Thm:AltSmooth}
Assume that $H$ is $L_1$-Lipschitz and $L_3$-smooth of order $3$.
Let $\alpha_{\max}$ be the maximum singular value of $A$.
Along the alternating method~\eqref{Eq:AltDisc}, for any $\step \ge 0$ and at any iteration $k \ge 0$:
\begin{align*}
|H_\step(z_k) - H_\step(z_0)| \le \tfrac{1}{12} \alpha_{\max}^3 \, L_3 \, L_1^3 \, \step^3 \, k.
\end{align*}
\end{theorem}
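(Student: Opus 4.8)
The plan is to use Lemma~\ref{Lem:Alt} to telescope the change in modified energy, and then bound the per-step increment $C_H(z_{k+1},z_k)$ using the third-order smoothness of $H$. First I would apply Lemma~\ref{Lem:Alt} repeatedly to get the telescoping identity
\begin{align*}
H_\step(z_k) - H_\step(z_0) = \sum_{j=0}^{k-1} C_H(z_{j+1},z_j),
\end{align*}
so that $|H_\step(z_k) - H_\step(z_0)| \le \sum_{j=0}^{k-1} |C_H(z_{j+1},z_j)|$, and it suffices to bound each $|C_H(z_{j+1},z_j)|$ by $\tfrac{1}{12}\alpha_{\max}^3 L_3 L_1^3 \step^3$. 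The two ingredients for this are (i) a bound on the step size $\|z_{j+1}-z_j\|$ and (ii) the third-order smoothness bound on $C_H$, presumably the content of the cited Lemma~\ref{Lem:ThirdOrder}: since $C_H(z',z) = H(z') - H(z) - \tfrac12\langle \nabla H(z')+\nabla H(z), z'-z\rangle$ is the trapezoidal error of integrating the gradient, a Taylor expansion with exact remainder shows $|C_H(z',z)| \le \tfrac{1}{12}\|\nabla^3 H\|_{\op}\|z'-z\|^3 \le \tfrac{1}{12} L_3 \|z'-z\|^3$ (the $\tfrac{1}{12}$ being the constant in the trapezoidal rule error).

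For the step size bound, I would read off from the alternating update~\eqref{Eq:AltDisc} that $x_{j+1}-x_j = -\step A\nabla g(y_j)$ and $y_{j+1}-y_j = \step A^\top \nabla f(x_{j+1})$. Because $H(x,y) = f(x)+g(y)$ is separable, $\nabla_x H = \nabla f$ and $\nabla_y H = \nabla g$, and the Lipschitz assumption $\|\nabla H\|_* \le L_1$ gives $\|\nabla f(x)\|_*, \|\nabla g(y)\|_* \le L_1$ (here one should be slightly careful about which norm the block pieces inherit, but this works for the natural product norm, and the statement is phrased for an arbitrary norm so I would just carry the relevant constant). Hence $\|x_{j+1}-x_j\| \le \step\,\alpha_{\max} L_1$ and $\|y_{j+1}-y_j\| \le \step\,\alpha_{\max} L_1$, so $\|z_{j+1}-z_j\| \le \step\,\alpha_{\max} L_1$ in the combined norm (modulo the normalization convention). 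Combining with the trapezoidal bound, $|C_H(z_{j+1},z_j)| \le \tfrac{1}{12} L_3 (\step\,\alpha_{\max} L_1)^3 = \tfrac{1}{12}\alpha_{\max}^3 L_3 L_1^3 \step^3$, and summing over $j=0,\dots,k-1$ yields the claimed bound.

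The main obstacle is the correct handling of the norm bookkeeping: the operator norm $\alpha_{\max}$ of $A$, the Lipschitz constant $L_1$ with respect to the dual norm, the operator norm of $\nabla^3 H$, and the norm on $\R^{m+n}$ must all be made consistent so that the chain $\|A\nabla g(y)\| \le \alpha_{\max}\|\nabla g(y)\|_* \le \alpha_{\max} L_1$ holds and so that the block structure of $z=(x,y)$ doesn't cost an extra constant; since the theorem is stated for an arbitrary norm, the clean way is to fix the product norm $\|(x,y)\| = \max(\|x\|_x,\|y\|_y)$ (or an $\ell_1$-sum on the dual side), verify the skew-gradient update respects it with constant $\alpha_{\max}$, and note the argument is otherwise norm-agnostic. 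Everything else — the telescoping, the Taylor-with-remainder estimate for the trapezoidal error, and the final summation — is routine, and crucially none of it uses convexity of $H$, consistent with the remark preceding the theorem.
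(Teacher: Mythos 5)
Your proposal is correct and follows essentially the same route as the paper: telescoping via Lemma~\ref{Lem:Alt}, bounding each $|C_H(z_{j+1},z_j)|$ by $\tfrac{L_3}{12}\|z_{j+1}-z_j\|^3$ via Lemma~\ref{Lem:ThirdOrder}, and bounding the displacement by $\step\,\alpha_{\max}L_1$ (the paper packages your componentwise step bound as $z_{j+1}-z_j = -\step J\nabla H(z_{j+\frac12})$ with $z_{j+\frac12}=(x_{j+1},y_j)$, which is exactly the intermediate-point evaluation you identified).
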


As an example, the function $H(z) = \log \sum_{i=1}^{m+n} e^{z_i}$ satisfies the assumptions of Theorem~\ref{Thm:AltSmooth} with respect to $\|\cdot\|_\infty$-norm; see Example~\ref{Ex:LogPartition} in Section~\ref{Sec:CommutatorProperty}.

\subsubsection{Regret of Alternating Mirror Descent in terms of Modified Energy}
\label{Sec:RegretModifiedEnergy}

We now consider the constrained min-max game application where $H(x,y) = f(x) + g(y)$ with $f = \phi^*$ and $g = \psi^*$.
Given $(p, q) \in \P \times \Q$, let $z = (x, y) \in \R^{m+n}$ be the dual variables, where $x = \nabla \phi(p)$ and $y = \nabla \psi(q)$.
Then we can write the cumulative regret of alternating mirror descent in terms of the difference of the modified energy.
We provide the proof of Theorem~\ref{Thm:RegretEnergy} in Section~\ref{Sec:ProofRegretEnergy}.

\begin{theorem}\label{Thm:RegretEnergy}
Let $(p_k, q_k)$ evolve following the alternating mirror descent algorithm~\eqref{Eq:AltMD} with any step size $\step > 0$, and let $z_k = (x_k,y_k)$ be the dual variables.
We can write the cumulative regret of the two players with respect to any $(p, q) \in \P \times \Q$ as:
\begin{align*}
    R_K(p, q) 
    &= \tfrac{1}{\step} \left(D_H(z_0,z) - D_H(z_K,z) + H_\step(z_K) - H_\step(z_0) \right).
\end{align*}
\end{theorem}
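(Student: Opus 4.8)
The plan is to expand both regret terms $R_{1,K}(p)$ and $R_{2,K}(q)$ directly using the optimality conditions~\eqref{Eq:AltMD2} rewritten in the dual space, and then recognize the resulting telescoping sums as Bregman divergences and the modified energy correction term. First I would record that in dual coordinates the update~\eqref{Eq:Alt} reads $x_{k+1} - x_k = -\step A \nabla g(y_k)$ and $y_{k+1} - y_k = \step A^\top \nabla f(x_{k+1})$; also $p_k = \nabla f(x_k)$, $q_k = \nabla g(y_k)$, and similarly for $(p,q)$ with $z=(x,y)$. The key algebraic identity to exploit is that, for any convex $u$, $\langle \nabla u(a) - \nabla u(b), a - b\rangle$ and its variants collapse into Bregman divergences: specifically $\langle \nabla u(a), a-b\rangle = u(a) - u(b) + D_u(b,a)$.

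Next I would handle the two players' regrets separately but symmetrically. For the first player, $R_{1,K}(p) = \sum_k \big(\tfrac{p_k + p_{k+1}}{2} - p\big)^\top A q_k$. Using $A q_k = A\nabla g(y_k) = -\tfrac{1}{\step}(x_{k+1}-x_k)$, this becomes $-\tfrac{1}{\step}\sum_k \big(\tfrac{\nabla f(x_k)+\nabla f(x_{k+1})}{2} - \nabla f(x)\big)^\top (x_{k+1}-x_k)$. The term $\tfrac12(\nabla f(x_k)+\nabla f(x_{k+1}))^\top(x_{k+1}-x_k)$ is exactly $f(x_{k+1}) - f(x_k) - C_f(x_{k+1},x_k)$ by the definition of the Bregman commutator (applied to $f$), while $\nabla f(x)^\top(x_{k+1}-x_k)$ rearranges via the identity above into $D_f$ terms that telescope: summing $-\nabla f(x)^\top(x_{k+1}-x_k)$ combined with the $f(x_{k+1})-f(x_k)$ telescoping gives $D_f(x_K,x) - D_f(x_0,x)$ up to sign. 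Collecting, $\step R_{1,K}(p) = D_f(x_0,x) - D_f(x_K,x) + \big(f(x_0)-f(x_K)\big) + \text{(commutator sum)}$ — or more cleanly, the $f$-part of $D_H(z_0,z)-D_H(z_K,z)$ plus the $f$-part of $H(z_0)-H(z_K)$ plus $\sum_k C_f(x_{k+1},x_k)$. I would do the analogous computation for the second player using $A^\top p_{k+1} = A^\top \nabla f(x_{k+1}) = \tfrac{1}{\step}(y_{k+1}-y_k)$, obtaining the $g$-parts plus $\sum_k C_g(y_{k+1},y_k)$.

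Then I would assemble: $\step R_K(p,q) = D_H(z_0,z) - D_H(z_K,z) + H(z_0) - H(z_K) + \sum_{k=0}^{K-1}\big(C_f(x_{k+1},x_k) + C_g(y_{k+1},y_k)\big)$, using separability $D_H = D_f + D_g$, $H = f+g$. The final step is to identify the leftover sum with $H_\step(z_K) - H_\step(z_0) - (H(z_0) - H(z_K))$, i.e.\ to show that $H_\step(z_{k+1}) - H_\step(z_k) - (H(z_{k+1}) - H(z_k))$ equals $C_f(x_{k+1},x_k) + C_g(y_{k+1},y_k)$ along the alternating method. But by Lemma~\ref{Lem:Alt}, $H_\step(z_{k+1}) - H_\step(z_k) = C_H(z_{k+1},z_k)$, and since $H$ is separable one checks $C_H(z',z) = C_f(x',x) + C_g(y',y)$; combined with the telescoped $H(z_0)-H(z_K)$ this is exactly what is needed, so the two $H(z_0)-H(z_K)$ contributions and the commutator sums reorganize into $H_\step(z_K) - H_\step(z_0)$, giving the claimed formula after dividing by $\step$.

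The main obstacle I anticipate is bookkeeping the midpoint averaging $\tfrac{p_k+p_{k+1}}{2}$ correctly so that it produces the \emph{symmetrized} pairing $\tfrac12(\nabla f(x_k)+\nabla f(x_{k+1}))^\top(x_{k+1}-x_k)$ that matches the Bregman commutator rather than a plain Bregman divergence — this is precisely why the regret was defined with the average iterate, and getting the signs and the split between "telescoping $f$-difference", "$D_f$ boundary terms", and "commutator remainder" exactly right is the delicate part. A secondary check is confirming $C_H = C_f + C_g$ under separability, which is immediate from the definitions but should be stated. Everything else is the routine convex-conjugate identity $p_k = \nabla f(x_k)$ and summation by parts.
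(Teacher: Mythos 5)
Your route is the paper's own: pass to dual coordinates, expand each player's regret using $Aq_k = -\tfrac{1}{\step}(x_{k+1}-x_k)$ and $A^\top p_{k+1} = \tfrac{1}{\step}(y_{k+1}-y_k)$, recognize the midpoint pairing $\tfrac12\langle\nabla f(x_k)+\nabla f(x_{k+1}),x_{k+1}-x_k\rangle$ as $f(x_{k+1})-f(x_k)-C_f(x_{k+1},x_k)$, telescope, and close with Lemma~\ref{Lem:Alt}. That strategy does prove the theorem, but your bookkeeping contains a double count. Once the telescoped sum $\sum_k\big(f(x_{k+1})-f(x_k)\big)=f(x_K)-f(x_0)$ is combined with $\nabla f(x)^\top(x_K-x_0)$, the difference $f(x_0)-f(x_K)$ is \emph{entirely} absorbed into the Bregman boundary terms: the correct collected identity is
\begin{align*}
\step\, R_{1,K}(p) \;=\; D_f(x_0,x) - D_f(x_K,x) + \sum_{k=0}^{K-1} C_f(x_{k+1},x_k),
\end{align*}
with no separate $f(x_0)-f(x_K)$ summand. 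Your ``Collecting'' line lists that difference a second time, and the error propagates to the assembled expression $D_H(z_0,z)-D_H(z_K,z)+H(z_0)-H(z_K)+\sum_k C_H(z_{k+1},z_k)$.

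Because of that spurious term, the reconciliation you propose at the end cannot go through as stated: you need $\sum_k C_H(z_{k+1},z_k) = H_\step(z_K)-H_\step(z_0) - \big(H(z_0)-H(z_K)\big)$, but Lemma~\ref{Lem:Alt} --- which you quote correctly in the very same sentence --- gives $\sum_k C_H(z_{k+1},z_k) = H_\step(z_K)-H_\step(z_0)$ with no extra $H$-difference. The two are consistent only if $H$ is exactly conserved along the iterates, which is false in general (that failure is precisely why the modified energy $H_\step$ is introduced). The fix is local: drop the extra $H(z_0)-H(z_K)$ from the collected formula, after which Lemma~\ref{Lem:Alt} applies verbatim and yields the claimed identity. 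So the approach is sound and identical to the paper's, but as written the argument relies on two mutually compensating errors that would not survive being carried out in full.
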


Since $H$ is convex, we can further bound $D_H(z_K, z) \ge 0$, so from the above we get:
\begin{align*}
R_K(p,q) \le \tfrac{1}{\step} \left(D_H(z_0,z) + H_\step(z_K) - H_\step(z_0) \right).
\end{align*}
The first term in the bound above is a constant that only depends on the initial point.
If we assume the domains are bounded, then this first term is also bounded.
Thus, to control the cumulative regret, we need to bound the increase in the modified energy, which we can do via Theorem~\ref{Thm:AltSmooth} under third-order smoothness.
Completing this step yields the proof of Theorem~\ref{Thm:RegretSmooth}; see Section~\ref{Sec:ThmRegretSmoothProof}.

\section{Discussion}

In this paper we study the alternating mirror descent algorithm for constrained min-max games, and showed
that it achieves a better regret bound than the classical simultaneous mirror descent. 
Our results extend the findings of~\cite{BGP20} from the unconstrained to the constrained setting.
We have utilized an interpretation of alternating mirror descent as an alternating discretization of the skew-gradient flow in the dual space, and linked the total regret of the players with the growth of the modified energy function.
Our analysis highlights the connections between min-max games and Hamiltonian structures, which helps pave the way toward a closer interplay between classical numerical methods and modern algorithmic questions motivated by machine learning applications.

Our results leave many interesting open questions.
First, our bound in Theorem~\ref{Thm:AltSmooth} grows with the number of iterations, and does not yield constant regret as in the unconstrained case.
In the special case of identity payoff matrix, in which the skew-gradient flow dynamics has a Hamiltonian and symplectic structure, we conjecture that for sufficiently nice energy functions, the iterates of the alternating method stay uniformly bounded, as in the unconstrained setting.
In the Appendix, we provide some empirical evidence supporting this conjecture.
Resolving this question requires making concrete classical bounds from numerical methods, which may be of independent interest.

Second, we have focused on the simple case of two-player bilinear game, which already presents non-trivial behavior.
It would be interesting to understand the more general setting of multi-player games with general payoffs, in which the notion of ``alternating'' play can take many possible forms.
It would also be interesting to study the asynchronous setting in which the players make their moves in decentralized or randomized fashions.
This will help us understand and control the global behavior of multi-agent systems and how they emerge from simple local or individual strategies.

\section{Acknowledgments}
MT is grateful for partial support by NSF DMS-1847802, NSF ECCS-1936776 (MT), and Cullen-Peck Scholar Award. 
GP acknowledges that this research/project is supported in part by the National Research Foundation, Singapore under its AI Singapore Program (AISG Award No: AISG2-RP-2020-016), NRF 2018 Fellowship
NRF-NRFF2018-07, NRF2019-NRF-ANR095 ALIAS grant, grant PIE-SGP-AI-2020-01, AME Programmatic Fund (Grant No. A20H6b0151) from the Agency for Science, Technology and Research (A*STAR) and Provost’s Chair Professorship grant RGEPPV2101.

\appendix

\section{Helper Lemmas}

\subsection{Properties of Bregman Divergence}
\label{App:Bregman}

Let $f \colon \R^m \to \R$ be a differentiable function.
Recall the {\em Bregman divergence} of $f$ is 
given by:
\begin{align*}
D_f(x',x) = f(x') - f(x) - \langle \nabla f(x), x'-x \rangle
\end{align*}
for all $x,x' \in \R^m$.
The Bregman divergence is in general not symmetric: $D_f(x',x) \neq D_f(x,x')$.

We recall the three-point identity for Bregman divergence:
\begin{align*}
    D_f(x,y) - D_f(x,z) - D_f(z,y) = \langle \nabla f(z) - \nabla f(y), x-z \rangle.
\end{align*}

Recall we say $f$ is {\em $\alpha$-strongly convex} with respect to a norm $\|\cdot\|$ if 
$$D_f(x',x) \ge \frac{\alpha}{2} \|x'-x\|^2.$$
For example, if $f(x) = \frac{1}{2} x^\top Mx$ is a quadratic function defined by a symmetric positive definite matrix $M \in \R^{m \times m}$, then $f$ is strongly convex in the $\ell_2$-norm with strong convexity constant equal to the smallest eigenvalue of $M$.
If $f(x) = \sum_{i=1}^m x_i \log x_i$ is the negative entropy function defined on the simplex $\Delta_m \subset \R^m$, then $f$ is $1$-strongly convex in the $\ell_1$-norm.

Recall we say $f$ is {\em $\alpha$-gradient dominated} with respect to the dual norm $\|\cdot\|_\ast$ if
$$\|\nabla f(x)\|^2_* \ge 2\alpha (f(x) - \min f)$$
for all $x \in \R^m$, where $\min f = \min_x f(x)$ is the minimum value of $f$.

We recall that strong convexity implies gradient domination with the same constant.

\subsection{Properties of Bregman commutator}
\label{Sec:BregmanCommutator}

The {\em Bregman commutator} of $f$ is the function $C_f \colon \R^m \times \R^m \to \R$ that measures the failure of the Bregman divergence to be symmetric:
\begin{align*}
C_f(x',x) &= \frac{1}{2} \big(D_f(x',x) - D_f(x,x')) \\
&= f(x') - f(x) - \frac{1}{2} \langle \nabla f(x') + \nabla f(x), x'-x \rangle.
\end{align*}
By definition, the Bregman commutator is antisymmetric:
\begin{align*}
C_f(x,x') = -C_f(x',x).
\end{align*}

For example, if $f$ is a quadratic function, then the Bregman divergence is also a quadratic function, which is symmetric, and hence the Bregman commutator vanishes: $C_f(x,x') = 0$ for all $x,x' \in \R^m$.
The converse also holds: If the Bregman commutator vanishes, then the function must be a quadratic.

\begin{lemma}\label{Lem:Vanishing}
If $C_f(x',x) = 0$ for all $x',x \in \R^m$, then $f$ is a quadratic function.
\end{lemma}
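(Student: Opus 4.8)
The plan is to show that vanishing of the Bregman commutator forces the third-order derivative of $f$ to be zero everywhere, which (on all of $\R^m$) means $f$ is a polynomial of degree at most two, i.e.\ a quadratic function. The key observation is that the Bregman commutator $C_f(x',x) = f(x') - f(x) - \tfrac12\langle \nabla f(x') + \nabla f(x), x'-x\rangle$ is, up to sign, the error of the trapezoidal rule for integrating $t \mapsto \langle \nabla f(x + t(x'-x)), x'-x\rangle$ on $[0,1]$. Since the trapezoidal error vanishes identically for every choice of endpoints, the integrand — a smooth function of $t$ — must have a special structure; in fact, if $f$ is three-times differentiable one computes that $C_f$ has a leading term proportional to $\nabla^3 f$ contracted against $(x'-x)^{\otimes 3}$.

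Concretely, first I would fix $x \in \R^m$ and a direction $v \in \R^m$, and set $x' = x + sv$ for a scalar parameter $s$. Define $\varphi(s) := C_f(x+sv, x)$. The hypothesis gives $\varphi(s) = 0$ for all $s$ and all $x, v$. Next I would Taylor-expand $\varphi$ in $s$ around $s = 0$: writing out $f(x+sv)$, $\nabla f(x+sv)$, and the inner product, the constant, linear, and quadratic terms in $s$ cancel automatically (this is why one needs to go to third order), and the coefficient of $s^3$ is a nonzero constant multiple of $\nabla^3 f(x)[v,v,v]$. Setting this coefficient to zero for all $x$ and all $v$, and using that a symmetric trilinear form vanishing on the diagonal vanishes identically (by polarization), gives $\nabla^3 f(x) = 0$ for all $x$. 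Then integrating three times, or invoking the standard fact that a function on $\R^m$ with identically vanishing third derivative is a quadratic polynomial, completes the proof.

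One subtlety to address: the statement as given in the excerpt only assumes $f$ is differentiable, not three-times differentiable, whereas the argument above differentiates $\varphi$ three times. I would handle this either by noting that in the paper's setting $f = \phi^*$ is as smooth as needed (or that the lemma is applied only in smooth cases), or — to keep the lemma self-contained — by running a purely algebraic/finite-difference version: evaluate $C_f$ at cleverly chosen combinations of points to extract a ``discrete third derivative'' and show directly that $f(x') - f(x) - \tfrac12\langle\nabla f(x')+\nabla f(x), x'-x\rangle \equiv 0$ implies the second-order Taylor expansion of $f$ is exact, again yielding a quadratic. The cleanest route, and the one I would present, is: differentiate $C_f(x+sv,x)$ twice in $s$ to get an identity $\langle \nabla f(x+sv) - \nabla f(x) - s\,\nabla^2 f(x+sv)v,\ v\rangle$-type expression equal to zero, then differentiate once more to isolate $\langle \nabla^3 f(x+sv)[v,v], v\rangle = 0$.

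The main obstacle is the bookkeeping in the Taylor expansion: one must carefully track that the $s^0$, $s^1$, and $s^2$ coefficients all cancel so that the $s^3$ coefficient is genuinely the first nontrivial one, and then pin down the exact nonzero constant (it will be something like $-\tfrac{1}{12}$, matching the trapezoidal-rule error constant and consistent with the bound $\tfrac1{12}$ appearing in Theorem~\ref{Thm:AltSmooth} and Lemma~\ref{Lem:ThirdOrder}). After that, the polarization step (diagonal values of a symmetric multilinear form determine it) and the ``zero third derivative implies quadratic'' step are both standard. The conceptual content is entirely in the identification of $C_f$ with a trapezoidal-rule quadrature error.
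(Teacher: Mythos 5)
Your strategy is sound and the computation you sketch does work: for three-times differentiable $f$, expanding $C_f(x+sv,x)$ in $s$ gives $-\tfrac{s^3}{12}\nabla^3 f(x)[v,v,v]+o(s^3)$ (the constant is indeed $-\tfrac{1}{12}$, consistent with Lemma~\ref{Lem:ThirdOrder}), so $C_f\equiv 0$ forces $\nabla^3 f\equiv 0$ by polarization and hence $f$ is quadratic. However, this is a genuinely different route from the paper's, and it proves a weaker statement. The paper's proof is purely algebraic and uses only the differentiability of $f$ already implicit in the definition of $D_f$: after subtracting the affine part so that $\tilde f(0)=0$ and $\nabla\tilde f(0)=0$, the symmetry $D_{\tilde f}(x,0)=D_{\tilde f}(0,x)$ yields the Euler-type identity $\tilde f(x)=\tfrac12\langle\nabla\tilde f(x),x\rangle$, the general symmetry then yields $\langle\nabla\tilde f(x),x'\rangle=\langle\nabla\tilde f(x'),x\rangle$, and a short computation shows $\nabla\tilde f$ is additive, hence linear, hence $\tilde f$ is quadratic. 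What the paper's approach buys is exactly the hypothesis you flag as a ``subtlety'': the lemma as stated assumes only that $f$ is differentiable, and your main route needs three derivatives. Your proposed fallback (a finite-difference extraction of a discrete third derivative) would close this, but you do not carry it out, so as written your argument establishes the lemma only under an added $C^3$ assumption; that is the one substantive gap to repair if you want the statement at the paper's level of generality. Your approach, on the other hand, buys a quantitative refinement for free: the same expansion immediately gives the bound $|C_f(x',x)|\le\tfrac{M}{12}\|x'-x\|^3$ of Lemma~\ref{Lem:ThirdOrder}, so the two lemmas come out of a single calculation.
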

\begin{proof}
By assumption, we have $D_f(x',x) = D_f(x,x')$ for all $x,x' \in \R^m$. 
It suffices to argue that $\tilde f(x) := f(x)-f(0)-\langle x, \nabla f(0) \rangle$ is a quadratic function, for then $f$ is also quadratic.
Since this is a linear transformation, it does not affect the Bregman divergence, and thus by assumption we have $D_{\tilde f}(x',x) = D_{\tilde f}(x,x')$ for all $x,x' \in \R^m$.
Note that by definition, $\tilde f(0)=0$ and $\nabla \tilde f(0)=0$. 
Now, the relation $D_{\tilde f}(x,0) = D_{\tilde f}(0,x)$ implies that $\tilde f (x)= \frac{1}{2} \langle\nabla \tilde f(x),x \rangle$.
Thus, $D_{\tilde f}(x',x) = D_{\tilde f}(x,x')$ implies that $\langle \nabla \tilde f (x),x' \rangle = \langle\nabla \tilde f(x'),x \rangle$  for all $x,x' \in \R^m$. 
This in turn implies, for any $x,x'$, and $y$,
\begin{align*}
    \langle\nabla \tilde f(x+x'), y \rangle &= \langle \nabla  \tilde f(y), x+x' \rangle \\
    &=  \langle \nabla  \tilde f(y), x\rangle +\langle  \nabla  \tilde f(y), x' \rangle \\
    &=  \langle \nabla  \tilde f(x), y\rangle +\langle  \nabla  \tilde f(x'), y \rangle \\
    &= \langle \nabla  \tilde f(x)+ \nabla  \tilde f(x'), y \rangle.
\end{align*}
This shows that $\nabla \tilde f$ is linear, which means $\tilde f$ is quadratic, and hence $f$ is also quadratic.
\end{proof}

\subsubsection{Bound on Bregman commutator}
\label{Sec:CommutatorProperty}

Recall $f$ is $M$-smooth of order $3$ if $f \colon \R^m \to \R$ is three-times differentiable and for all $x \in \R^m$:
$$\|\nabla^3 f(x)\|_{\op} \le M$$
where $\|\cdot\|_{\op}$ is the operator norm; concretely, this means for all $x \in \R^m$ and $v \in \R^m$ with $\|v\| = 1$,
\begin{align*}
|\nabla^3 f(x)[v,v,v]| \le M.
\end{align*}
Equivalently, $\nabla^2 f$ is $M$-Lipschitz.

We have the following bound on Bregman commutator under third-order smoothness.

\begin{lemma}\label{Lem:ThirdOrder}
Assume $f$ is $M$-smooth of order $3$ with respect to a norm $\|\cdot\|$.
Then for all $x',x \in \R^m$:
\begin{align}
|C_f(x,x')| \le \frac{M}{12} \|x-x'\|^3.
\end{align}
\end{lemma}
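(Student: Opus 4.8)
The plan is to bound the Bregman commutator $C_f(x,x')$ by writing it as the remainder in a Taylor-type expansion and then controlling that remainder with the third-order smoothness hypothesis. Recall that
\[
C_f(x,x') = f(x) - f(x') - \tfrac{1}{2}\langle \nabla f(x) + \nabla f(x'), x - x'\rangle,
\]
which is exactly (minus) the error of the trapezoidal rule applied to the one-dimensional function $\varphi(t) := f(x' + t(x-x'))$ on $[0,1]$: indeed $\int_0^1 \varphi(t)\,dt = \int_0^1 \frac{d}{dt} F(t)\,dt$ is awkward, so instead I would directly use $\int_0^1 \varphi'(t)\,dt = \varphi(1) - \varphi(0) = f(x) - f(x')$ together with the observation that $\varphi'(0) = \langle \nabla f(x'), x-x'\rangle$ and $\varphi'(1) = \langle \nabla f(x), x-x'\rangle$. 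Thus $C_f(x,x') = \int_0^1 \varphi'(t)\,dt - \tfrac12(\varphi'(0) + \varphi'(1))$, which is precisely the trapezoidal quadrature error for $\varphi'$ on $[0,1]$.

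The key step is then the classical error formula for the trapezoidal rule: for a $C^2$ function $h$ on $[0,1]$,
\[
\int_0^1 h(t)\,dt - \tfrac12(h(0)+h(1)) = -\tfrac{1}{12} h''(\xi)
\]
for some $\xi \in (0,1)$ (or, to avoid invoking a mean-value point, the integral form $-\int_0^1 \frac{1}{2} t(1-t) h''(t)\,dt$, using $\int_0^1 \tfrac12 t(1-t)\,dt = \tfrac{1}{12}$). Applying this with $h = \varphi'$, so $h'' = \varphi'''$, gives $|C_f(x,x')| \le \tfrac{1}{12}\sup_{t\in[0,1]} |\varphi'''(t)|$. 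Finally $\varphi'''(t) = \nabla^3 f(x' + t(x-x'))[x-x', x-x', x-x']$, and by the definition of $M$-smoothness of order $3$ this is bounded in absolute value by $M\|x-x'\|^3$ (first for unit vectors, then by trilinearity for general $x-x'$). Combining yields $|C_f(x,x')| \le \tfrac{M}{12}\|x-x'\|^3$, as claimed.

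I expect the only real care to be in the bookkeeping: making sure the trapezoidal error is matched to $\varphi'$ rather than $\varphi$, and correctly computing the constant $\int_0^1 \tfrac12 t(1-t)\,dt = \tfrac{1}{12}$ (equivalently, that the trapezoidal error constant is $\tfrac{1}{12}$). There is no genuine obstacle — the statement is a direct consequence of a standard quadrature estimate, and the antisymmetry $C_f(x,x') = -C_f(x',x)$ is consistent with the bound being symmetric in $\|x-x'\|$. I would present the integral-remainder version to keep the argument self-contained and to sidestep any smoothness subtleties from the mean-value form.
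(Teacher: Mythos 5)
Your proof is correct, and the route is a genuinely cleaner packaging of the same underlying computation. The paper instead Taylor-expands $D_f(x',x)$ and $D_f(x,x')$ separately with second-order integral remainders, subtracts to get $C_f(x',x)=\frac{1}{2}\int_0^1(1-2t)\,\varphi''(t)\,dt$ (in your notation), and then extracts the third derivative by symmetrizing the integral about $t=\tfrac12$ and applying the fundamental theorem of calculus to the difference of Hessians. Your observation that $C_f(x,x')=\int_0^1\varphi'(t)\,dt-\tfrac12(\varphi'(0)+\varphi'(1))$ is exactly the trapezoidal quadrature error for $h=\varphi'$, followed by the Peano-kernel identity
\begin{align*}
\int_0^1 h(t)\,dt-\tfrac12\bigl(h(0)+h(1)\bigr)=-\int_0^1 \tfrac12 t(1-t)\,h''(t)\,dt,
\end{align*}
is one further integration by parts of the paper's expression (the kernel $\tfrac12(1-2t)$ against $\varphi''$ becomes $-\tfrac12 t(1-t)$ against $\varphi'''$), and it reaches the constant $\int_0^1\tfrac12 t(1-t)\,dt=\tfrac1{12}$ in a single standard step, avoiding the symmetrization and mean-value manipulations. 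What the paper's longer route buys is an explicit intermediate formula for $C_f$ in terms of Hessians only, which requires one less degree of differentiability to state; what yours buys is brevity and a self-contained reduction to a classical quadrature fact. The only points to make explicit in a write-up are the homogeneity step $|\nabla^3 f(z)[u,u,u]|\le M\|u\|^3$ (the definition is stated for unit vectors) and that $M$-smoothness of order $3$ supplies the $C^2$ regularity of $\varphi'$ needed for the trapezoid error formula; both are immediate.
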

\begin{proof}
Recall the Taylor expansion 
$$r(1) = r(0) + \dot r(0) + \int_0^1 (1-t) \ddot r(t) \, dt$$
for any twice-differentiable function $r \colon [0,1] \to \R$.
Applying this with $r(t) = f(x + t(x'-x))$ gives
\begin{align*}
f(x') = f(x) + \langle \nabla f(x), x'-x \rangle + \int_0^1 (1-t) \langle (x'-x), \nabla^2 f(x + t(x'-x)) \, (x'-x) \rangle \, dt,
\end{align*}
or equivalently,
\begin{align*}
D_f(x',x) = \left\langle (x'-x), \Big( \int_0^1 (1-t) \nabla^2 f(x + t(x'-x)) \, dt \Big) (x'-x) \right \rangle.
\end{align*}
Similarly,
\begin{align*}
D_f(x,x') = \left\langle (x'-x), \Big( \int_0^1 t \nabla^2 f(x + (1-t)(x'-x)) \, dt \Big) (x'-x) \right \rangle.
\end{align*}
Combining the two equations above, we obtain
\begin{align}
C_f(x',x) &= \frac{1}{2}\big( D_f(x',x) - D_f(x,x') \big) \notag \\
&= \frac{1}{2} \left\langle (x'-x), \Big( \int_0^1 (1-2t) \nabla^2 f(x + t(x'-x)) \, dt \Big) (x'-x) \right \rangle.  \label{Eq:BregmanCalc1}
\end{align}
For $0 \le t \le 1$, let $s = t-\frac{1}{2}$ and $s' = -s$.
We denote $x_s \equiv x + (s+\frac{1}{2})(x'-x)$.
Then we can write the integral above as
\begin{align}
\frac{1}{2} \int_0^1 (1-2t) \nabla^2 f(x_{t-\frac{1}{2}}) \, dt
  &= \int_{-\frac{1}{2}}^{\frac{1}{2}} (-s) \nabla^2 f(x_s) \, ds \notag \\
  &= \int_{-\frac{1}{2}}^0 (-s) \nabla^2 f(x_s) \, ds + \int_0^{\frac{1}{2}} (-s) \nabla^2 f(x_s) \, ds \notag \\
  &= \int_0^{\frac{1}{2}} s' \nabla^2 f(x_{-s'}) \, ds' + \int_0^{\frac{1}{2}} (-s) \nabla^2 f(x_s) \, ds \notag \\
  &= \int_0^{\frac{1}{2}} s \left(\nabla^2 f(x_{-s}) - \nabla^2 f(x_s) \right) ds.  \label{Eq:BregmanCalc2}
\end{align}
By the mean-value theorem and using $x_{-s}-x_s = -2s(x'-x)$, we can write
\begin{align*}
\nabla^2 f(x_{-s}) - \nabla^2 f(x_s)
&= \int_0^1 \nabla^3 f(x_s + u(x_{-s}-x_s))[x_{-s}-x_s] \, du \\
&= -2s\int_0^1 \nabla^3 f(x_s + u(x_{-s}-x_s))[x'-x] \, du.
\end{align*}
Plugging this in to~\eqref{Eq:BregmanCalc2}, we obtain
\begin{align*}
\frac{1}{2} \int_0^1 (1-2t) \nabla^2 f(x_{t-\frac{1}{2}}) \, dt
= -2\int_0^{\frac{1}{2}} \int_0^1 s^2  \nabla^3 f(x_s + u(x_{-s}-x_s))[x'-x] \, du \, ds.
\end{align*}
Plugging this in to~\eqref{Eq:BregmanCalc1}, we then obtain
\begin{align*}
C_f(x',x) = -2\int_0^{\frac{1}{2}} \int_0^1 s^2  \nabla^3 f(x_s + u(x_{-s}-x_s))[x'-x, x'-x, x'-x] \, du \, ds.
\end{align*}
Using the third-order smoothness property of $f$, we can then bound
\begin{align*}
|C_f(x',x)| &\le 2\int_0^{\frac{1}{2}} \int_0^1 s^2  \big|\nabla^3 f(x_s + u(x_{-s}-x_s))[x'-x, x'-x, x'-x]\big| \, du \, ds \\
&\le 2 \int_0^{\frac{1}{2}} \int_0^1 s^2  M \|x'-x\|^3 \, du \, ds \\
&= 2M \|x'-x\|^3 \cdot \frac{1}{3} \left(\frac{1}{2}\right)^3 \\
&= \frac{M \|x'-x\|^3}{12}
\end{align*}
as desired.
\end{proof}

\begin{example}\label{Ex:LogPartition}
Consider $f(x) = \log \sum_{i=1}^m e^{x_i}$.
This is the log-partition function of an exponential family distribution $p_x$ over a finite set $[m] := \{1,\dots,m\}$ with $p_x(i) = e^{x_i - f(x)}$.
Then $\nabla^3 f(x)$ is the third-order covariance, i.e., for all $v \in \R^m$,
\begin{align*}
\nabla^3 f(x)[v,v,v] = \Cov_{I \sim p_x}^3(v_I) = \E_{I \sim p_x} [(v_I - \bar v)^3]
\end{align*}
where $\bar v = \E_{I \sim p_x}[v_I] = \sum_{i=1}^m v_i p_x(i)$.
Suppose $\|v\|_\infty = \max_{i} |v_i| = 1$,
so $|\bar v| \le 1$, and $|v_i - \bar v| \le 2$.
Then
\begin{align*}
|\nabla^3 f(x)[v,v,v]| \le \E_{I \sim p_x} [|v_I - \bar v|^3] \le \E_{I \sim p_x} [2^3] = 8.
\end{align*}
This shows that $f$ is $8$-smooth of order $3$ with respect to the $\ell_\infty$-norm $\|\cdot\|_\infty$.
\end{example}

\section{Discretization of Skew-Gradient Flow}
\label{sec:Discretization}

\subsection{Forward Discretization of Skew-Gradient Flow}
\label{Sec:Fwd}

Consider the forward method to discretize the skew-gradient flow~\eqref{Eq:SGF} with step size $\step > 0$:
\begin{align}\label{Eq:Fwd1}
    z_{k+1} = z_k - \step J \nabla H(z_k).
\end{align}

In terms of the components $z_k = (x_k,y_k)$, this corresponds to the simultaneous forward method:
\begin{align*}
    x_{k+1} &= x_k - \step A \nabla g(y_k) \\
    y_{k+1} &= y_k + \step A^\top \nabla f(x_k).
\end{align*}
For the min-max game application when $f = \nabla \phi^*$ and $g = \psi^*$, this corresponds to the two players following the simultaneous mirror descent algorithm:
\begin{align*}
    p_{k+1} &= \arg\min_{p \in \P} \left\{ p^\top A q_k + \frac{1}{\step} D_\phi(p,p_k) \right\} \\
    q_{k+1} &= \arg\min_{q \in \Q} \left\{ -p_k^\top A q + \frac{1}{\step} D_\psi(q,q_k) \right\}.
\end{align*}

We show the following properties of the forward method.
Here let $\|\cdot\| = \|\cdot\|_2$ be the $\ell_2$-norm.
Recall $H$ is $L$-smooth if $\nabla H(z)$ is $L$-Lipschitz:
$$\|\nabla H(z) - \nabla H(z')\|_* \le L \|z-z'\|.$$
Equivalently,  $\|\nabla^2 H(z)\|_{\op} \le L$ where $\|\cdot\|_{\op}$ is the operator norm, and $\nabla^2 H(z)$ is the Hessian matrix of second derivatives.

\begin{theorem}\label{Thm:Fwd}
Let $z_k$ evolve following the forward method~\eqref{Eq:Fwd1}.
We have the following properties:
\begin{enumerate}
    \item If $H$ is convex, then $H(z_{k+1}) \ge H(z_k).$
    \item Assume $m = n$, and assume $A \in \R^{n \times n}$ is invertible with smallest singular value $\alpha_{\min} > 0$.
    Assume $H$ is $\mu$-strongly convex, and let $z^* = \arg\min_{z} H(z)$.
    Then:
    \begin{align*}
    H(z_k) - H(z^\ast) \ge (1+\step^2\alpha_{\min}^2 \, \mu^2)^k \, (H(z_0) - H(z^\ast)).
    \end{align*}
    \item Assume $H$ is convex and $L_2$-smooth.
    Let $\alpha_{\max}$ be the maximum singular value of $A$.
    Then:
    \begin{align*}
    H(z_k) - H(z^\ast) \le (1+\step^2 \alpha^2_{\max} \, L^2)^k \, (H(z_0)-H(z^\ast))
    \end{align*}
\end{enumerate}
\end{theorem}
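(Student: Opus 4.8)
The plan is to analyze the energy along one step of the forward method and then iterate. For part (1), I would compute $H(z_{k+1}) - H(z_k)$ by expanding with convexity: since $H$ is convex, $H(z_{k+1}) \ge H(z_k) + \langle \nabla H(z_k), z_{k+1}-z_k \rangle$. Substituting $z_{k+1}-z_k = -\step J \nabla H(z_k)$ gives $H(z_{k+1}) \ge H(z_k) - \step \langle \nabla H(z_k), J \nabla H(z_k)\rangle = H(z_k)$, where the inner product vanishes because $J$ is skew-symmetric (the same computation as in the continuous-time energy-conservation lemma, but with the convexity inequality replacing the exact derivative). This is the easy case and sets the template.

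For part (2), I would again use the one-step update and the strong-convexity lower bound in the form $H(z_{k+1}) \ge H(z_k) + \langle \nabla H(z_k), z_{k+1}-z_k\rangle + \frac{\mu}{2}\|z_{k+1}-z_k\|^2$. The linear term again vanishes by skew-symmetry of $J$, leaving $H(z_{k+1}) \ge H(z_k) + \frac{\mu}{2}\step^2 \|J\nabla H(z_k)\|^2$. Since $A$ is invertible, $J$ is invertible with smallest singular value $\alpha_{\min}$, so $\|J \nabla H(z_k)\|^2 \ge \alpha_{\min}^2 \|\nabla H(z_k)\|^2$. Then I would invoke gradient domination (implied by $\mu$-strong convexity, as noted in Appendix~\ref{App:Bregman}): $\|\nabla H(z_k)\|^2 \ge 2\mu(H(z_k) - H(z^*))$. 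Combining, $H(z_{k+1}) - H(z^*) \ge (1 + \step^2 \alpha_{\min}^2 \mu^2)(H(z_k) - H(z^*))$, and iterating $k$ times gives the claimed bound.

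For part (3), I would mirror the argument but use $L_2$-smoothness for an \emph{upper} bound on the step: $H(z_{k+1}) \le H(z_k) + \langle \nabla H(z_k), z_{k+1}-z_k\rangle + \frac{L_2}{2}\|z_{k+1}-z_k\|^2 = H(z_k) + \frac{L_2}{2}\step^2\|J \nabla H(z_k)\|^2$, again killing the linear term by skew-symmetry. Now $\|J\nabla H(z_k)\| \le \alpha_{\max}\|\nabla H(z_k)\|$, and I would need an upper bound $\|\nabla H(z_k)\|^2 \le 2L_2(H(z_k) - H(z^*))$; this is the standard consequence of $L_2$-smoothness and convexity (co-coercivity / the bound $\|\nabla H(z)\|^2 \le 2L_2(H(z)-\min H)$). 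Combining gives $H(z_{k+1}) - H(z^*) \le (1 + \step^2 \alpha_{\max}^2 L_2^2)(H(z_k)-H(z^*))$, and iterating yields the result.

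The main obstacle, such as it is, is bookkeeping rather than depth: making sure the gradient-domination inequality (for part 2) and its smoothness-based counterpart $\|\nabla H\|^2 \le 2L_2(H - \min H)$ (for part 3) are correctly quoted, that the singular-value bounds on $J = \bigl(\begin{smallmatrix} 0 & A \\ -A^\top & 0\end{smallmatrix}\bigr)$ translate from those of $A$ (they do, since $J^\top J = \mathrm{diag}(A A^\top, A^\top A)$, so $J$ and $A$ share singular values up to multiplicity), and that part 3 genuinely requires only convexity plus smoothness and not strong convexity. Everything else is the repeated observation that the skew-symmetric drift term contributes nothing to first order, so the energy drift is entirely governed by the second-order (curvature) term, whose sign is fixed by convexity/smoothness.
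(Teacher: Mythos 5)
Your proposal is correct and follows essentially the same route as the paper's proof: kill the linear term by skew-symmetry of $J$, control the quadratic term via strong convexity (lower bound) or smoothness (upper bound), pass from $\|J\nabla H\|$ to $\|\nabla H\|$ using the singular values of $A$, and close the recursion with gradient domination or its smoothness counterpart $\|\nabla H(z)\|^2 \le 2L_2(H(z)-H(z^*))$. No gaps to report.
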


The properties above say that if the energy function is convex, it is monotonically increasing along the forward method; furthermore, it is increasing exponentially fast if $H$ is strongly convex.
On the other hand, if $H$ is convex and smooth, then it is increasing at most exponentially fast.

We note that while the assumptions in Theorem~\ref{Thm:Fwd} are stated with respect to the $\ell_2$-norm $\|\cdot\|_2$, since all norms in $\R^{m+n}$ are equivalent, the results also hold for any other norm $\|\cdot\|$ (with smoothness and strong convexity constants that may have dimension dependence).
Here for simplicity we provide the proof for the $\ell_2$-norm.
An explicit calculation on a quadratic example shows that the rates in Theorem~\ref{Thm:Fwd} are tight; see Example~\ref{Ex:Quad}.

\begin{proof}[Proof of Theorem~\ref{Thm:Fwd}]
\begin{enumerate}
    \item First assume $H$ is convex.
    By Jensen's inequality,
    \begin{align*}
    H(z_{k+1}) - H(z_k) \ge \langle \nabla H(z_k), z_{k+1} - z_k \rangle = \step \langle \nabla H(z_k), J \nabla H(z_k) \rangle = 0.
    \end{align*}
    This shows the forward method always increases the energy function: $H(z_{k+1}) \ge H(z_k)$.

    \item Assume $m = n$, and $A \in \R^{n \times n}$ has smallest singular value $\alpha_{\min} > 0$.
    This implies $A^\top A \succeq \alpha_{\min}^2 I_n$ and $AA^\top \succeq \alpha_{\min}^2 I_n$, so $J^\top J \succeq \alpha_{\min}^2 I_{2n}$.
    
    Assume further that $H$ is $\mu$-strongly convex.
    This implies $H$ is $\mu$-gradient dominated: 
    \begin{align*}
    \|\nabla H(z)\|^2_2 \ge 2\mu (H(z) - H(z^\ast))
    \end{align*}
    for all $z \in \R^{2n}$.
    Then by the $\mu$-strong convexity of $H$, along the forward method,
    \begin{align*}
    H(z_{k+1}) - H(z_k) 
    &\ge \langle \nabla H(z_k), z_{k+1} - z_k \rangle  + \frac{\mu}{2} \|z_{k+1} - z_k \|^2_2 \\
    &= -\step \langle \nabla H(z_k), J \nabla H(z_k) \rangle + \frac{\step^2 \mu}{2} \|J \nabla H(z_k) \|^2_2 \\
    &= 0 + \frac{\step^2 \mu}{2} \langle \nabla H(z_k), (J^\top J) \, \nabla H(z_k) \rangle \\
    &\ge \frac{\step^2 \alpha_{\min}^2 \mu}{2} \|\nabla H(z_k) \|^2_2 \\
    &\ge \step^2 \alpha^2 \mu^2 (H(z_k) - H(z^\ast))
    \end{align*}
    This implies $H(z_{k+1}) - H(z^\ast) \ge (1+\step^2 \alpha_{\min}^2 \mu^2) (H(z_k) - H(z^\ast))$.
    Therefore, the forward method increases the energy function exponentially fast:
    \begin{align*}
    H(z_k) - H(z^\ast) \ge (1+\step^2 \alpha_{\min}^2 \mu^2)^k (H(z_0) - H(z^\ast)).
    \end{align*}
    
    \item Since $A$ has maximum singular value $\alpha_{\max}$, we have $A^\top A \preceq \alpha_{\max}^2 I_n$ and $AA^\top \preceq \alpha^2_{\max} I_m$, so $J^\top J \preceq \alpha^2_{\max} I_{m+n}$.

    Since $H$ is convex and $L_2$-smooth, we have for all $z \in \R^{m+n}$ (see~e.g.~\cite{Nesterov18}):
    \begin{align*}
    H(z) \ge H(z^\ast)  + \frac{1}{2L_2}\|\nabla H(z)\|^2_2.
    \end{align*}

    Then by the $L_2$-smoothness of $H$, along the forward method,
    \begin{align*}
    H(z_{k+1}) - H(z_k) 
    &\le \langle \nabla H(z_k), z_{k+1} - z_k \rangle  + \frac{L_2}{2} \|z_{k+1} - z_k \|^2_2 \\
    &= -\step \langle \nabla H(z_k), J \nabla H(z_k) \rangle + \frac{\step^2 L_2}{2} \|J \nabla H(z_k) \|^2_2 \\
    &= 0 + \frac{\step^2 L_2}{2} \langle \nabla H(z_k), (J^\top J) \, \nabla H(z_k) \rangle \\
    &\le \frac{\step^2 \alpha^2_{\max} \, L_2}{2} \|\nabla H(z_k) \|^2_2 \\
    &\le \step^2 \alpha^2_{\max} \, L_2^2 (H(z_k) - H(z^\ast))
    \end{align*}
    This implies $H(z_{k+1}) - H(z^\ast) \le (1+\step^2 \alpha^2_{\max} \, L_2^2) (H(z_k) - H(z^\ast))$.
    Therefore, along the forward method, the energy function increases at most exponentially fast:
    \begin{align*}
    H(z_k) - H(z^\ast) \le (1+\step^2 \alpha^2_{\max} \, L_2^2)^k (H(z_0) - H(z^\ast)).
    \end{align*}
\end{enumerate}
\end{proof}

In Section~\ref{Sec:RegFwd} we analyze the total regret of the forward method for the min-max game application, and show the duality gap of the average iterate decays at an $O(K^{-1/2})$ rate; see Theorem~\ref{Thm:FwdTotReg}.

\subsubsection{Quadratic example}

We can check that in the quadratic case, the rates in Theorem~\ref{Thm:Fwd} are tight in terms of dependence on step size $\step$.

\begin{example}[Quadratic]\label{Ex:Quad}
Suppose $f(x) = \frac{\beta}{2} \|x\|^2_2$ and $g(y) = \frac{\gamma}{2} \|y\|^2_2$ for some $\beta,\gamma > 0$.
Then for $z = (x,y)$, the energy function is $H(z) = \frac{1}{2} z^\top M z$ where
$M = \begin{pmatrix}
\beta I_m & 0 \\ 0 & \gamma I_n
\end{pmatrix}$,
where $I_m \in \R^{m \times m}$ and $I_n \in \R^{n \times n}$ are the identity matrices.
Here $H$ is $\mu$-strongly convex and $L$-smooth, where $\mu = \min\{\beta,\gamma\}$ and $L = \max\{\beta,\gamma\}$.

The forward method~\eqref{Eq:Fwd1} becomes
\begin{align*}
z_{k+1} = z_k + \step JM z_k = (I_{m+n} - \step JM) z_k.
\end{align*}
Then
\begin{align*}
H(z_{k+1}) 
&= \frac{1}{2} z_{k+1}^\top M z_{k+1} \\
&= \frac{1}{2} z_k^\top (I_{m+n} - \step M^\top J^\top) M (I_{m+n} - \step JM) z_k \\
&= \frac{1}{2} z_k^\top
\begin{pmatrix}
I_m & -\step \beta A \\
\step \gamma A^\top & I_n
\end{pmatrix}
\begin{pmatrix}
\beta I_m & 0 \\
0 & \gamma I_n
\end{pmatrix}
\begin{pmatrix}
I_m & \step \gamma A \\
-\step \beta A^\top & I_n
\end{pmatrix}
 z_k \\
 &= \frac{1}{2} z_k^\top
 \begin{pmatrix}
 \beta I_m + \step^2 \beta^2 \gamma AA^\top & 0 \\
 0 & \gamma I_n + \step^2 \beta \gamma^2 A^\top A
 \end{pmatrix}
 z_k.
\end{align*}
If $m \neq n$, then one of $A A^\top$, $A^\top A$ will be low-rank and thus have a zero eigenvalue, so
\begin{align*}
H(z_{k+1}) \ge \frac{1}{2} z_k^\top
 \begin{pmatrix}
 \beta I_m & 0 \\
 0 & \gamma I_n
 \end{pmatrix}
 z_k
 = H(z_k).
\end{align*}
Suppose $m = n$ and $A$ has smallest singular value $\alpha_{\min} > 0$, so $AA^\top \succeq \alpha^2 I_n$ and $A^\top A \succeq \alpha^2 I_n$.
Then
\begin{align*}
H(z_{k+1}) 
&\ge \frac{1}{2} z_k^\top
 \begin{pmatrix}
 \beta I_m +  \step^2 \beta^2 \gamma \alpha_{\min}^2 I_m & 0 \\
 0 & \gamma I_n + \step^2 \beta \gamma^2 \alpha_{\min}^2 I_n
 \end{pmatrix} 
 z_k \\
 &= \frac{(1+\step^2 \alpha_{\min}^2 \beta \gamma)}{2} z_k^\top  \begin{pmatrix}
 \beta I_m & 0 \\
 0 & \gamma I_n
 \end{pmatrix}
z_k \\
 &= (1+\step^2 \alpha_{\min}^2 \beta \gamma)H(z_k).
\end{align*}
Note in this case the minimizer is $z^* = 0$ with $H(z^*) = 0$.
Therefore,
\begin{align*}
H(z_k)-H(z^*) \ge (1+\step^2 \alpha_{\min}^2 \beta \gamma)^k (H(z_0)-H(z^*).
\end{align*}

Finally, now let $\alpha_{\max}$ be the maximum singular value of $A$.
Then
\begin{align*}
H(z_{k+1}) &= \frac{1}{2} z_k^\top
 \begin{pmatrix}
 \beta I_m + \step^2 \beta^2 \gamma AA^\top & 0 \\
 0 & \gamma I_n + \step^2 \beta \gamma^2 A^\top A
 \end{pmatrix}
 z_k  \\
 &\le \frac{1}{2} z_k^\top
 \begin{pmatrix}
 \beta I_m +  \step^2 \beta^2 \gamma \alpha^2_{\max} I_m & 0 \\
 0 & \gamma I_n + \step^2 \beta \gamma^2 \alpha^2_{\max} I_n
 \end{pmatrix} 
 z_k \\
 &= \frac{(1+\step^2 \alpha^2_{\max} \beta \gamma)}{2} z_k^\top  
 \begin{pmatrix}
 \beta I_m & 0 \\
 0 & \gamma I_n
 \end{pmatrix}
 z_k \\
 &= (1+\step^2 \alpha^2_{\max} \beta \gamma) H(z_k).
\end{align*}
Therefore,
\begin{align*}
H(z_k) - H(z^*) \le (1+\step^2 \alpha^2_{\max} \beta \gamma)^k (H(z_0) - H(z^*)).
\end{align*}
\end{example}

\subsection{Backward Discretization of Skew-Gradient Flow}
\label{Sec:Bwd}

Consider the forward method to discretize the skew-gradient flow~\eqref{Eq:SGF} with step size $\step > 0$:
\begin{align}\label{Eq:Bwd1}
    z_{k+1} = z_k - \step J \nabla H(z_{k+1}).
\end{align}
This is an implicit update, and may require some computation to solve in each iteration.
If $H$ is smooth, then the update is well-defined for small enough step size.

In terms of the components $z_k = (x_k,y_k)$, this corresponds to the simultaneous backward method:
\begin{align*}
    x_{k+1} &= x_k - \step A \nabla g(y_{k+1}) \\
    y_{k+1} &= y_k + \step A^\top \nabla f(x_{k+1}).
\end{align*}
For the min-max game application when $f = \nabla \phi^*$ and $g = \psi^*$, this corresponds to the two players following the simultaneous proximal mirror descent algorithm:
\begin{align*}
    p_{k+1} &= \arg\min_{p \in \P} \left\{ p^\top A q_{k+1} + \frac{1}{\step} D_\phi(p,p_k) \right\} \\
    q_{k+1} &= \arg\min_{q \in \Q} \left\{ -p_{k+1}^\top A q + \frac{1}{\step} D_\psi(q,q_k) \right\}.
\end{align*}

We show the following properties of the backward method, which are the opposite of the properties of the forward method.
Here let $\|\cdot\| = \|\cdot\|_2$ be the $\ell_2$-norm.

\begin{theorem}\label{Thm:Bwd}
Let $z_k$ evolve following the backward method~\eqref{Eq:Bwd1}.
We have the following properties:
\begin{enumerate}
    \item If $H$ is $L_2$-smooth and $\step < \frac{1}{\alpha_{\max} L_2}$, then the backward method is well-defined, i.e.\ the solution $z_{k+1}$ exists and is unique for any $z_k$.
    
    \item If $H$ is convex, then $H(z_{k+1}) \le H(z_k).$
    \item Assume $m = n$, and assume $A \in \R^{n \times n}$ is invertible with smallest singular value $\alpha_{\min} > 0$.
    Assume $H$ is $\mu$-strongly convex with respect, and let $z^* = \arg\min_{z} H(z)$.
    Then:
    \begin{align*}
        H(z_k) - H(z^\ast) \le \frac{H(z_0) - H(z^\ast)}{(1+\step^2 \alpha_{\min}^2 \mu^2)^k}.
    \end{align*}

    \item Assume $H$ is convex and $L_2$-smooth.
    Then:
    \begin{align*}
        H(z_k) - H(z^\ast) \ge \frac{H(z_0) - H(z^\ast)}{(1+\step^2 \alpha^2_{\max} \, L^2)^k}
    \end{align*}
\end{enumerate}
\end{theorem}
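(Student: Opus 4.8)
The plan is to mirror the proofs of Theorem~\ref{Thm:Fwd}, exploiting the fact that the backward method at step $k$ is, from the viewpoint of the new iterate $z_{k+1}$, essentially a forward step run "backwards in time." First I would establish part (1), well-posedness. The update $z_{k+1} = z_k - \step J \nabla H(z_{k+1})$ is a fixed-point equation $z_{k+1} = \Phi(z_{k+1})$ with $\Phi(w) = z_k - \step J \nabla H(w)$. Since $\|J\|_{\op} = \alpha_{\max}$ (as $J^\top J \preceq \alpha_{\max}^2 I$, shown in the proof of Theorem~\ref{Thm:Fwd}(3)) and $\nabla H$ is $L_2$-Lipschitz, $\Phi$ is a contraction with constant $\step \alpha_{\max} L_2 < 1$, so Banach's fixed-point theorem gives existence and uniqueness of $z_{k+1}$.

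Next, for part (2), convexity. Write $v = J \nabla H(z_{k+1})$, so $z_{k+1} - z_k = -\step v$. By convexity of $H$,
\begin{align*}
H(z_k) - H(z_{k+1}) \ge \langle \nabla H(z_{k+1}), z_k - z_{k+1} \rangle = \step \langle \nabla H(z_{k+1}), J \nabla H(z_{k+1}) \rangle = 0,
\end{align*}
since $J$ is skew-symmetric and hence defines the zero quadratic form. Thus $H(z_{k+1}) \le H(z_k)$.

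For parts (3) and (4), the exponential rates, I would run the one-step descent argument "centered at $z_{k+1}$." Using $\mu$-strong convexity (part 3) in the form $H(z_k) \ge H(z_{k+1}) + \langle \nabla H(z_{k+1}), z_k - z_{k+1}\rangle + \frac{\mu}{2}\|z_k - z_{k+1}\|_2^2$, and substituting $z_k - z_{k+1} = \step J \nabla H(z_{k+1})$, the first-order term again vanishes and the quadratic term gives $H(z_k) - H(z_{k+1}) \ge \frac{\step^2 \mu}{2} \langle \nabla H(z_{k+1}), J^\top J \nabla H(z_{k+1})\rangle \ge \frac{\step^2 \alpha_{\min}^2 \mu}{2} \|\nabla H(z_{k+1})\|_2^2$, using $J^\top J \succeq \alpha_{\min}^2 I_{2n}$ when $A$ is invertible. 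Gradient domination ($\|\nabla H(z_{k+1})\|_2^2 \ge 2\mu(H(z_{k+1}) - H(z^*))$) then yields $H(z_k) - H(z^*) \ge (1 + \step^2 \alpha_{\min}^2 \mu^2)(H(z_{k+1}) - H(z^*))$, which rearranges and iterates to the claimed bound. Part (4) is identical with the inequalities reversed: $L_2$-smoothness gives $H(z_k) - H(z_{k+1}) \le \frac{\step^2 L_2}{2}\langle \nabla H(z_{k+1}), J^\top J \nabla H(z_{k+1})\rangle \le \frac{\step^2 \alpha_{\max}^2 L_2}{2}\|\nabla H(z_{k+1})\|_2^2$, and the smooth-convex bound $H(z) - H(z^*) \ge \frac{1}{2L_2}\|\nabla H(z)\|_2^2$ applied at $z_{k+1}$ produces $H(z_k) - H(z^*) \le (1 + \step^2 \alpha_{\max}^2 L_2^2)(H(z_{k+1}) - H(z^*))$, again iterating to the stated inequality.

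The only real subtlety — and the step I expect to be the main obstacle — is the bookkeeping around which iterate the Lipschitz/strong-convexity inequality is expanded at: because the method is implicit, the clean cancellation of the linear term requires pairing $\nabla H(z_{k+1})$ with $z_k - z_{k+1}$, not $z_{k+1} - z_k$ with $\nabla H(z_k)$, so the resulting recursion naturally runs "from $z_{k+1}$ back to $z_k$" and must be inverted. Everything else is a direct transcription of the forward-method computations with signs flipped, and no new ideas beyond those already in the proof of Theorem~\ref{Thm:Fwd} are needed.
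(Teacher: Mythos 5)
Your proposal is correct and follows essentially the same route as the paper: parts (2)--(4) are exactly the paper's argument, applying the convexity/strong-convexity/smoothness inequality centered at $z_{k+1}$ so that the skew term $\langle \nabla H(z_{k+1}), J\nabla H(z_{k+1})\rangle$ vanishes and the quadratic term drives the recursion. The only (minor) divergence is in part (1), where you invoke the Banach fixed-point theorem for the contraction $w \mapsto z_k - \step J\nabla H(w)$ while the paper argues via non-singularity of the Jacobian of $I + \step J\nabla H$; both yield the same step-size condition, and your contraction argument is if anything the cleaner way to get global existence and uniqueness.
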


As in Theorem~\ref{Thm:Fwd}, the results in Theorem~\ref{Thm:Bwd} are stated in terms of the $\ell_2$-norm for simplicity, but also hold for any norm $\|\cdot\|$ in $\R^{m+n}$, at the cost of a potentially dimension-dependence constants.
We note that we can also deduce Theorem~\ref{Thm:Bwd} by reversing the argument of Theorem~\ref{Thm:Fwd}, since the backward method is the adjoint of the forward method (i.e.\ the inverse with negated step size $-\step$). 
For completeness, we provide the full proof below.
We can also check that in the quadratic case, the rates in Theorem~\ref{Thm:Bwd} are tight.

\begin{proof}[Proof of Theorem~\ref{Thm:Bwd}]
\begin{enumerate}
    \item We can write the update~\eqref{Eq:Bwd1} as $(I + \step J \nabla H)(z_{k+1}) = z_k$, where $I$ is the identity operator.
    The Jacobian matrix of the operator $I + \step J \nabla H$ is $I_{m+n} + \step J \nabla^2 H$, which has smallest singular value at least $1 - \step \alpha_{\max} \|\nabla^2 H\|_{\op} \ge 1-\step \alpha_{\max} L_2$.
    We see that if $\step < \frac{1}{\alpha_{\max} L_2}$, then the Jacobian matrix is non-singular, thus the operator $I + \step J \nabla H$ is invertible, and hence $z_{k+1} = (I + \step J \nabla H)^{-1}(z_k)$ exists and is uniquely defined.

    \item Assume $H$ is convex. 
    By Jensen's inequality,
    \begin{align*}
    H(z_{k+1}) - H(z_k) \le \langle \nabla H(z_{k+1}), z_{k+1} - z_k \rangle = -\step \langle \nabla H(z_{k+1}), J \nabla H(z_{k+1}) \rangle = 0.
    \end{align*}
    This shows the backward method always decreases the energy function: $H(z_{k+1}) \le H(z_k)$.

    \item Assume $m = n$, and $A \in \R^{n \times n}$ has a positive smallest singular value $\alpha_{\min} > 0$.
    This implies $A^\top A \succeq \alpha_{\min}^2 I_n$ and $AA^\top \succeq \alpha_{\min}^2 I_n$, so $J^\top J \succeq \alpha_{\min}^2 I_{2n}$.
    
    Assume further that $H$ is $\mu$-strongly convex, which implies $H$ is $\mu$-gradient dominated: 
    \begin{align*}
    \|\nabla H(z)\|^2_2 \ge 2\mu (H(z) - H(z^\ast))
    \end{align*}
    for all $z \in \R^{2n}$.
    Then by the $\mu$-strong convexity of $H$, along the backward method,
    \begin{align*}
    H(z_{k+1}) - H(z_k) &\le \langle \nabla H(z_{k+1}), z_{k+1} - z_k \rangle  - \frac{\mu}{2} \|z_{k+1} - z_k \|^2_2 \\
    &= -\step \langle \nabla H(z_{k+1}), J \nabla H(z_{k+1}) \rangle - \frac{\step^2 \mu}{2} \|J \nabla H(z_{k+1}) \|^2_2 \\
    &\le -\frac{\step^2 \alpha_{\min}^2 \mu}{2} \|\nabla H(z_{k+1}) \|^2_2 \\
    &\le -\step^2 \alpha_{\min}^2 \mu^2 (H(z_{k+1}) - H(z^\ast))
    \end{align*}
    This implies
    \begin{align*}
    H(z_{k+1}) - H(z^\ast) \le \frac{H(z_k) - H(z^\ast)}{1+\step^2 \alpha_{\min}^2 \mu^2}.
    \end{align*}
    Therefore, the backward method decreases the energy function exponentially fast:
    \begin{align*}
    H(z_k) - H(z^\ast) \le \frac{H(z_0) - H(z^\ast)}{(1+\step^2 \alpha_{\min}^2 \mu^2)^k}.
    \end{align*}

    \item Since $A$ has maximum singular value $\alpha_{\max} > 0$, we have $A^\top A \preceq \alpha_{\max}^2 I_n$ and $AA^\top \preceq \alpha^2_{\max} I_m$, so $J^\top J \preceq \alpha^2_{\max} I_{m+n}$.

    Since $H$ is convex and $L$-smooth, we have for all $z \in \R^{m+n}$:
    \begin{align*}
    H(z) \ge H(z^\ast)  + \frac{1}{2L}\|\nabla H(z)\|^2_2.
    \end{align*}
    
    Then by the $L_2$-smoothness of $H$, along the backward method,
    \begin{align*}
    H(z_{k+1}) - H(z_k) &\ge \langle \nabla H(z_{k+1}), z_{k+1} - z_k \rangle  - \frac{L_2}{2} \|z_{k+1} - z_k \|^2_2 \\
    &= -\step \langle \nabla H(z_{k+1}), J \nabla H(z_{k+1}) \rangle - \frac{\step^2 L_2}{2} \|J \nabla H(z_{k+1}) \|^2_2 \\
    &= 0 - \frac{\step^2 L_2}{2} \langle \nabla H(z_{k+1}), (J^\top J) \, \nabla H(z_{k+1}) \rangle \\
    &\ge -\frac{\step^2 \alpha^2_{\max} \, L_2}{2} \|\nabla H(z_{k+1}) \|^2_2 \\
    &\ge -\step^2 \alpha^2_{\max} \, L_2^2 (H(z_{k+1}) - H(z^\ast))
    \end{align*}
    This implies
    \begin{align*}
    H(z_{k+1}) - H(z^\ast) \ge \frac{H(z_k) - H(z^\ast)}{1+\step^2 \alpha^2_{\max} \, L_2^2}.
    \end{align*}
    Therefore, along the backward method, the energy function decreases at most exponentially fast:
    \begin{align*}
    H(z_k) - H(z^\ast) \ge \frac{H(z_0) - H(z^\ast)}{(1+\step^2 \alpha^2_{\max} \, L_2^2)^k}.
    \end{align*}
\end{enumerate}
\end{proof}

In Section~\ref{Sec:RegBwd} we analyze the total regret of the backward method for the min-max game application, and show the duality gap of the average iterate decays at an $O(K^{-1})$ rate; see Theorem~\ref{Thm:BwdTotReg}.
This is similar to the performance of the clairvoyant multiplicative weight update~\cite{PSS21} which achieves constant regret for a general game.

\subsection{Alternating Discretization of Skew-Gradient Flow}
\label{Sec:AltProof}

\subsubsection{Proof of Corollary~\ref{Cor:QuadraticEnergy}}
\label{Sec:CorQuadraticEnergyProof}

\begin{proof}[Proof of Corollary~\ref{Cor:QuadraticEnergy}]
From Lemma~\ref{Lem:Alt} and since the Bregman commutator of $H$ vanishes,
\begin{align*}
H_\step(z_k) = H_\step(z_0) + \sum_{i=0}^{k-1} C_H(z_{i+1},z_i) 
= \tilde H_\step(z_0).
\end{align*}
\end{proof}

\subsubsection{Proof of Lemma~\ref{Lem:Alt}}
\label{Sec:LemAltProof}

\begin{proof}[Proof of Lemma~\ref{Lem:Alt}]
By the definition of Bregman commutator, and using the alternating update, we can write
\begin{align*}
C_f(x_{k+1},x_k) &= f(x_{k+1}) - f(x_k) - \frac{1}{2} \langle \nabla f(x_k) + \nabla f(x_{k+1}), x_{k+1} - x_k \rangle \\
&= f(x_{k+1}) - f(x_k) + \frac{\step}{2} \langle \nabla f(x_k) + \nabla f(x_{k+1}), A \nabla g(y_k) \rangle.
\end{align*}
Similarly, we can also write
\begin{align*}
C_g(y_{k+1},y_k) &= g(y_{k+1}) - g(y_k) - \frac{1}{2} \langle \nabla g(y_k) + \nabla g(y_{k+1}), y_{k+1} - y_k \rangle \\
&= g(y_{k+1}) - g(y_k) - \frac{\step}{2} \langle \nabla g(y_k) + \nabla g(y_{k+1}), A^\top \nabla f(x_{k+1}) \rangle \\
&= g(y_{k+1}) - g(y_k) - \frac{\step}{2} \langle \nabla f(x_{k+1}), A (\nabla g(y_k) + \nabla g(y_{k+1}))  \rangle.
\end{align*}
Since $H(x,y) = f(x) + g(y)$ is separable, we have $D_H(z',z) = D_f(x',x) + D_g(y',y)$ where $z' = (x',y')$ and $z = (x,y)$, and thus $C_H(z',z) = C_f(x',x) + C_g(y',y)$.
Adding the two identities above yields
\begin{align*}
C_H(z_{k+1},z_k) &= C_f(x_{k+1},x_k) + C_g(y_{k+1},y_k) \\
&= f(x_{k+1}) - f(x_k) + \frac{\step}{2} \langle \nabla f(x_k) + \nabla f(x_{k+1}), A \nabla g(y_k) \rangle \\
& ~~~~~+ g(y_{k+1}) - g(y_k) - \frac{\step}{2} \langle \nabla f(x_{k+1}), A (\nabla g(y_k) + \nabla g(y_{k+1}))  \rangle \\
&= H(z_{k+1}) - H(z_k) + \frac{\step}{2} \langle \nabla f(x_k), A \nabla g(y_k) \rangle - \frac{\step}{2} \langle \nabla f(x_{k+1}), A \nabla g(y_{k+1})  \rangle \\
&= H_\step(z_{k+1}) - H_\step(z_k)
\end{align*}
as desired.
\end{proof}

\subsubsection{Proof of Theorem~\ref{Thm:AltSmooth}}
\label{Sec:ThmAltSmoothProof}

\begin{proof}[Proof of Theorem~\ref{Thm:AltSmooth}]
From Lemma~\ref{Lem:Alt} and using Lemma~\ref{Lem:ThirdOrder},
\begin{align*}
|H_\step(z_k) - H_\step(z_0)| 
&= \left| \sum_{i=0}^{k-1} C_H(z_{i+1},z_i)  \right|
\;\le\; \sum_{i=0}^{k-1} |C_H(z_{i+1},z_i)|
\;\le\; \frac{L_3}{12} \sum_{i=0}^{k-1} \|z_{i+1} - z_i\|^3.
\end{align*}
Along the alternating method~\eqref{Eq:AltDisc}, we have
\begin{align*}
z_{i+1} - z_i = \step \begin{pmatrix}
-A \nabla g(y_i) \\ A^\top \nabla f(x_{i+1})
\end{pmatrix}
= -\step J \nabla H(z_{i+\frac{1}{2}})
\end{align*}
where $z_{i+\frac{1}{2}} := \begin{pmatrix}
x_{i+1} \\ y_i
\end{pmatrix}$.
Thus,
\begin{align*}
\|z_{i+1}-z_i\| = \step \|J \nabla H(z_{i+\frac{1}{2}})\| \le \step \, \alpha_{\max} L_1.
\end{align*}
Therefore,
\begin{align*}
|H_\step(z_k) - H_\step(z_0)| \le \frac{L_3}{12} \sum_{i=0}^{k-1} ( \step \, \alpha_{\max} L_1)^3 = \frac{1}{12} \step^3 \, \alpha_{\max}^3 \, L_1^3 \, L_3 \, k
\end{align*}
as desired.
\end{proof}

\subsubsection{Proof of Theorem~\ref{Thm:RegretEnergy}}
\label{Sec:ProofRegretEnergy}

\begin{proof}[Proof of Theorem~\ref{Thm:RegretEnergy}]
Recall from~\eqref{Eq:Regret1} the regret of the first player is
\begin{align*}
    R_{1,K}(p) = \sum_{k=0}^{K-1} \Big(\frac{p_k+p_{k+1}}{2}\Big)^\top A q_k - \sum_{k=0}^{K-1} p^\top A q_k.
\end{align*}
Recall also $p_k = \nabla f(x_k)$, $q_k = \nabla g(y_k)$, $p = \nabla f(x)$, and $q = \nabla g(y)$.

From the alternating mirror descent update~\eqref{Eq:Alt}, we have
\begin{align*}
Aq_k = A \nabla g(y_k) = -\frac{1}{\step} (x_{k+1}-x_k).
\end{align*}
Therefore, the second term in the first player's regret is
\begin{align*}
\sum_{k=0}^{K-1} p^\top A q_k
    &= -\frac{1}{\step} \sum_{k=0}^{K-1} \nabla f(x)^\top (x_{k+1}-x_k)
    = -\frac{1}{\step} \nabla f(x)^\top (x_K - x_0).
\end{align*}
For the first term in the first player's regret, by the definition of the Bregman commutator, we have
\begin{align*}
\sum_{k=0}^{K-1} \Big(\frac{p_k+p_{k+1}}{2}\Big)^\top A q_k
    &= -\frac{1}{\step} \sum_{k=0}^{K-1} 
    \Big(\frac{\nabla f(x_{k+1})+\nabla f(x_k)}{2}\Big)^\top (x_{k+1}-x_k) \\
    &= -\frac{1}{\step} \sum_{k=0}^{K-1} \left(f(x_{k+1}) - f(x_k) - C_f(x_{k+1},x_k)\right) \\
    &= -\frac{1}{\step} \big(f(x_K) - f(x_0) \big) + \frac{1}{\step} \sum_{k=0}^{K-1} C_f(x_{k+1},x_k).
\end{align*}
Combining the two terms above, we can write the regret of the first player as
\begin{align}
    R_{1,K}(p)
    &= -\frac{1}{\step} \left(f(x_K) - f(x_0) - \nabla f(x)^\top (x_K - x_0)\right) + \frac{1}{\step} \sum_{k=0}^{K-1} C_f(x_{k+1},x_k) \notag \\
    &= -\frac{1}{\step} \left( D_f(x_K,x) - D_f(x_0,x) \right) + \frac{1}{\step} \sum_{k=0}^{K-1} C_f(x_{k+1},x_k).  \label{Eq:RegretCalc1}
\end{align}

Similarly, recall from~\eqref{Eq:Regret2} the regret of the second player is
\begin{align*}
    R_{2,K}(q) = \sum_{k=0}^{K-1} p_{k+1}^\top A q - \sum_{k=0}^{K-1}  p_{k+1}^\top A \Big(\frac{q_k+q_{k+1}}{2}\Big).
\end{align*}

From the alternating mirror descent update~\eqref{Eq:Alt}, we have
\begin{align*}
A^\top p_{k+1} = A^\top \nabla f(x_{k+1}) = \frac{1}{\step} (y_{k+1}-y_k).
\end{align*}
Therefore, the first term in the second player's regret is
\begin{align*}
    \sum_{k=0}^{K-1} p_{k+1}^\top A q 
    = \frac{1}{\step} \sum_{k=0}^{K-1} (y_{k+1}-y_k)^\top \nabla g(y)
    = \frac{1}{\step} (y_K - y_0)^\top \nabla g(y).
\end{align*}
For the second term in the second player's regret, by the definition of the Bregman commutator, we have
\begin{align*}
    \sum_{k=0}^{K-1}  p_{k+1}^\top A \Big(\frac{q_k+q_{k+1}}{2}\Big)
    &= \frac{1}{\step} \sum_{k=0}^{K-1} (y_{k+1}-y_k)^\top \Big(\frac{\nabla g(y_k)+\nabla g(y_{k+1})}{2} \Big) \\
   &= \frac{1}{\step}  \sum_{k=0}^{K-1} \left(g(y_{k+1})-g(y_k) - C_g(y_{k+1},y_k)\right) \\
   &= \frac{1}{\step} (g(y_K) - g(y_0)) - \frac{1}{\step} \sum_{k=0}^{K-1} C_g(y_{k+1},y_k).
\end{align*}
Combining the two terms above, we can write the regret of the second player as
\begin{align}
R_{2,K}(q) 
    &= \frac{1}{\step} \left(-g(y_K) + g(y_0) + (y_K - y_0)^\top \nabla g(y) \right) + \frac{1}{\step} \sum_{k=0}^{K-1} C_g(y_{k+1},y_k) \notag \\
    &= -\frac{1}{\step} \left( D_g(y_K,y) - D_g(y_0,y) \right) + \frac{1}{\step} \sum_{k=0}^{K-1} C_{g}(y_{k+1},y_k).  \label{Eq:RegretCalc2}
\end{align}

Adding~\eqref{Eq:RegretCalc1} and~\eqref{Eq:RegretCalc2}, we find that the cumulative regret of both players at iteration $K$ is
\begin{align*}
R_K(p,q) &= R_{1,K}(p) + R_{2,K}(q) \\
    &= -\frac{1}{\step} \left( D_f(x_K,x) - D_f(x_0,x) \right) + \frac{1}{\step} \sum_{k=0}^{K-1} C_{f}(x_{k+1},x_k) \\
    &~~~~~ -\frac{1}{\step} \left( D_g(y_K,y) - D_g(y_0,y) \right) + \frac{1}{\step} \sum_{k=0}^{K-1} C_{g}(y_{k+1},y_k) \\
    &= -\frac{1}{\step} \left( D_H(z_K,z) - D_H(z_0,z) \right) + \frac{1}{\step} \sum_{k=0}^{K-1} C_{H}(z_{k+1},z_k) \\
    &\stackrel{\eqref{Eq:LemAlt}}{=} -\frac{1}{\step} \left( D_H(z_K,z) - D_H(z_0,z) \right) + \frac{1}{\step} \sum_{k=0}^{K-1} (H_\step(z_{k+1})-H_\step(z_k)) \\
    &=  -\frac{1}{\step} \left( D_H(z_K,z) - D_H(z_0,z) \right) + \frac{1}{\step} (H_\step(z_K) - H_\step(z_0)).
\end{align*}
In the penultimate line above we have applied~\eqref{Eq:LemAlt} from Lemma~\ref{Lem:Alt} to express the Bregman commutator as a difference of the modified energy.
\end{proof}

\section{Details for Regret Calculations}

\subsection{Regret Proofs for Alternating Mirror Descent}

\subsubsection{Proof of Lemma~\ref{Lem:RegretDualityGap}}
\label{Sec:LemRegretDualityGapProof}

\begin{proof}[Proof of Lemma~\ref{Lem:RegretDualityGap}]
By the definition of duality gap and total regret of the alternating method,
\begin{align*}
    K \cdot \dg(\bar p_K, \bar q_K) 
    &= K \cdot \max_{(p,q) \in \P \times \Q} \; \left( \bar p_K^\top A q - p^\top A \bar q_K \right)  \\
    &= \max_{(p,q) \in \P \times \Q} \left( \sum_{k=0}^{K-1} p_{k+1}^\top A q - \sum_{k=0}^{K-1} p^\top A q_k \right)  \\
    &= \max_{(p,q) \in \P \times \Q} \left( R_{1,K}(p) + R_{2,K}(q) - \frac{1}{2} (p_0^\top A q_0 - p_K^\top A q_K)  \right) \\
    &= R_K - \frac{1}{2} (p_0^\top A q_0 - p_K^\top A q_K).
\end{align*}
\end{proof}

\subsubsection{Proof of Theorem~\ref{Thm:RegretSmooth}}
\label{Sec:ThmRegretSmoothProof}

\begin{proof}[Proof of Theorem~\ref{Thm:RegretSmooth}]
We use the relation between cumulative regret of alternating mirror descent and the modified energy from the skew-gradient flow discretization in Theorem~\ref{Thm:RegretEnergy}.

For any reference point $(p,q) \in \P \times \Q$, let $z = (x,y) \in \R^{m+n}$ be the dual variables, where $x = \nabla \phi(p)$ and $y = \nabla \psi(q)$.
Recall we define the energy function $H(x,y) = f(x) + g(y)$ where $f = \phi^*$ and $g = \psi^*$.
Along the alternating mirror descent~\eqref{Eq:AltMD}, let $(x_k,y_k)$ be the dual variables to $(p_k,q_k)$.
By Theorem~\ref{Thm:RegretEnergy}, we have
\begin{align}
    R_K(p,q) 
    &= \frac{D_H(z_0,z) - D_H(z_K,z)}{\step} + \frac{H_\step(z_K) - H_\step(z_0)}{\step} \notag \\
    &\le \frac{D_H(z_0,z)}{\step} + \frac{H_\step(z_K) - H_\step(z_0)}{\step}. \label{Eq:CalcRegretSmooth1}
\end{align}
The inequality in the last line above follows from $D_H(z_K,z) \ge 0$ since $H$ is convex.

We verify that the assumptions in Theorem~\ref{Thm:RegretSmooth} imply the assumptions in Theorem~\ref{Thm:AltSmooth} are satisfied.
Indeed, $H$ is $2M$-Lipschitz since its gradient is $\nabla H(x,y) = (\nabla f(x), \nabla g(y)) = (p,q)$,
so 
$$\|\nabla H(x,y)\| = \|(p,q)\| \le \|p\| + \|q\| \le M + M = 2M.$$
Furthermore, $H$ is $2M$-smooth of order $3$ since
$$\nabla^3 H(x,y) = (\nabla^3 f(x), \nabla^3 g(y)) = (\nabla^3 \phi^*(\nabla \phi(p)), \nabla^3 \psi^*(\nabla \psi(q)))$$
so $\|\nabla^3 H(x,y)\|_{\op} \le \|\nabla^3 \phi^*(\nabla \phi(p))\|_{\op} + \|\nabla^3 \psi^*(\nabla \psi(q))\|_{\op} \le M + M = 2M$.

Then by Theorem~\ref{Thm:AltSmooth},
\begin{align*}
|H_\step(z_K) - H_\step(z_0)| \le \frac{\alpha_{\max}^3 \, (2M) \, (2M)^3}{12} \, \step^3 K = \frac{4}{3} \alpha_{\max}^3 \, M^4 \, \step^3 K.
\end{align*}
Plugging this to~\eqref{Eq:CalcRegretSmooth1} gives
\begin{align*}
    R_K(p,q) 
    &\le \frac{D_H(z_0,z)}{\step} + \frac{4}{3} \alpha_{\max}^3 \, M^4 \, \step^2 K.
\end{align*}
Since 
$$D_H(z_0,z) = D_f(x_0,x) + D_g(y_0,y) = D_\phi(p,p_0) + D_\psi(q,q_0) \le M + M = 2M$$
we also have
\begin{align*}
    R_K(p,q) 
    &\le \frac{2M}{\step} + \frac{4}{3} \alpha_{\max}^3 \, M^4 \, \step^2 K.
\end{align*}
Maximizing over $(p,q) \in \P \times Q$ yields the result for the total regret $R_K = \displaystyle \max_{(p,q) \in \P \times \Q} R_K(p,q)$.
\end{proof}

\subsubsection{Proof of Corollary~\ref{Cor:DualityGapAlternating}}
\label{Sec:CorDualityGapAlternatingProof}

\begin{proof}[Proof of Corollary~\ref{Cor:DualityGapAlternating}]
By assumption and Theorem~\ref{Thm:RegretSmooth}, we have $R_K = O(K^{1/3})$, so $\frac{1}{K} R_K = O(K^{-2/3})$.
Since we assume the domains are bounded, we have 
\begin{align*}
   |p_0^\top A q_0| \le  \alpha_{\max} \|p_0\| \cdot \|q_0\| \le \alpha_{\max} M^2
\end{align*}
and similarly, $|p_K^\top A q_K| \le \alpha_{\max} M^2$.
Thus, the last term in~\eqref{Eq:RegretDualityGap} is bounded by
\begin{align*}
    \frac{1}{K} |p_0^\top A q_0-p_K^\top A q_K| \le \frac{2\alpha_{\max} M^2}{K} = O(K^{-1}).
\end{align*}
Therefore, by Lemma~\ref{Lem:RegretDualityGap},
$\dg(\bar p_K, \bar q_K) = O(K^{-2/3}) + O(K^{-1}) = O(K^{-2/3})$.
\end{proof}

\subsection{Regret of the Continuous-time Method}

Suppose the two players follow the continuous-time variant of mirror descent, namely the natural gradient flow:
\begin{subequations}\label{Eq:NatGradFlow}
\begin{align*}
    \dot P_t &= -\nabla^2 \phi(P_t)^{-1} \, A Q_t \\
    \dot Q_t &= \nabla^2 \psi(Q_t)^{-1} \, A^\top P_t.
\end{align*}
\end{subequations}
This is the continuous-time limit (as the step size $\step \to 0$) of the mirror descent method---either the simultaneous, alternating, or proximal version.
In terms of the dual variables $X_t = \nabla \phi(P_t)$, $Y_t = \nabla \psi(Q_t)$, they follow the skew-gradient flow~\eqref{Eq:SkewGrad}:
\begin{align*}
    \dot X_t &= -A \nabla g(Y_t) \\
    \dot Y_t &= A^\top \nabla f(X_t)
\end{align*}
where $f = \phi^*$ and $g = \psi^*$.
Recall in this case the energy function $H(x,y) = f(x) + g(y)$ is conserved: $H(X_t,Y_t) = H(X_0,Y_0)$, which means the trajectories $(X_t,Y_t)$ cycle and stay in the orbit of constant energy.

For the min-max game application, we show this continuous-time method achieves constant regret.

We define the regret of the first player at (continuous) time $T$ with respect to $\hat p \in \P$ to be:
\begin{align*}
    R_{1,T}(\hat p) := \int_0^T P_t^\top A Q_t \, dt - \int_0^T \hat p^\top A Q_t \, dt.
\end{align*}
Similarly, we define the regret of the second player at time $T$ with respect to $\hat q \in \Q$ to be:
\begin{align*}
    R_{2,T}(\hat q) := \int_0^T P_t^\top A \hat q \, dt - \int_0^T P_t^\top A Q_t \, dt.
\end{align*}
We define the cumulative regret of both players at time $T$ with respect to $(\hat p, \hat q) \in \P \times \Q$ to be:
\begin{align*}
    R_{T}(\hat p, \hat q) &:= R_{1,T}(\hat p) + R_{2,T}(\hat q) 
    = \int_0^T P_t^\top A \hat q \, dt - \int_0^T \hat p^\top A Q_t \, dt.
\end{align*}
Finally, we define the {\bf total regret} at time $T$ to be the maximum cumulative regret of both players:
\begin{align*}
    R_T := \max_{(\hat p, \hat q) \in \P \times \Q} \, R_{T}(\hat p, \hat q).
\end{align*}

Define the average iterates of the two players at time $T$:
\begin{align*}
    \bar P_T = \frac{1}{T} \int_0^T P_t \, dt 
    \qquad\text{ and }\qquad
    \bar Q_T = \frac{1}{T} \int_0^T Q_t \, dt
\end{align*}
We observe that the average total regret is equal to the duality gap of the average iterates:
\begin{align*}
    \frac{1}{T} R_T = \max_{q \in \Q} \bar P_T^\top A q - \min_{p \in \P} p^\top A \bar Q_T 
    = \dg(\bar p_T, \bar q_T).
\end{align*}

Then we have the following properties.

\begin{theorem}\label{Thm:CtsTotReg}
Assume the domains $\P$ and $\Q$ are bounded, so there exists $0 < M < \infty$ such that $D_\phi(p',p) \le M$ and $D_\psi(q',q) \le M$ for all $p',p \in \P$, $q',q \in \Q$.
If both players play the natural gradient flow dynamics~\eqref{Eq:NatGradFlow}, then the total regret is bounded for all $T \ge 0$:
\begin{align*}
    R_T \le 2M.
\end{align*}
Therefore, the duality gap of the average iterate decays at an $O(1/T)$ rate:
\begin{align*}
    \dg(\bar P_T, \bar Q_T) \le \frac{2M}{T}.
\end{align*}
\end{theorem}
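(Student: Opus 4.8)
The plan is to prove the bound $R_T \le 2M$ by a short continuous-time energy argument, which is the clean analogue of the discrete computation in Theorem~\ref{Thm:RegretEnergy}: it is simpler here because the skew-gradient flow \eqref{Eq:SGF} conserves the energy $H$ exactly, so there is no Bregman-commutator correction term. Once $R_T \le 2M$ is established, the $O(1/T)$ decay of the duality gap is immediate from the identity $\tfrac{1}{T} R_T = \dg(\bar P_T, \bar Q_T)$ noted above.

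Concretely, I would fix a reference pair $(\hat p, \hat q) \in \P \times \Q$, let $\hat z = (\hat x, \hat y)$ be its dual representation with $\hat x = \nabla \phi(\hat p)$ and $\hat y = \nabla \psi(\hat q)$, and note that then $\nabla H(\hat z) = (\nabla f(\hat x), \nabla g(\hat y)) = (\hat p, \hat q)$. The key object to track is $t \mapsto D_H(Z_t, \hat z)$, the Bregman divergence from $\hat z$ to the dual trajectory $Z_t = (X_t, Y_t)$. Differentiating along the flow $\dot Z_t = -J \nabla H(Z_t)$ gives
\begin{align*}
\frac{d}{dt} D_H(Z_t, \hat z) = \langle \nabla H(Z_t) - \nabla H(\hat z), \dot Z_t \rangle = -\langle \nabla H(\hat z), \dot Z_t \rangle = \langle \nabla H(\hat z), J \nabla H(Z_t) \rangle,
\end{align*}
where the middle step uses $\langle \nabla H(Z_t), J \nabla H(Z_t) \rangle = 0$ from skew-symmetry of $J$. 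Substituting $\nabla H(\hat z) = (\hat p, \hat q)$, $\nabla H(Z_t) = (P_t, Q_t)$, and expanding the block form of $J$, the right-hand side equals $\hat p^\top A Q_t - P_t^\top A \hat q$, which is exactly minus the integrand of $R_T(\hat p, \hat q)$. Integrating from $0$ to $T$ yields $R_T(\hat p, \hat q) = D_H(Z_0, \hat z) - D_H(Z_T, \hat z)$.

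To finish, I would drop the last term using $D_H(Z_T, \hat z) \ge 0$, which holds because $H = f + g$ is convex (as a sum of convex conjugates) and differentiable (as $\phi, \psi$ are strictly convex). For the remaining term I use the conjugate-duality identity $D_{\phi^*}(\nabla \phi(p), \nabla \phi(p')) = D_\phi(p', p)$ (and its analogue for $\psi$), which follows from the Fenchel equality, to rewrite $D_H(Z_0, \hat z) = D_f(X_0, \hat x) + D_g(Y_0, \hat y) = D_\phi(\hat p, P_0) + D_\psi(\hat q, Q_0) \le 2M$ by the bounded-domain hypothesis. Since $(\hat p, \hat q)$ was arbitrary, maximizing gives $R_T \le 2M$, and dividing by $T$ and invoking $\tfrac{1}{T} R_T = \dg(\bar P_T, \bar Q_T)$ gives $\dg(\bar P_T, \bar Q_T) \le 2M/T$.

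I expect the only genuine subtlety to lie in the primal–dual bookkeeping of the last step — establishing $D_{\phi^*}(\nabla \phi(p), \nabla \phi(p')) = D_\phi(p', p)$ from the Fenchel equality and keeping the sign conventions straight — together with the minor point that $\hat z$ is well defined on the interior of $\P \times \Q$ by the Legendre property and that the claimed bound extends to the boundary by continuity of the integrand $P_t^\top A \hat q - \hat p^\top A Q_t$ in $(\hat p, \hat q)$. Everything else (differentiating the Bregman divergence, the skew-symmetry cancellation, the telescoping in $t$) is routine; the structural insight is simply that, because the skew-gradient flow conserves $H$, tracking the divergence $D_H(Z_t, \hat z)$ to a fixed reference point rather than $H(Z_t)$ itself converts energy conservation directly into a constant (horizon-independent) regret bound.
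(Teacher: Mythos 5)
Your proposal is correct and is essentially the paper's own argument: the paper tracks $D_f(X_t,\hat x)$ and $D_g(Y_t,\hat y)$ separately for each player and shows each regret telescopes to a difference of Bregman divergences, which is exactly your identity $R_T(\hat p,\hat q) = D_H(Z_0,\hat z) - D_H(Z_T,\hat z)$ written componentwise; the final step (dropping the nonnegative terminal divergence, converting to primal Bregman divergences via the conjugate identity, and bounding by $2M$) is also identical. Your packaging in the joint dual variable with the skew-symmetry cancellation, and your remark about extending from the interior of $\P \times \Q$ to the boundary by continuity, are only cosmetic refinements of the same proof.
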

\begin{proof}
Let $\hat x = \nabla \phi(\hat p)$ and $\hat y = \nabla \psi(\hat q)$, so $\hat p = \nabla f(\hat x)$ and $\hat q = \nabla g(\hat y)$.
By the definition of the dynamics, we have
\begin{align*}
    (P_t - \hat p)^\top A Q_t &= -(\nabla f(X_t) - \nabla f(\hat x))^\top \dot X_t \\
    &= \frac{d}{dt}(-f(X_t) + \langle \nabla f(\hat x), X_t - \hat x \rangle ) \\
    &= -\frac{d}{dt} D_f(X_t,\hat x).
\end{align*}
Therefore, we can write the regret of the first player as
\begin{align*}
    R_{1,T}(\hat p) &= \int_0^T (P_t - \hat p)^\top A Q_t \, dt \\
    &= D_f(X_0, \hat x) - D_f(X_T,\hat x) \\
    &= D_\phi(\hat p, P_0) - D_\phi(\hat p, P_T).
\end{align*}

Similarly, by the definition of the dynamics, we have
\begin{align*}
    P_t^\top A (\hat q - Q_t) &= \dot Y_t^\top (\nabla g(\hat y) - \nabla g(Y_t))^\top \\
    &= \frac{d}{dt}(-g(Y_t) + \langle \nabla g(\hat y), Y_t - \hat y \rangle ) \\
    &= -\frac{d}{dt} D_g(Y_t,\hat y).
\end{align*}
Therefore, we can write the regret of the second player as
\begin{align*}
    R_{2,T}(\hat q) &= \int_0^T P_t^\top A (\hat q - Q_t) \, dt \\
    &= D_g(Y_0, \hat y) - D_g(Y_T,\hat y) \\
    &= D_\psi(\hat q, Q_0) - D_\psi(\hat q, Q_T).
\end{align*}

Combining the two quantities above, and since $\phi, \psi$ are convex, we find that
\begin{align*}
    R_T(\hat p,\hat q) 
    &= D_\phi(\hat p, P_0) - D_\phi(\hat p, P_T) + D_\psi(\hat q, Q_0) - D_\psi(\hat q, Q_T) \\
    &\le D_\phi(\hat p, P_0) + D_\psi(\hat q, Q_0) \\
    &\le 2M.
\end{align*}
Therefore the total regret is also bounded:
\begin{align*}
    R_T = \max_{(\hat p, \hat q) \in \P \times \Q} \, R_{T}(\hat p, \hat q) \le 2M.
\end{align*}
Thus, the duality gap of the average iterate is bounded by:
\begin{align*}
    \dg(\bar P_T, \bar Q_T) = \frac{1}{T} R_T \le \frac{2M}{T}.
\end{align*}
\end{proof}

\subsection{Regret of the Forward Method}
\label{Sec:RegFwd}

Consider the forward method~\eqref{Eq:Fwd1} for discretizing the skew-gradient flow~\eqref{Eq:SkewGrad}, which means the two players follow the simultaneous mirror descent algorithm.
Recall from Theorem~\ref{Thm:Fwd} that the energy function is increasing along the forward method.
We show that the total regret of the two players increases at an $O(K^{1/2})$ rate, and thus the average total regret decays as $O(K^{-1/2})$ after $K$ iterations,
which recovers the classical rate of simultaneous mirror descent.

We define the regret of the first player at iteration $K$ with respect to $\hat p \in \P$ to be:
\begin{align*}
    R_{1,K}(\hat p) := \sum_{k=0}^{K-1} p_k^\top A q_k - \sum_{k=0}^{K-1} \hat p^\top A q_k.
\end{align*}
Similarly, we define the regret of the second player at iteration $K$ with respect to $\hat q \in \Q$ to be:
\begin{align*}
    R_{2,K}(\hat q) := \sum_{k=0}^{K-1} p_k^\top A \hat q - \sum_{k=0}^{K-1}  p_k^\top A q_k.
\end{align*}
We define the cumulative regret of both players at iteration $K$ with respect to $(\hat p, \hat q) \in \P \times \Q$ to be:
\begin{align*}
    R_{K}(\hat p, \hat q) &:= R_{1,K}(\hat p) + R_{2,K}(\hat q) 
    = \sum_{k=0}^{K-1} p_k^\top A \hat q - \sum_{k=0}^{K-1} \hat p^\top A q_k.
\end{align*}
Finally, we define the {\bf total regret} at iteration $K$ to be the maximum cumulative regret of both players:
\begin{align*}
    R_K := \max_{(\hat p, \hat q) \in \P \times \Q} \, R_{K}(\hat p, \hat q).
\end{align*}

Define the average iterates of the two players at iteration $K$:
\begin{align*}
    \bar p_K = \frac{1}{K} \sum_{k=0}^{K-1} p_k 
    \qquad\text{ and }\qquad
    \bar q_K = \frac{1}{K} \sum_{k=0}^{K-1} q_k.
\end{align*}
We observe that the average regret is equal to the duality gap of the average iterates:
\begin{align*}
    \frac{1}{K} R_K = \max_{q \in \Q} \bar p_K^\top A q - \min_{p \in \P} p^\top A \bar q_K 
    = \dg(\bar p_K, \bar q_K).
\end{align*}

Then we have the following properties.

\begin{theorem}\label{Thm:FwdTotReg}
Assume the domains $\P$ and $\Q$ are bounded, so there exists $0 < M < \infty$ such that $D_\phi(p',p) \le M$ and $D_\psi(q',q) \le M$ for all $p',p \in \P$, $q',q \in \Q$.
Assume the energy function $H = f + g = \phi^* + \psi^*$ is $L_1$-Lipschitz and $L_2$-smooth.
Let $\alpha_{\max}$ be the largest singular value of $A$.
If both players play the forward method~\eqref{Eq:Fwd1}, then the total regret is bounded by:
\begin{align*}
    R_K \le \frac{1}{2} \step \, \alpha_{\max}^2 \,  L_1^2 \, L_2 \, K + \frac{2M}{\step}.
\end{align*}
In particular, choosing $\step = \Theta(K^{-1/2})$ yields the bound $R_K \le O(K^{1/2})$,
and therefore,
\begin{align*}
    \dg(\bar p_K, \bar q_K) \le O(K^{-1/2}).
\end{align*}
\end{theorem}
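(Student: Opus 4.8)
The plan is to run the classical mirror-descent regret telescoping, but carried out entirely in the dual space $\R^m \times \R^n$, so that the per-step error terms collapse into the Bregman divergences $D_H(z_{k+1},z_k)$ of the energy function; these are then controlled by the smoothness and Lipschitzness of $H$, and the resulting $\Theta(\step^2 K)$ term — which is exactly the cumulative energy increase of the forward method in Theorem~\ref{Thm:Fwd} — forces the choice $\step = \Theta(K^{-1/2})$.

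First I would rewrite each player's regret using the dual form of the forward update. From $x_{k+1} = x_k - \step A \nabla g(y_k)$ we get $A q_k = -\tfrac1\step(x_{k+1}-x_k)$ with $q_k = \nabla g(y_k)$, and from $y_{k+1} = y_k + \step A^\top \nabla f(x_k)$ we get $A^\top p_k = \tfrac1\step(y_{k+1}-y_k)$ with $p_k = \nabla f(x_k)$. Substituting into $R_{1,K}(\hat p) = \sum_k (p_k-\hat p)^\top A q_k$ and $R_{2,K}(\hat q) = \sum_k (A^\top p_k)^\top(\hat q - q_k)$, then using $\langle \nabla f(x_k), x_{k+1}-x_k\rangle = f(x_{k+1}) - f(x_k) - D_f(x_{k+1},x_k)$ (and the analogue for $g$), the pieces linear in $\hat x = \nabla\phi(\hat p)$ and $\hat y = \nabla\psi(\hat q)$ telescope into $D_f(x_K,\hat x) - D_f(x_0,\hat x)$ and $D_g(y_K,\hat y) - D_g(y_0,\hat y)$ respectively. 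This should yield
\begin{align*}
R_{1,K}(\hat p) &= \tfrac1\step\big(D_f(x_0,\hat x) - D_f(x_K,\hat x)\big) + \tfrac1\step\sum_{k=0}^{K-1} D_f(x_{k+1},x_k), \\
R_{2,K}(\hat q) &= \tfrac1\step\big(D_g(y_0,\hat y) - D_g(y_K,\hat y)\big) + \tfrac1\step\sum_{k=0}^{K-1} D_g(y_{k+1},y_k).
\end{align*}
Adding these, dropping the nonnegative terms $D_f(x_K,\hat x),\, D_g(y_K,\hat y)$ (convexity of the conjugates $f,g$), and writing $D_H(z',z) = D_f(x',x) + D_g(y',y)$ gives $R_K(\hat p,\hat q) \le \tfrac1\step D_H(z_0,\hat z) + \tfrac1\step \sum_{k=0}^{K-1} D_H(z_{k+1},z_k)$.

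The two remaining estimates are then routine. For the first term, the conjugate-Bregman identity $D_f(x_0,\hat x) = D_\phi(\hat p, p_0)$, $D_g(y_0,\hat y) = D_\psi(\hat q, q_0)$ and the boundedness hypothesis give $D_H(z_0,\hat z) \le 2M$ uniformly in $(\hat p,\hat q)$. For the sum, $L_2$-smoothness gives $D_H(z_{k+1},z_k) \le \tfrac{L_2}{2}\|z_{k+1}-z_k\|^2$; since the forward step is $z_{k+1}-z_k = -\step J\nabla H(z_k)$ with $\|J\|_{\op} \le \alpha_{\max}$ (in $\ell_2$, as in Theorem~\ref{Thm:Fwd}) and $\|\nabla H(z_k)\|_* \le L_1$, we get $\|z_{k+1}-z_k\| \le \step\,\alpha_{\max} L_1$, hence $D_H(z_{k+1},z_k) \le \tfrac12 \step^2 \alpha_{\max}^2 L_1^2 L_2$. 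Summing over $K$ steps and maximizing over $(\hat p,\hat q) \in \P\times\Q$ yields $R_K \le \tfrac{2M}{\step} + \tfrac12 \step\, \alpha_{\max}^2 L_1^2 L_2 \, K$. Balancing with $\step = \Theta(K^{-1/2})$ gives $R_K = O(K^{1/2})$, and the duality-gap bound then follows immediately from the identity $\tfrac1K R_K = \dg(\bar p_K,\bar q_K)$ recorded just before the theorem.

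The only delicate point — more bookkeeping than obstacle — is keeping the arguments of the non-symmetric Bregman divergences in the correct order, so that the conjugate duality $D_{\phi^*}(\nabla\phi(p'),\nabla\phi(p)) = D_\phi(p,p')$ applies and produces the bounded quantities $D_\phi(\hat p,p_0),\, D_\psi(\hat q,q_0)$ rather than their reverses, and that the $D_f(x_K,\hat x),\,D_g(y_K,\hat y)$ terms appear with the right sign so they can be discarded. Conceptually, the step $\sum_k D_H(z_{k+1},z_k)$ equals the total energy gain $H(z_K)-H(z_0)$ of the forward method (since each step has zero first-order energy change), so this is precisely where the simultaneous scheme pays its $\Theta(\step^2 K)$ price, in contrast to the $\Theta(\step^3 k)$ Bregman-commutator term enjoyed by the alternating method in Theorem~\ref{Thm:AltSmooth}.
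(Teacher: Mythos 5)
Your proposal is correct and follows essentially the same route as the paper's proof: the same dual-space rewriting of the regret, the same telescoping into $\tfrac{1}{\step}D_H(z_0,\hat z)+\tfrac{1}{\step}\sum_k D_H(z_{k+1},z_k)$ (you telescope the $\nabla f(\hat x)$ inner products globally where the paper applies the three-point identity per step, but the resulting expression is identical), and the same $L_2$-smoothness and $L_1$-Lipschitz bounds on each $D_H(z_{k+1},z_k)$. The constants, the step-size choice, and the passage to the duality gap all match the paper's argument.
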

\begin{proof}
Let $\hat x = \nabla \phi(\hat p)$ and $\hat y = \nabla \psi(\hat q)$, so $\hat p = \nabla f(\hat x)$ and $\hat q = \nabla g(\hat y)$.
From the forward method update~\eqref{Eq:Fwd1},
\begin{align*}
    (p_k - \hat p)^\top A q_k 
    &= -\frac{1}{\step} (\nabla f(x_k) - \nabla f(\hat x))^\top (x_{k+1} - x_k) \\
    &= \frac{1}{\step} (D_f(x_{k+1},x_k) + D_f(x_k, \hat x) - D_f(x_{k+1}, \hat x)) \\
    &= \frac{1}{\step} (D_\phi(p_k,p_{k+1}) + D_\phi(\hat p, p_k) - D_\phi(\hat p, p_{k+1})).
\end{align*}
Therefore,
\begin{align*}
    R_{1,K}(\hat p) 
    &= \frac{1}{\step} \sum_{k=0}^{K-1} D_\phi(p_k,p_{k+1}) + \frac{1}{\step} (D_\phi(\hat p, p_0) -  D_\phi(\hat p, p_K)) \\
    &\le \frac{1}{\step} \sum_{k=0}^{K-1} D_\phi(p_k,p_{k+1}) + \frac{1}{\step} D_\phi(\hat p, p_0)
\end{align*}
where the last inequality follows since $\phi$ is convex.

Similarly, from the forward method update~\eqref{Eq:Fwd1},
\begin{align*}
    p_k^\top A(\hat q - q_k)
    &= \frac{1}{\step} (y_{k+1} - y_k)^\top (\nabla g(\hat y) - \nabla g(y_k)) \\
    &= \frac{1}{\step} (D_g(y_{k+1},y_k) + D_g(y_k, \hat y) - D_g(y_{k+1}, \hat y)) \\
    &= \frac{1}{\step} (D_\psi(q_k,q_{k+1}) + D_\psi(\hat q, q_k) - D_\psi(\hat q, q_{k+1})).
\end{align*}
Therefore,
\begin{align*}
    R_{2,K}(\hat q) 
    &= \frac{1}{\step} \sum_{k=0}^{K-1} D_\psi(q_k,q_{k+1}) + \frac{1}{\step} (D_\psi(\hat q, q_0) -  D_\psi(\hat q, q_K)) \\
    &\le \frac{1}{\step} \sum_{k=0}^{K-1} D_\psi(q_k,q_{k+1}) + \frac{1}{\step} D_\psi(\hat q, q_0)
\end{align*}
where the last inequality follows since $\psi$ is convex.

Combining the two quantities above, we have
\begin{align*}
    R_K(\hat p,\hat q) 
    &\le 
    \frac{1}{\step} \sum_{k=0}^{K-1} (D_\phi(p_k,p_{k+1}) + D_\psi(q_k,q_{k+1})) + 
    \frac{1}{\step} (D_\phi(\hat p, p_0) + D_\psi(\hat q, q_0)) \\
    &= \frac{1}{\step} \sum_{k=0}^{K-1} D_H(z_{k+1},z_k) + 
    \frac{1}{\step} D_H(z_0, \hat z).
\end{align*}
Since $z_{k+1} = z_k - \step J \nabla H(z_k)$, and $H$ is $L_1$-Lipschitz and $L_2$-smooth, we can bound:
\begin{align*}
    D_H(z_{k+1},z_k) 
    &\le \frac{L_2}{2}\|z_{k+1}-z_k\|^2 \\
    &= \frac{\step^2 L_2}{2} \|J \nabla H(z_k)\|^2 \\
    &\le \frac{\step^2 \, \alpha_{\max}^2 \, L_2}{2} \|\nabla H(z_k)\|^2 \\
    &\le \frac{\step^2 \, \alpha_{\max}^2 \, L_2 \, L_1^2}{2}.
\end{align*}
This gives a regret bound:
\begin{align*}
    R_K(\hat p,\hat q) 
    &\le \frac{\step \, K  \, \alpha_{\max}^2 \, L_2 \, L_1^2}{2} + \frac{1}{\step} D_H(z_0, \hat z) \\
    &\le \frac{\step \, K  \, \alpha_{\max}^2 \, L_2 \, L_1^2}{2} + \frac{2M}{\step}
\end{align*}
where the last inequality follows since 
$D_H(z_0,\hat z) = D_f(x_0,\hat x) + D_g(y_0, \hat y) \le 2M$.
Thus, the total regret is also bounded:
\begin{align*}
    R_K
    &\le \frac{\step \, K  \, \alpha_{\max}^2 \, L_2 \, L_1^2}{2} + \frac{2M}{\step}.
\end{align*}
Choosing $\step = \frac{2 \sqrt{M}}{\alpha_{\max} L_1 \sqrt{L_2 K}} = \Theta(K^{-1/2})$ gives 
\begin{align*}
    R_K \le 2\alpha_{\max} L_1 \sqrt{L_2 M K} = O(K^{1/2})
\end{align*}
Therefore, the duality gap of the average iterates is bounded by $\dg(\bar p_K, \bar q_K) = \frac{1}{K} R_K \le O(K^{-1/2})$, as desired.
\end{proof}

\subsection{Regret of the Backward Method}
\label{Sec:RegBwd}

Consider the backward method~\eqref{Eq:Bwd1} for discretizing the skew-gradient flow~\eqref{Eq:SkewGrad}, which means the two players follow the simultaneous proximal mirror descent algorithm.
Recall from Theorem~\ref{Thm:Bwd} that the energy function is decreasing along the forward method.
We show that the total regret of the two players along the backward method is bounded by a constant, and thus the average total regret decays as $O(K^{-1})$ after $K$ iterations.
This is similar to the property of the clairvoyant multiplicative weight update~\cite{PSS21}, which achieves a constant regret bound for general games.

We define the regret of the first player at iteration $K$ with respect to $\hat p \in \P$ to be:
\begin{align*}
    R_{1,K}(\hat p) := \sum_{k=0}^{K-1} p_{k+1}^\top A q_{k+1} - \sum_{k=0}^{K-1} \hat p^\top A q_{k+1}.
\end{align*}
Note the difference with the forward method; here we are shifting the time index by $1$.
Similarly, 
We define the regret of the second player ($q$) at the beginning of iteration $K$ (at time $T = \step K$) with respect to a static point $\hat q \in \Q$ to be:
\begin{align*}
    R_{2,K}(\hat q) := \sum_{k=0}^{K-1} p_{k+1}^\top A \hat q - \sum_{k=0}^{K-1}  p_{k+1}^\top A q_{k+1}.
\end{align*}

We define the cumulative regret of both players at the iteration $K$ with respect to $(\hat p, \hat q) \in \P \times \Q$ to be:
\begin{align*}
    R_{K}(\hat p, \hat q) &:= R_{1,K}(\hat p) + R_{2,K}(\hat q) 
    = \sum_{k=0}^{K-1} p_{k+1}^\top A \hat q - \sum_{k=0}^{K-1} \hat p^\top A q_{k+1}.
\end{align*}
Finally, we define the {\bf total regret} at iteration $K$ to be the maximum cumulative regret of both players:
\begin{align*}
    R_K := \max_{(\hat p, \hat q) \in \P \times \Q} \, R_{K}(\hat p, \hat q).
\end{align*}
Define the average iterates of the two players at iteration $K$:
\begin{align*}
    \bar p_K = \frac{1}{K} \sum_{k=0}^{K-1} p_{k+1} 
    \qquad\text{ and }\qquad
    \bar q_K = \frac{1}{K} \sum_{k=0}^{K-1} q_{k+1}.
\end{align*} 
We observe that the average regret is equal to the duality gap of the average iterates:
\begin{align*}
    \frac{1}{K} R_K = \max_{q \in \Q} \bar p_K^\top A q - \min_{p \in \P} p^\top A \bar q_K 
    = \dg(\bar p_K, \bar q_K).
\end{align*}
Then we have the following properties.

\begin{theorem}\label{Thm:BwdTotReg}
Assume the domains $\P$ and $\Q$ are bounded, so there exists $0 < M < \infty$ such that $D_\phi(p',p) \le M$ and $D_\psi(q',q) \le M$ for all $p',p \in \P$, $q',q \in \Q$.
If both players play the backward method~\eqref{Eq:Bwd1} for any step size $\step > 0$, then the total regret is bounded by:
\begin{align*}
    R_K \le \frac{2M}{\step}.
\end{align*}
In particular, the duality gap decays as:
\begin{align*}
    \dg(\bar p_K, \bar q_K) \le \frac{2M}{\step K}.
\end{align*}
\end{theorem}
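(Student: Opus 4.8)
The plan is to run the argument in the proof of Theorem~\ref{Thm:FwdTotReg} almost verbatim, with one crucial sign change: along the backward method the auxiliary Bregman terms appear with a \emph{favorable} sign, so they can be discarded rather than bounded, and consequently no term proportional to $\step K$ survives. Fix a comparator $(\hat p,\hat q)\in\P\times\Q$ and pass to dual variables $\hat x=\nabla\phi(\hat p)$, $\hat y=\nabla\psi(\hat q)$, so $\hat p=\nabla f(\hat x)$ and $\hat q=\nabla g(\hat y)$ with $f=\phi^*$, $g=\psi^*$.

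For the first player, the backward update in the dual space gives $Aq_{k+1}=A\nabla g(y_{k+1})=-\tfrac1\step(x_{k+1}-x_k)$, so the per-step contribution to $R_{1,K}(\hat p)$ equals $(p_{k+1}-\hat p)^\top Aq_{k+1}=-\tfrac1\step\langle\nabla f(x_{k+1})-\nabla f(\hat x),\,x_{k+1}-x_k\rangle$. I would expand this inner product with the three-point identity for $D_f$, rewriting it as $D_f(x_k,x_{k+1})+D_f(x_{k+1},\hat x)-D_f(x_k,\hat x)$, and then translate to the primal via the standard identity $D_{\phi^*}(\nabla\phi(u),\nabla\phi(v))=D_\phi(v,u)$ to obtain $D_\phi(p_{k+1},p_k)+D_\phi(\hat p,p_{k+1})-D_\phi(\hat p,p_k)$. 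Summing over $k$, the last two terms telescope, leaving
\begin{align*}
R_{1,K}(\hat p)&=-\tfrac1\step\sum_{k=0}^{K-1}D_\phi(p_{k+1},p_k)+\tfrac1\step\big(D_\phi(\hat p,p_0)-D_\phi(\hat p,p_K)\big)\\
&\le\ \tfrac1\step D_\phi(\hat p,p_0)\ \le\ \tfrac{M}{\step},
\end{align*}
where I have dropped both the (nonpositive) summation term and the term $-D_\phi(\hat p,p_K)\le 0$.

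The second player is symmetric: from $A^\top p_{k+1}=\tfrac1\step(y_{k+1}-y_k)$ and the same three-point expansion with $(g,y,\hat y)$ replacing $(f,x,\hat x)$, the identical telescoping-and-dropping step gives $R_{2,K}(\hat q)\le\tfrac1\step D_\psi(\hat q,q_0)\le\tfrac{M}{\step}$. Adding the two bounds yields $R_K(\hat p,\hat q)\le\tfrac{2M}{\step}$, and maximizing over $(\hat p,\hat q)\in\P\times\Q$ gives $R_K\le\tfrac{2M}{\step}$. The duality-gap bound is then immediate from the identity $\tfrac1K R_K=\dg(\bar p_K,\bar q_K)$ recorded just before the theorem, giving $\dg(\bar p_K,\bar q_K)\le\tfrac{2M}{\step K}$.

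I expect no real obstacle here: the only delicate points are keeping track of signs in the three-point identity and of the argument order in the primal--dual translation of the Bregman divergences, and noticing that --- in contrast to the forward method --- the $D_\phi(p_{k+1},p_k)$ terms come with the sign that allows them to be thrown away, so that smoothness of $H$ is never invoked. One also tacitly relies on the backward (implicit) iterates being well defined, which is part of the standing assumptions.
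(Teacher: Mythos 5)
Your proposal is correct and follows essentially the same route as the paper's proof: the same per-step identity $(p_{k+1}-\hat p)^\top Aq_{k+1}=\tfrac1\step\bigl(D_\phi(\hat p,p_k)-D_\phi(p_{k+1},p_k)-D_\phi(\hat p,p_{k+1})\bigr)$ via the three-point identity and primal--dual Bregman translation, the same telescoping, and the same observation that the accumulated Bregman terms carry a favorable sign and can be discarded. The sign bookkeeping and the final passage to the duality gap are all handled as in the paper.
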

\begin{proof}
Let $\hat x = \nabla \phi(\hat p)$ and $\hat y = \nabla \psi(\hat q)$, so $\hat p = \nabla f(\hat x)$ and $\hat q = \nabla g(\hat y)$.
From the backward method update~\eqref{Eq:Bwd1},
\begin{align*}
    (p_{k+1} - \hat p)^\top A q_{k+1} 
    &= -\frac{1}{\step} (\nabla f(x_{k+1}) - \nabla f(\hat x))^\top (x_{k+1} - x_k) \\
    &= \frac{1}{\step} (D_f(x_k, \hat x) - D_f(x_k, x_{k+1}) - D_f(x_{k+1}, \hat x)) \\
    &= \frac{1}{\step} (D_\phi(\hat p,p_k) - D_\phi(p_{k+1}, p_k) - D_\phi(\hat p, p_{k+1})).
\end{align*}
Therefore,
\begin{align*}
    R_{1,K}(\hat p) 
    &= -\frac{1}{\step} \sum_{k=0}^{K-1} D_\phi(p_k,p_{k+1}) + \frac{1}{\step} (D_\phi(\hat p, p_0) -  D_\phi(\hat p, p_K)) \\
    &\le \frac{1}{\step} D_\phi(\hat p, p_0)
\end{align*}
where the last inequality follows since $\phi$ is convex.

Similarly, from the backward method update~\eqref{Eq:Bwd1},
\begin{align*}
    p_{k+1}^\top A(\hat q - q_{k+1})
    &= \frac{1}{\step} (y_{k+1} - y_k)^\top (\nabla g(\hat y) - \nabla g(y_{k+1})) \\
    &= \frac{1}{\step} (D_g(y_k,\hat y) - D_g(y_k, y_{k+1}) - D_g(y_{k+1}, \hat y)) \\
    &= \frac{1}{\step} (D_\psi(\hat q,q_k) - D_\psi(q_{k+1}, q_k) - D_\psi(\hat q, q_{k+1})).
\end{align*}
Therefore,
\begin{align*}
    R_{2,K}(\hat q) 
    &= -\frac{1}{\step} \sum_{k=0}^{K-1} D_\psi(q_{k+1},q_k) + \frac{1}{\step} (D_\psi(\hat q, q_0) -  D_\psi(\hat q, q_K)) \\
    &\le \frac{1}{\step} D_\psi(\hat q, q_0)
\end{align*}
where the last inequality follows since $\psi$ is convex.

Combining the two quantities above, we have
\begin{align*}
    R_K(\hat p,\hat q) 
    &\le \frac{1}{\step} (D_\phi(\hat p, p_0) + D_\psi(\hat q, q_0)) \\
    &\le \frac{2M}{\step}.
\end{align*}
Therefore, 
\begin{align*}
    \dg(\bar p_K, \bar q_K) = \frac{1}{K} R_K 
    \le \frac{2M}{\step K}.
\end{align*}
\end{proof}

\section{Toward a Better Regret Bound under Hamiltonian Structure}

There are still some gaps between our results for alternating mirror descent for constrained min-max games and the results from~\cite{BGP20} for alternating gradient descent for unconstrained min-max games.
In particular, our bound in Theorem~\ref{Thm:AltSmooth} still grows with the number of iterations, and does not yield constant regret, unlike in the unconstrained case.
We believe that it is possible to close this gap in some particular settings, for example when the sysem has a Hamiltonian structure.

In the special case when $m=n$ and the payoff matrix is the identity matrix $A = I$, the skew-gradient flow dynamics~\eqref{Eq:SGF} becomes a {\em Hamiltonian flow}; this means in addition to conserving the energy function, the continuous dynamics also preserves the canonical symplectic structure.
(When the payoff matrix $A$ is arbitrary, the skew-gradient flow dynamics has a non-canonical Hamiltonian structure, for which some of the properties still apply; for simplicity, here we consider identity payoff matrix.)
In this case, the update~\eqref{Eq:AltMD2} for alternating mirror descent becomes what is known as the {\em symplectic Euler} discretization method, which also preserves the symplectic structure.
From classical results in numerical analysis (e.g.\ see~\cite{Hairer06}), the symplectic Euler method has a good energy conservation property in discrete time.
In particular, informally, under some nice assumptions on the energy function and the trajectories of the algorithm, the symplectic Euler method approximately preserves the energy function with bounded $O(\step)$ error for exponentially many (in terms of step size) number of iterations.
Furthermore, using the technique of backward error analysis, there is a modified energy function (expressed in terms of a formal series) which the discrete-time algorithm should conserve.
When translated into the min-max game application, this would yield a poly-logarithmic bound on the total regret.
However, to apply these powerful tools, rather nontrivial conditions need to be satisfied (e.g., integrability, and Diophantine condition). On the other hand, a key property of the energy function that we have in our context, namely convexity, has not been utilized by these tools.
Therefore, we leave as a future work the task of quantifying with concrete bounds the performance of the symplectic Euler method.
Here we simply conjecture that for sufficiently nice and {\em convex} energy functions, the iterates of the alternating method with small enough step size stay uniformly bounded, as in the unconstrained setting.

\begin{conjecture}\label{Conj:Bounded}
Assume $H$ is convex, differentiable, and satisfies some mild assumptions on the growth of derivatives.
There exists $\step_{\max} > 0$ such that for all step sizes $0 < \step < \step_{\max}$, along the alternating method~\eqref{Eq:Alt}, the modified energy function remains bounded:
\begin{align*}
\sup_{k \ge 0} |H_\eta(z_k)| < \infty.
\end{align*}
\end{conjecture}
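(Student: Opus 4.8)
Since the final statement is a conjecture rather than a theorem, what follows is a strategy with an identified gap rather than a proof; the plan has three stages.

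\textbf{Stage 1: reduction and a first-order cancellation.} Under the stated hypotheses $\nabla f$ and $\nabla g$ are bounded --- in the min-max application $\nabla f = \nabla\phi^*$ lands in the bounded set $\P$ and $\nabla g = \nabla\psi^*$ in $\Q$ --- so the correction term in $H_\step(z) = H(z) - \tfrac{\step}{2}\langle\nabla f(x), A\nabla g(y)\rangle$ is uniformly $O(\step)$, and the conjecture is equivalent to a uniform bound on the drift $|H(z_k) - H(z_0)|$. By Lemma~\ref{Lem:Alt} this drift equals $|\sum_{i<k} C_H(z_{i+1}, z_i)|$, which Theorem~\ref{Thm:AltSmooth} bounds only by $O(\step^3 k)$. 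I would first sharpen this: expanding the integral formula for $C_H$ from the proof of Lemma~\ref{Lem:ThirdOrder} gives $C_H(z_{i+1}, z_i) = -\tfrac{\step^3}{12}\,\nabla^3 H(z_i)[v(z_i), v(z_i), v(z_i)] + O(\step^4)$, where $v := -J\nabla H$; and one checks, using $J^\top = -J$, that $\tfrac{d}{dt}\langle v(Z_t), \nabla^2 H(Z_t)\,v(Z_t)\rangle = \nabla^3 H(Z_t)[v(Z_t), v(Z_t), v(Z_t)]$ along the flow~\eqref{Eq:SGF}. Hence the leading part of $\sum_i C_H$ telescopes into a \emph{boundary} term of size $O(\step^2)$, and $H_\step^{(2)}(z) := H_\step(z) + \tfrac{\step^2}{12}\langle v(z), \nabla^2 H(z)\,v(z)\rangle$ is a second-order modified energy whose drift per step is $O(\step^4)$, hence $O(\step^4 k)$ over $k$ steps.

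\textbf{Stage 2: backward error analysis.} Iterating Stage~1 is backward error analysis. When $m=n$ and $A$ is the identity, the alternating update~\eqref{Eq:AltDisc} is the symplectic Euler scheme for the Hamiltonian $H$, and there is a formal modified Hamiltonian $\tilde H_\step = H + \step H_1 + \step^2 H_2 + \cdots$ --- with $H_1 = -\tfrac12\langle\nabla f, \nabla g\rangle$, so its first truncation is $H_\step$ of~\eqref{Eq:Htilde}, and each $H_j$ a universal polynomial in derivatives of $H$ of order $\le j+1$ --- that the numerical map conserves to all orders in $\step$. If $H$ is real-analytic with Cauchy-type bounds on its derivatives near the orbit (the quantitative content I would demand of the ``mild assumptions''), the classical optimal-truncation estimates~\cite{benettin1994hamiltonian,Hairer06} produce a smooth $\tilde H_\step$ with
\begin{equation*}
|\tilde H_\step(z_{k+1}) - \tilde H_\step(z_k)| \le C_1\, e^{-C_2/\step}
\end{equation*}
as long as $z_k$ stays in that neighbourhood; for general $A$ one uses the non-canonical (Lie--Poisson) version of this calculus with the bracket induced by $J$. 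Together with Stage~1 this gives $|H_\step(z_k) - H_\step(z_0)| \le O(\step) + k\,C_1 e^{-C_2/\step}$, hence boundedness for all $k \lesssim e^{C_2/\step}$.

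\textbf{Stage 3: confinement --- the obstacle.} The exponentially small per-step error still accumulates over infinitely many steps, so $\sup_{k\ge0}|H_\step(z_k)| < \infty$ requires showing the orbit can never escape the near-invariant level sets of $\tilde H_\step$. For $m=n=1$ the numerical map is an area-preserving twist map of an annulus and Moser's invariant-curve theorem traps every orbit between two $\tilde H_\step$-level curves, giving the conjecture once a twist/nondegeneracy condition is checked --- which should follow from convexity of $H$; likewise whenever the skew-gradient flow is Liouville-integrable and the orbit lies on a nonresonant torus. In higher dimensions KAM tori no longer separate phase space, Arnold diffusion is not a priori ruled out, and the usual hypotheses (Diophantine frequencies, Kolmogorov nondegeneracy) may genuinely fail; moreover none of these classical tools exploit the one structural feature we have, the convexity of $H$. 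The crux of the conjecture is therefore to find a convexity-based confinement mechanism --- for instance a genuinely monotone (or exactly conserved) discrete Lyapunov functional obtained by correcting $\tilde H_\step$ with a bounded drift-compensating term whose existence is forced by convexity, or a direct argument that a convex sublevel set of $H$ is forward-invariant under~\eqref{Eq:Alt} up to an $O(\step)$ thickening. That is the step I expect to be hardest, and it is what remains open.
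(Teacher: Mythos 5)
The statement you were asked to prove is a \emph{conjecture}: the paper offers no proof of it, only numerical evidence in the appendix and an informal discussion (in the section ``Toward a Better Regret Bound under Hamiltonian Structure'') of why backward error analysis for symplectic integrators ought to be relevant but does not immediately apply. You correctly recognized this and did not claim a proof, so there is no proof of the paper's to compare against; what you produced is a research plan whose skeleton coincides with the paper's own informal reasoning (symplectic Euler in the dual space, a formal modified Hamiltonian conserved to all orders, exponentially long near-conservation \`a la Benettin--Giorgilli and Hairer--Lubich--Wanner, and the caveat that integrability and Diophantine conditions are needed while convexity goes unused).

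Where you go beyond the paper: your Stage~1 computation is correct and concrete. The identity $\tfrac{d}{dt}\langle v(Z_t), \nabla^2 H(Z_t)\,v(Z_t)\rangle = \nabla^3 H(Z_t)[v,v,v]$ with $v=-J\nabla H$ does follow from skew-symmetry of $J$, and it shows that the leading $O(\step^3)$ term of the per-step Bregman commutator in Lemma~\ref{Lem:Alt} telescopes into a boundary term, yielding a second-order modified energy with $O(\step^4 k)$ drift. This already strictly improves the $O(\step^3 k)$ bound of Theorem~\ref{Thm:AltSmooth} and, fed into the proof of Theorem~\ref{Thm:RegretSmooth}, would improve the average-regret rate from $O(K^{-2/3})$ to $O(K^{-3/4})$ under one more order of smoothness --- a genuinely useful observation the paper does not make. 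Your Stage~3 correctly identifies the real obstruction: even exponentially small per-step errors accumulate over infinite time, KAM/Moser confinement is only available in low dimensions or under integrability, and no known mechanism converts convexity of $H$ into forward-invariance of a bounded region. That is precisely the gap the paper itself acknowledges, and it is why the statement remains a conjecture rather than a theorem; your proposal does not close it, and you say so.
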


As a concrete example, we conjecture this property holds for the {\em alternating multiplicative weight update algorithm}, which is the alternating mirror descent method for simplex constraints with negative entropy regularizer.
In this case, the energy function in the dual space is given by
\begin{align*}
    H(x,y) = \log \sum_{i=1}^m e^{x_i} + \log \sum_{j=1}^n e^{y_j}.
\end{align*}

Below, we provide some empirical evidence supporting this conjecture.

\subsection{Simulation results in one-dimension}

We present some simulation results in one-dimension that supports the conjecture above.
For simplicity we take $A = 1$ (otherwise we can rescale both $f$ and $g$).

For each example, we plot the trajectory of $(x_k,y_k)$ on the left figure, and we plot the energy function and the modified energy function on the right figure.

\subsubsection{Quadratic}

As a sanity check, let $f(x) = g(x) = \frac{1}{2}x^2$.
This corresponds to the unconstrained ($\P = \Q = \R$) min-max game with quadratic regularizer, i.e.\ the alternating gradient descent method of~\cite{BGP20}.

In Figure~\ref{Fig:Quad1} we use initial position $z_0 = (x_0,y_0) = (3,3)$, step size $\eta = 0.1$, and number of iterations $K = 300$.
We see the modified energy function is conserved exactly, 
and the trajectory is very close to a circle.

\begin{figure}[h!t!b!p!]
  \includegraphics[width=0.6\textwidth]{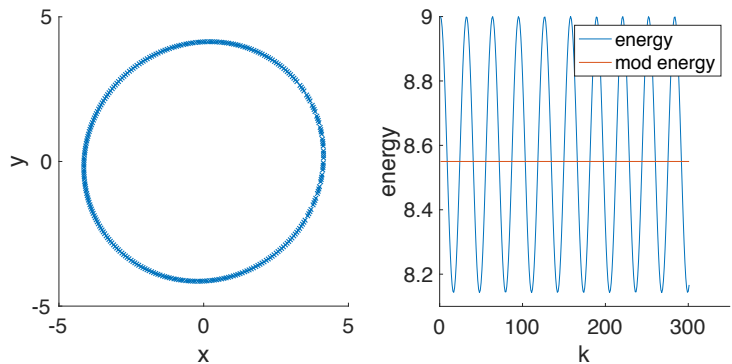}
  \centering
  \caption{$f$ and $g$ are quadratic with $z_0 = (3,3)$, $\eta = 0.1$, $K = 300$.}
  \label{Fig:Quad1}
\end{figure}

\FloatBarrier

In Figure~\ref{Fig:Quad2} we use $z_0 = (x_0,y_0) = (3,3)$, $\eta = 1.1$, and $K = 50$.
The modified energy function is still conserved exactly. 
The trajectory is not a circle anymore (it is an ellipse), but still bounded.

\begin{figure}[h!t!b!p!]
  \includegraphics[width=0.6\textwidth]{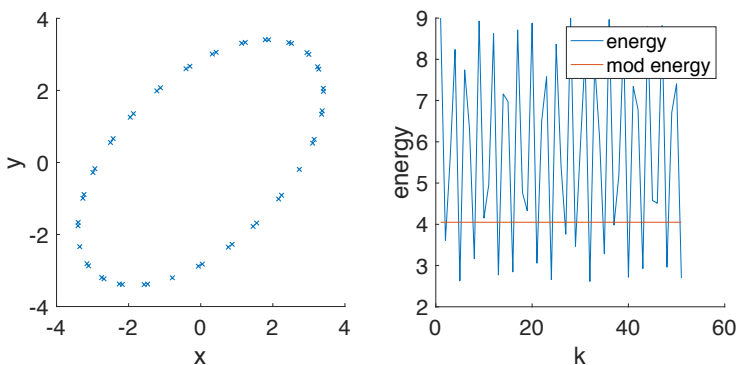}
  \centering
  \caption{$f$ and $g$ are quadratic with $z_0 = (3,3)$, $\eta = 1.1$, $K = 50$.}
  \label{Fig:Quad2}
\end{figure}

\FloatBarrier

\subsubsection{Log-cosh}

Let $f(x) = g(x) = \log \cosh(x)$.
This corresponds to the min-max game with two-dimensional simplex constraints ($\P = \Q = \Delta_2 \subset \R^2$) with negative entropy regularizers.
In the dual space, the dual function becomes $\phi^*(x_1, x_2) = \log (e^{x_1} + e^{x_2}) = \frac{x_1+x_2}{2} + \log \cosh(\frac{x_1-x_2}{2}) + \log 2$.
So up to a linear function, the dual function becomes $f(x) = \log \cosh x$ with $x = \frac{x_1-x_2}{2}$.

In Figure~\ref{Fig:Logcosh1} we use initial position $z_0 = (x_0,y_0) = (3,3)$, step size $\eta = 0.1$, and number of iterations $K = 300$.
Other initial positions give the same qualitative behavior.
We see the modified energy function is almost conserved and the trajectory is almost periodic.

\begin{figure}[h!t!b!p!]
  \includegraphics[width=0.6\textwidth]{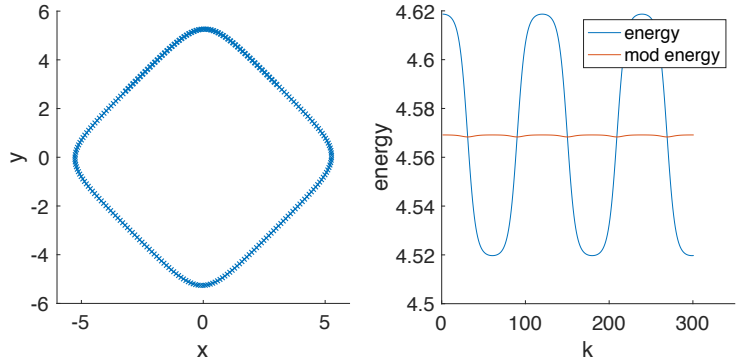}
  \centering
  \caption{$f$ and $g$ are $\log \cosh$ with $z_0 = (3,3)$, $\eta = 0.1$, $K = 300$.}
  \label{Fig:Logcosh1}
\end{figure}

\FloatBarrier

In Figure~\ref{Fig:Logcosh2} we use $z_0 = (x_0,y_0) = (3,3)$, $\eta = 1$, and $K = 200$.

\begin{figure}[h!t!b!p!]
  \includegraphics[width=0.6\textwidth]{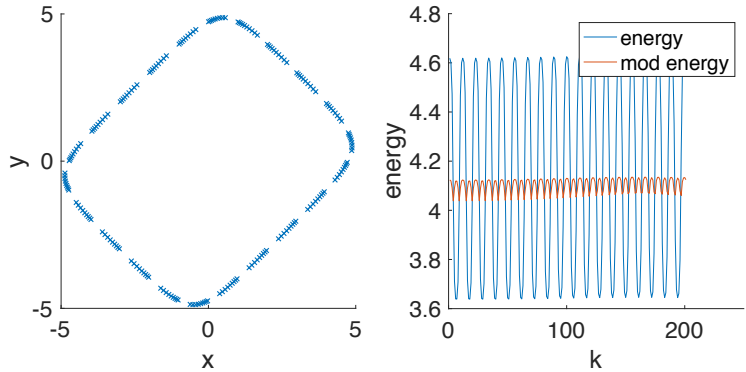}
  \centering
  \caption{$f$ and $g$ are $\log \cosh$ with $z_0 = (3,3)$, $\eta = 1$, $K = 200$.}
  \label{Fig:Logcosh2}
\end{figure}

\FloatBarrier

In Figure~\ref{Fig:Logcosh3} we use $z_0 = (x_0,y_0) = (3,3)$, $\eta = 3$, and $K = 800$.

\begin{figure}[h!t!b!p!]
  \includegraphics[width=0.6\textwidth]{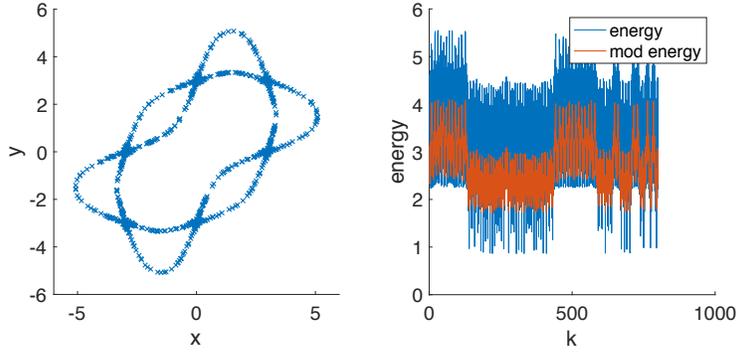}
  \centering
  \caption{$f$ and $g$ are $\log \cosh$ with $z_0 = (3,3)$, $\eta = 3$, $K = 800$.}
  \label{Fig:Logcosh3}
\end{figure}

\FloatBarrier

\subsubsection{Quadratic and log cosh}

Let $f(x) = \frac{1}{2}x^2$ and $g(x) = \log \cosh(x)$.
This corresponds to a min-max game with one-dimensional unconstrained space for the first player with quadratic regularizer, and a two-dimensional simplex constraint for the second player with negative entropy regularizer.

In Figure~\ref{Fig:Quadlogcosh1} we use initial position $z_0 = (x_0,y_0) = (3,3)$, step size $\eta = 0.1$, and number of iterations $K = 300$.

\begin{figure}[h!t!b!p!]
  \includegraphics[width=0.6\textwidth]{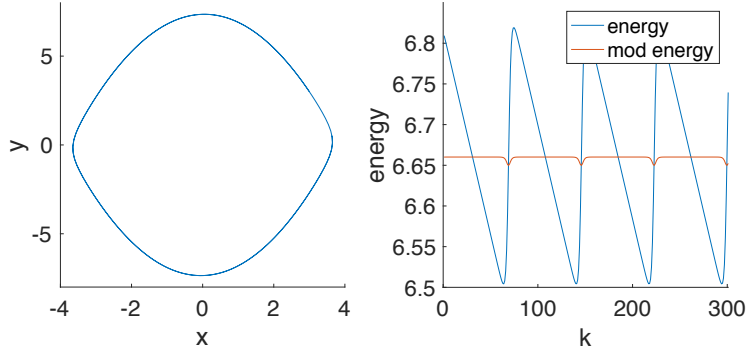}
  \centering
  \caption{$f$ is quadratic and $g$ is $\log \cosh$ with $z_0 = (3,3)$, $\eta = 0.1$, $K = 300$.}
  \label{Fig:Quadlogcosh1}
\end{figure}

\FloatBarrier

In Figure~\ref{Fig:Quadlogcosh2} we use $z_0 = (x_0,y_0) = (3,3)$, $\eta = 1$, and $K = 800$.

\begin{figure}[h!t!b!p!]
  \includegraphics[width=0.6\textwidth]{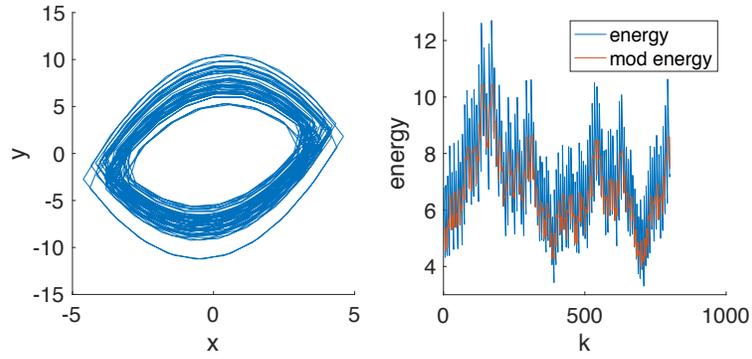}
  \centering
  \caption{$f$ is quadratic and $g$ is $\log \cosh$ with $z_0 = (3,3)$, $\eta = 1$, $K = 800$.}
  \label{Fig:Quadlogcosh2}
\end{figure}

\FloatBarrier

\subsubsection{Cubic}

Let $f(x) = g(x) = \frac{1}{3}|x|^3$.
In Figure~\ref{Fig:Cubic1} we use initial position $z_0 = (x_0,y_0) = (3,3)$, step size $\eta = 0.1$, and number of iterations $K = 300$.

\begin{figure}[h!t!b!p!]
  \includegraphics[width=0.6\textwidth]{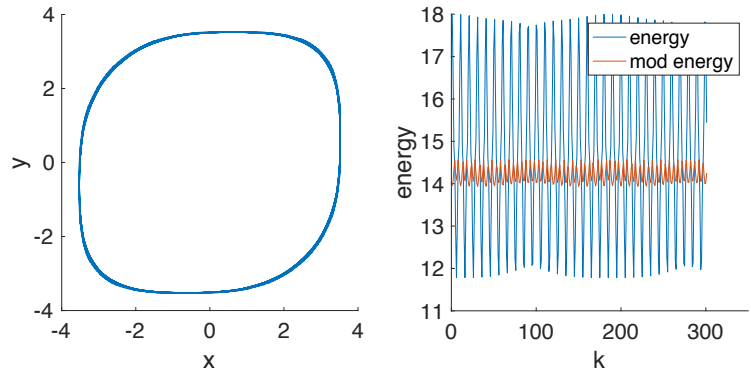}
  \centering
  \caption{$f$ and $g$ are cubic with $z_0 = (3,3)$, $\eta = 0.1$, $K = 300$.}
  \label{Fig:Cubic1}
\end{figure}

\FloatBarrier

In Figure~\ref{Fig:Cubic2} we use $z_0 = (x_0,y_0) = (3,3)$, $\eta = 0.3$, and $K = 100$.

\begin{figure}[h!t!b!p!]
  \includegraphics[width=0.6\textwidth]{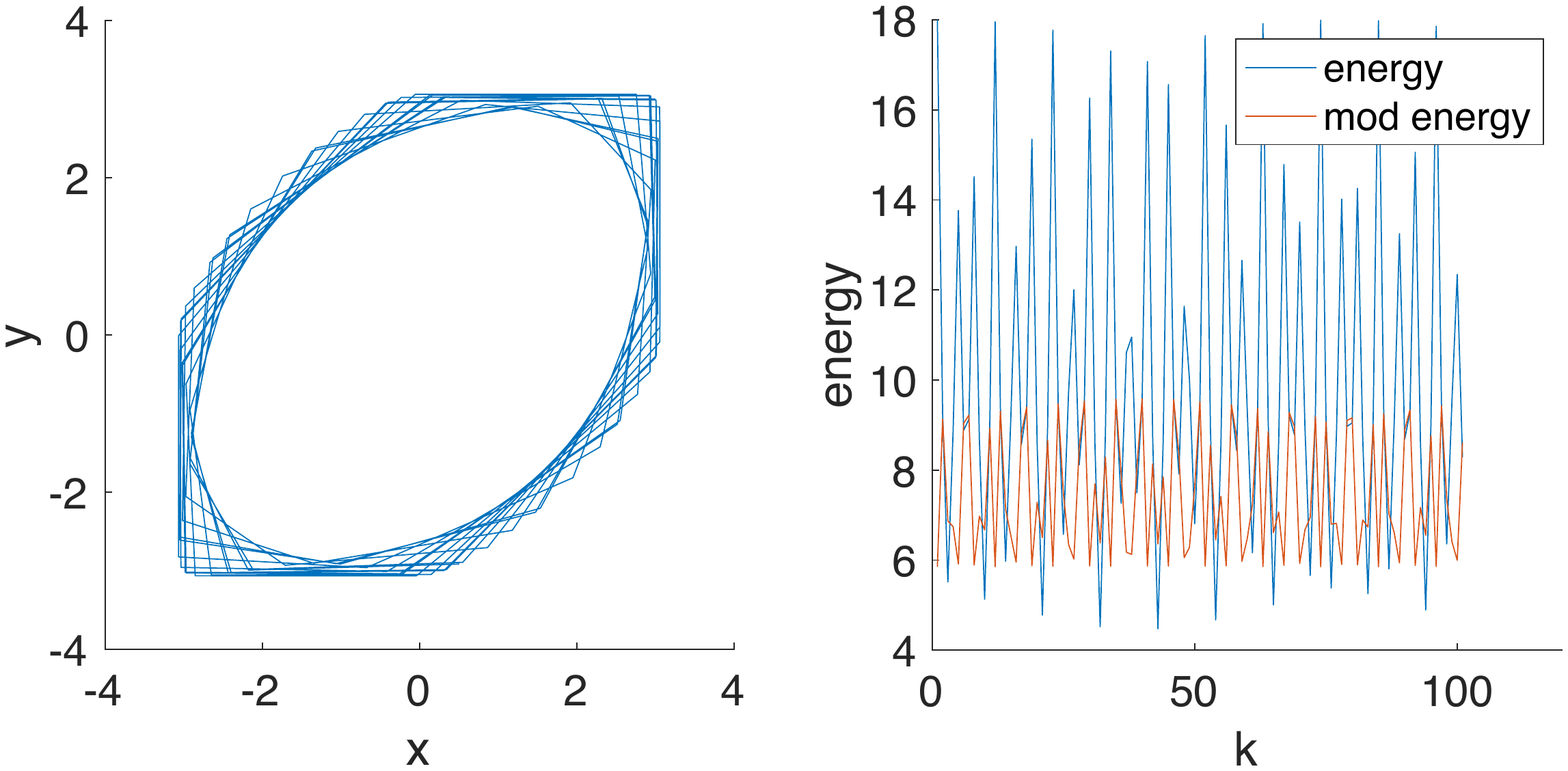}
  \centering
  \caption{$f$ and $g$ are cubic with $z_0 = (3,3)$, $\eta = 0.3$, $K = 100$.}
  \label{Fig:Cubic2}
\end{figure}

\newpage
\bibliographystyle{plain}
\bibliography{AMD_arXiv_v1.bbl}

\end{document}